\documentclass[11pt,reqno]{amsart}

\usepackage[a4paper,margin=3cm]{geometry}
\usepackage{amsmath,amssymb,amsthm,amsfonts}
\usepackage{mathrsfs}
\usepackage{hyperref}
\usepackage{enumitem}
\usepackage{bbm}
\usepackage[pagewise]{lineno}
\newtheorem{Theorem}{Theorem}[section]

\newtheorem{Proposition}[Theorem]{Proposition}

\newtheorem{Definition}[Theorem]{Definition}
\newtheorem{rem}[Theorem]{Remark}
\newtheorem{example}[Theorem]{Example}

\title[Entropy Geometry and Normalized Means on Infinite-Dimensional Hamiltonian Manifolds]
{Entropy Geometry and Normalized Means on Infinite-Dimensional Hamiltonian Manifolds}

\author[Jean-Pierre Magnot]{Jean-Pierre Magnot}

\address{SFR MATHSTIC, LAREMA, Universit\'e d'Angers,
2 Bd Lavoisier, 49045 Angers cedex 1, France;
Lyc\'ee Jeanne d'Arc, 40 avenue de Grande Bretagne,
63000 Clermont-Ferrand, France;
Lepage Research Institute, 17 novembra 1, 081 16 Presov, Slovakia}

\email{\small magnot@math.cnrs.fr;
jean-pierr.magnot@ac-clermont.fr}

\date{\today}

\begin{document}

\maketitle

\begin{abstract}
We propose a geometric--analytic framework for equilibrium statistical
mechanics on infinite-dimensional Hamiltonian systems. In situations
where no suitable $\sigma$-additive invariant measure is available, we
use \emph{normalized means}, which generalize probability measures and
normalized integrals. This construction yields entropy and free-energy
functionals on weak symplectic Fréchet manifolds and gives existence and
uniqueness of exponential-family equilibrium states under explicit
admissibility and separation assumptions. These states are stationary
under Hamiltonian flows preserving both the reference mean and the
equilibrium weight, and satisfy a classical Poisson--KMS identity when
the reference mean is Poisson invariant. Under a local exponential
regularity assumption, the logarithmic partition functional is smooth
and convex, with Hessian given by the covariance form. It is strictly
convex modulo thermodynamically null directions and, through
Legendre--Fenchel duality, induces a concave entropy on the domain of
extensive variables. We illustrate the framework with $H^s$-geodesic
equations on current groups $\operatorname{Map}(M,G)$ and
diffeomorphism groups $\operatorname{Diff}(M)$, including hydrodynamic
and field-theoretic examples.
\end{abstract}

\vskip 12pt
Keywords: Hamiltonian systems, geometric entropy, normalized means,
Fréchet manifolds, Souriau thermodynamics.

\vskip 12pt
2020 MSC: Primary 37K05, 53D20, 82B10, 46L55;
Secondary 58D05, 37L50, 82C31, 46A03, 28C20.

\section*{Introduction}

The idea that entropy should be regarded not merely as a numerical
measure of disorder, but as a genuinely geometric object, lies at the
heart of J.-M.~Souriau's formulation of statistical mechanics
\cite{Souriau70}. In this approach, a Hamiltonian system
\[
(\mathcal M,\omega,H)
\]
is considered together with its symmetries and moment map, and
thermodynamic equilibrium is described through a variational principle
on a space of statistical states. In finite dimension, this construction
leads naturally to exponential families, while the logarithm of the
partition function provides the potential from which the principal
thermodynamic quantities are derived.

Souriau's point of view differs from the traditional interpretation of
entropy as either a combinatorial count or a functional defined on a
prescribed space of probability measures. Entropy becomes instead a
potential associated with the symplectic and group-theoretic structure
of the system. Coadjoint orbits, equipped with their
Kostant--Kirillov--Souriau forms, provide the natural geometric setting,
and the passage between intensive and extensive variables is governed
by convex duality. Modern developments of this theory
\cite{Barbaresco09,Barbaresco25,IglesiasSouriauBook,Marle16,Neeb26}
have revealed its close connections with information geometry,
Lie-group thermodynamics, moment-map geometry, and the structure of
coadjoint orbits.

The purpose of the present paper, whose general programme was first
sketched in \cite{MagnotAOP26}, is to extend this geometric formulation
to infinite-dimensional Hamiltonian systems. The principal obstruction
is familiar: on an infinite-dimensional phase space there is, in
general, no analogue of Lebesgue measure, and a suitable
$\sigma$-additive invariant measure may fail to exist. This difficulty
appears naturally in Hamiltonian PDEs, hydrodynamic models, and field
theories. Depending on the system, one may introduce a Gaussian or
renormalised reference measure, but in many situations the resulting
construction remains formal, depends on a regularisation procedure, or
does not yield an invariant probability measure on the original smooth
phase space.

Our approach is to replace probability measures by
\emph{normalized means}. Such objects were studied in
\cite{Magnot17} as limits of normalized integrals over
finite-dimensional approximations. At the algebraic level, a normalized
mean on a vector lattice algebra $\mathcal L$ is a positive normalized
linear functional
\[
\mathsf m\colon\mathcal L\longrightarrow\mathbb R,
\qquad
\mathsf m(1)=1,
\qquad
\mathsf m(f)\geq 0
\quad\text{whenever }f\geq 0.
\]
Unlike an ordinary probability measure, it need not arise from a
$\sigma$-additive set function. In this respect, normalized means extend
to general configuration spaces the invariant means that occur in
amenable group theory
\cite{vonNeumann29,Day57,Paterson88}. They also provide a natural
language for describing limits of normalized finite-dimensional
cut-offs without postulating the existence of a fictitious
infinite-dimensional Lebesgue measure.

A technical distinction is necessary from the outset. Bounded test
observables belong to a unital algebra
\[
\mathcal A_b\subset C_b(\mathcal M),
\]
whereas the Hamiltonian and the constraint observables are generally
unbounded and belong to a larger vector lattice algebra $\mathcal L$.
Only those thermodynamic potentials whose exponential weights lie in
the domain of the reference mean are retained. This leads to an
admissible parameter domain on which the partition functional is both
finite and strictly positive.

Given a reference mean $\mathsf m_0$, we compare another normalized mean
$\mathsf n$ with $\mathsf m_0$ through the variational relative entropy
\[
\mathcal H(\mathsf n\,\|\,\mathsf m_0)
=
\sup_{f\in\mathcal E(\mathsf m_0)}
\left\{
\mathsf n(f)
-
\log\mathsf m_0(e^f)
\right\},
\]
where $\mathcal E(\mathsf m_0)$ denotes the class of exponentially
admissible potentials. This is the natural analogue, in the present
setting, of the Donsker--Varadhan variational representation. For a
Hamiltonian $H$ and a finite family of constraints
$\Phi_1,\ldots,\Phi_N$, the corresponding thermodynamic potential is
\[
F_{\beta,\lambda}
=
-\beta H-\sum_{a=1}^{N}\lambda^a\Phi_a,
\]
and the logarithmic partition functional is
\[
\psi(\beta,\lambda)
=
\log\mathsf m_0(e^{F_{\beta,\lambda}}).
\]

The Gibbs variational principle then takes the form
\[
-\log Z(\beta,\lambda)
=
\inf_{\mathsf n}
\left\{
\mathcal H(\mathsf n\,\|\,\mathsf m_0)
+
\beta\,\mathsf n(H)
+
\sum_{a=1}^{N}\lambda^a\mathsf n(\Phi_a)
\right\}.
\]
Under explicit admissibility, compatibility, and separation
assumptions, the minimizer exists, is unique, and is given by the
exponential tilt
\[
\mathsf n_{\beta,\lambda}^*(f)
=
\frac{
\mathsf m_0\left(
f e^{-\beta H-\sum_{a=1}^{N}\lambda^a\Phi_a}
\right)
}{
\mathsf m_0\left(
e^{-\beta H-\sum_{a=1}^{N}\lambda^a\Phi_a}
\right)
}.
\]
Thus the familiar Gibbs prescription survives without requiring the
reference state to be represented by a $\sigma$-additive probability
measure.

The geometry of the resulting exponential family is encoded by the
partition potential. Under a local exponential regularity hypothesis,
$\psi$ is smooth, and its Hessian is the covariance form of the
Hamiltonian and the constraint observables. It is therefore positive
semidefinite and induces a positive-definite quadratic form on the
quotient of each tangent space by the kernel of the covariance form.
When these kernels have locally constant rank, the resulting quotients
define a Riemannian metric on the corresponding reduced parameter
space. In particular, strict convexity holds whenever no nonzero
parameter direction is thermodynamically null throughout a nontrivial
parameter segment. The corresponding entropy is obtained through the
concave Legendre--Fenchel dual
\[
\mathcal S(E,c)
=
\inf_{(\beta,\lambda)}
\left\{
\psi(\beta,\lambda)
+
\beta E
+
\sum_{a=1}^{N}\lambda^a c_a
\right\}.
\]
On the image of the equilibrium map, this dual coincides with the
constrained variational entropy and gives the expected correspondence
between intensive and extensive variables.

The interaction between equilibrium and Hamiltonian dynamics requires
some care. If the Hamiltonian flow preserves the reference mean and
each constraint observable $\Phi_a$, then the exponentially tilted
equilibrium mean is stationary. Stationarity alone does not imply the
complex-time KMS boundary condition familiar from noncommutative
operator algebras. There is, however, a natural classical counterpart.
When the reference mean is invariant under the Poisson bracket in the
sense of an integration-by-parts identity, the equilibrium mean
satisfies the Poisson--KMS relation
\[
\mathsf n_{\beta,\lambda}^*(\{f,g\})
=
\beta\,
\mathsf n_{\beta,\lambda}^*
\bigl(g\{f,H\}\bigr)
+
\sum_{a=1}^{N}\lambda^a
\mathsf n_{\beta,\lambda}^*
\bigl(g\{f,\Phi_a\}\bigr).
\]
This identity expresses thermal equilibrium directly in terms of the
classical Hamiltonian structure, without appealing to a formal
complexification of the flow.

The abstract construction is discussed in connection with
right-invariant $H^s$-geodesic equations on current groups
$\operatorname{Map}(M,G)$ and on diffeomorphism groups
$\operatorname{Diff}(M)$. These examples include Euler--Arnold systems,
EPDiff, and the Camassa--Holm equation. Two-dimensional Euler dynamics
and its Casimir constraints lead, at the formal mean-field level, to
relations of Miller--Robert--Sommeria type. These applications connect
the present framework with the geometry of ideal fluids developed in
\cite{EbinMarsden70}, with vortex statistical mechanics
\cite{MillerRobertSommeria91}, and with the construction of invariant
measures for Hamiltonian PDEs initiated in particular by
\cite{Bourgain94}. Their purpose is both to show how the abstract
hypotheses arise in concrete systems and to identify the additional
analytic estimates required for a complete model-specific
construction.

The paper is organised as follows. Section~1 introduces weak
Hamiltonian Fréchet geometry, normalized means, exponentially admissible
potentials, relative entropy, the covariance Hessian, and the relevant
Legendre--Fenchel duality. Section~2 establishes the existence and
uniqueness of equilibrium means, studies their stationarity, and proves
the classical Poisson--KMS identity under the appropriate invariance
hypothesis. Sections~3 and~4 develop the current-group,
diffeomorphism-group, hydrodynamic, and Fourier-mode examples.
Section~5 collects the full proofs of the principal statements and
clarifies the compatibility conditions required by exponential
tilting.

\section{Infinite-Dimensional Hamiltonian Geometry and Normalized Means}

This section introduces the functional and geometric framework used
throughout the paper.

\subsection{Hamiltonian geometry on Fréchet manifolds}

Let $\mathcal M$ be a Fréchet manifold modelled on a locally convex
topological vector space $E$. We assume that $\mathcal M$ is equipped
with a smooth closed $2$-form $\omega$ such that, for every
$m\in\mathcal M$, the induced map
\[
\omega_m^\flat\colon T_m\mathcal M\longrightarrow T_m^*\mathcal M,
\qquad
v\longmapsto\omega_m(v,\cdot),
\]
is continuous and injective. Thus $\omega$ is a \emph{weak symplectic
form}. Writing
\[
J_m=\omega_m^\flat,
\]
we have
\[
\omega_m(v,w)=\langle J_m v,w\rangle.
\]
The pair $(\mathcal M,\omega)$ is called a \emph{weak symplectic
Fréchet manifold} (see \cite{Hamilton82,Neeb06,KM97}).

A \emph{Hamiltonian function} is a smooth map
\[
H\colon\mathcal M\longrightarrow\mathbb R
\]
such that
\begin{equation}
\omega\bigl(X_H,\cdot\bigr)=\mathrm dH
\label{eq:Hamiltonian}
\end{equation}
admits a smooth solution $X_H$. The associated vector field generates a
possibly local Hamiltonian flow $\Phi_t^H$.

Let $G$ be a Fréchet--Lie group acting smoothly on $\mathcal M$ by
symplectomorphisms, with Lie algebra $\mathfrak g$. A \emph{moment map}
is a smooth map
\[
\mathbf J\colon\mathcal M\longrightarrow\mathfrak g^*
\]
such that, for every $\xi\in\mathfrak g$, the function
\[
m\longmapsto\langle\mathbf J(m),\xi\rangle
\]
is a Hamiltonian generating the infinitesimal action of $\xi$. This
extends the classical finite-dimensional theory
\cite{AbrahamMarsden78,MarsdenRatiu99,Omori97}.

\begin{Definition}[Cylindrical functional]
A function $f\colon\mathcal M\to\mathbb R$ is called cylindrical if it
factors through a finite-dimensional smooth map, i.e.
\[
f=\widetilde f\circ\pi_N
\]
for some smooth map
\[
\pi_N\colon\mathcal M\longrightarrow\mathbb R^N
\]
and some function $\widetilde f$ on $\mathbb R^N$.
\end{Definition}

\subsection{Normalized means}

We now introduce the notion replacing probability measures in infinite
dimension.

Let
\[
\mathcal A_b\subset C_b(\mathcal M)
\]
be a unital algebra of bounded continuous observables, stable under the
exponential operations considered below. Let $\mathcal L$ be a unital
vector lattice algebra of real-valued functions on $\mathcal M$ such
that
\[
\mathcal A_b\subset\mathcal L.
\]
The space $\mathcal L$ may contain unbounded functions, including the
Hamiltonians and constraint observables used in the applications.

\begin{Definition}[Normalized mean]
\label{def:normalized-mean}
A \emph{normalized mean} on $\mathcal L$ is a linear functional
\[
\mathsf m\colon\mathcal L\longrightarrow\mathbb R
\]
such that:
\begin{enumerate}
\item[\textup{(i)}] $\mathsf m(1)=1$;
\item[\textup{(ii)}] $\mathsf m(f)\geq 0$ for every $f\in\mathcal L$
such that $f\geq 0$.
\end{enumerate}
\end{Definition}

We denote by $\mathsf S(\mathcal L)$ the set of all normalized means on
$\mathcal L$.

Normalized means encode finite additivity on every family of sets whose
indicator functions belong to $\mathcal L$, but they do not rely on
$\sigma$-additivity.

\begin{Definition}[Invariance]
A normalized mean $\mathsf m_0$ is said to be invariant under a flow
$\Phi_t$ if, for all $f\in\mathcal L$,
\[
\mathsf m_0(f\circ\Phi_t)=\mathsf m_0(f)
\]
whenever $f\circ\Phi_t\in\mathcal L$.
\end{Definition}

They generalize invariant means from amenable group theory.

\begin{rem}[Relation with amenability]
Invariant means on groups were introduced by von Neumann and further
developed by Day and Paterson
\cite{vonNeumann29,Day57,Paterson88}. The present framework extends this
concept to general configuration spaces.
\end{rem}

\begin{example}[Cut-off construction]
\label{ex:cutoff-mean}
Let $(B_N,\mu_N)$ be a sequence of finite-dimensional approximations of
$\mathcal M$, with
\[
0<\int_{B_N}\mathrm d\mu_N<\infty.
\]
If, for every $f\in\mathcal L$ under consideration, the limit
\[
\mathsf m_0(f)
=
\lim_{N\to\infty}
\frac{\displaystyle\int_{B_N}f\,\mathrm d\mu_N}
{\displaystyle\int_{B_N}\mathrm d\mu_N}
\]
exists, is finite, and is independent of the admissible truncation, then
$\mathsf m_0$ defines a normalized mean on its domain
\cite{Magnot17}.
\end{example}

\subsection{Topology on the space of normalized means}
\label{subsec:topology-means}

The definition of a normalized mean is algebraic: positivity and
normalization do not require continuity with respect to the compact-open
topology. This distinction is important for means obtained from
finite-dimensional cut-offs, which need not be continuous for that
topology.

We equip $\mathsf S(\mathcal L)$ with the topology of pointwise
convergence on $\mathcal L$. Thus, a net
$(\mathsf m_i)_{i\in I}$ converges to $\mathsf m$ if and only if
\[
\mathsf m_i(f)\longrightarrow\mathsf m(f)
\qquad
\text{for every }f\in\mathcal L.
\]
Equivalently, this is the weakest topology for which all evaluation maps
\[
\operatorname{ev}_f\colon
\mathsf S(\mathcal L)\longrightarrow\mathbb R,
\qquad
\operatorname{ev}_f(\mathsf m)=\mathsf m(f),
\]
are continuous.

When $\mathcal A_b\subset C_b(\mathcal M)$ is the algebra of bounded
test observables, positivity and normalization imply
\[
\left|\mathsf m(f)\right|
\leq
\|f\|_\infty,
\qquad
f\in\mathcal A_b,
\quad
\mathsf m\in\mathsf S(\mathcal A_b).
\]
Indeed,
\[
-\|f\|_\infty 1
\leq
f
\leq
\|f\|_\infty 1,
\]
and positivity gives the required estimate.

Consequently, $\mathsf S(\mathcal A_b)$ may be identified with a closed
subset of
\[
\prod_{f\in\mathcal A_b}
[-\|f\|_\infty,\|f\|_\infty].
\]
By Tychonoff's theorem, it is compact for the topology of pointwise
convergence on $\mathcal A_b$.

\begin{Proposition}
\label{prop:compact-bounded-states}
The space $\mathsf S(\mathcal A_b)$ of normalized means on the bounded
observable algebra $\mathcal A_b$ is compact for the topology of
pointwise convergence on $\mathcal A_b$.
\end{Proposition}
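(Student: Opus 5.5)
The plan is the standard Banach--Alaoglu/Tychonoff argument, made explicit in three steps.

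\emph{Embedding.} Consider the map
\[
\iota\colon\mathsf S(\mathcal A_b)\longrightarrow P:=\prod_{f\in\mathcal A_b}[-\|f\|_\infty,\|f\|_\infty],
\qquad
\iota(\mathsf m)=(\mathsf m(f))_{f\in\mathcal A_b}.
\]
By the estimate $|\mathsf m(f)|\le\|f\|_\infty$ established just before the statement, $\iota$ takes values in $P$. It is injective, because a functional is determined by its values. By definition of the topology of pointwise convergence, $\iota$ is a homeomorphism onto its image when $P$ carries the product topology: both topologies are the initial topologies for the coordinate maps $\operatorname{ev}_f$. By Tychonoff's theorem, $P$ is compact. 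It therefore suffices to show that $\iota(\mathsf S(\mathcal A_b))$ is closed in $P$.

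\emph{Closedness.} The image is exactly the set of points $x=(x_f)\in P$ satisfying the following conditions:
\begin{itemize}
\item[(a)] $x_{f+g}-x_f-x_g=0$ and $x_{cf}-cx_f=0$ for all $f,g\in\mathcal A_b$ and $c\in\mathbb R$;
\item[(b)] $x_1=1$;
\item[(c)] $x_f\ge 0$ for every $f\in\mathcal A_b$ with $f\ge0$.
\end{itemize}
Each condition involves only finitely many coordinates. Each is the preimage of a closed subset of $\mathbb R$ (namely $\{0\}$, $\{1\}$, or $[0,\infty)$) under a continuous map built from finitely many projections. The image is thus an intersection of closed sets, hence closed.

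It remains to check that every such $x$ arises from a normalized mean. Set $\mathsf m(f):=x_f$. Condition (a) gives linearity, and (b) and (c) give normalization and positivity. So $x=\iota(\mathsf m)$ with $\mathsf m\in\mathsf S(\mathcal A_b)$. Conversely, every normalized mean satisfies (a)--(c).

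\emph{Conclusion.} A closed subset of a compact space is compact, so $\iota(\mathsf S(\mathcal A_b))$ is compact, and hence so is $\mathsf S(\mathcal A_b)$.

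There is no real obstacle here. The only point needing care is that closedness is tested in the product topology, so each algebraic or order condition must involve only finitely many coordinates; this holds for (a)--(c). We also use that $\mathcal A_b$ is a unital vector space, so that $1$, $f+g$ and $cf$ are indices of $P$. Hausdorffness is not needed for compactness, but it follows from that of $P$.
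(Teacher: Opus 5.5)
Your proposal is correct and follows the paper's argument: the bound $|\mathsf m(f)|\leq\|f\|_\infty$ places $\mathsf S(\mathcal A_b)$ inside the Tychonoff product $\prod_{f\in\mathcal A_b}[-\|f\|_\infty,\|f\|_\infty]$, and linearity, normalization and positivity are closed conditions there. Your only addition is to spell out why these conditions are closed, since each involves finitely many coordinates, a point the paper states in one line as preservation under pointwise limits.
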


\begin{proof}
For every $f\in\mathcal A_b$, positivity and normalization give
\[
|\mathsf m(f)|\leq\|f\|_\infty.
\]
Hence $\mathsf S(\mathcal A_b)$ is contained in the compact product
\[
K
=
\prod_{f\in\mathcal A_b}
[-\|f\|_\infty,\|f\|_\infty].
\]
Linearity, positivity, and normalization are preserved under pointwise
limits. Therefore $\mathsf S(\mathcal A_b)$ is closed in $K$, and is
thus compact.
\end{proof}

For the larger algebra $\mathcal L$, which may contain unbounded
observables, no analogous compactness statement holds in general.
Accordingly, compactness of the entire space $\mathsf S(\mathcal L)$
will not be assumed. Variational arguments based on compactness require
instead compactness of the relevant free-energy sublevel sets,
\[
\left\{
\mathsf n\in\mathsf S(\mathcal L)
\;\middle|\;
\mathcal G_{\beta,\lambda}(\mathsf n)\leq r
\right\},
\]
for the topology of pointwise convergence. Whenever such an argument is
used, this property will be included explicitly among the coercivity
assumptions.

\subsection{Pointwise limits and cut-off normalized means}
\label{subsec:cutoff-limits}

The topology of pointwise convergence is naturally adapted to
normalized means constructed from finite-dimensional approximations.

Let $(B_N,\mu_N)$ be a sequence of finite-dimensional approximations
such that
\[
0<\int_{B_N}\mathrm d\mu_N<\infty,
\]
and define
\[
\mathsf m_N(f)
=
\frac{
\displaystyle\int_{B_N}f\,\mathrm d\mu_N
}{
\displaystyle\int_{B_N}\mathrm d\mu_N
}
\]
whenever the quotient is well defined.

\begin{Proposition}
\label{prop:cutoff-pointwise-limit}
Assume that, for every $f\in\mathcal A_b$, the limit
\[
\mathsf m_0(f)
=
\lim_{N\to\infty}\mathsf m_N(f)
\]
exists. Then $\mathsf m_0$ is a normalized mean on $\mathcal A_b$.

More generally, if the same pointwise limit exists and is finite for
every $f$ in a vector lattice algebra $\mathcal L$, then $\mathsf m_0$
is a normalized mean on $\mathcal L$.
\end{Proposition}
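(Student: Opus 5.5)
The plan is to verify the three defining properties of a normalized mean (linearity, positivity, normalization) for each approximating functional $\mathsf m_N$ on its domain, and then check that each is preserved under pointwise limits. No compactness is needed, so Proposition~\ref{prop:compact-bounded-states} will not be invoked. The only analytic point is to make sure each $\mathsf m_N(f)$ is well defined for the relevant $f$, since otherwise the finite-$N$ identities cannot be written down.

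First, fix $N$. Because $0<\int_{B_N}\mathrm d\mu_N<\infty$, the map $f\mapsto\mathsf m_N(f)$ is defined on every $f$ that is $\mu_N$-integrable on $B_N$. For $f\in\mathcal A_b\subset C_b(\mathcal M)$ this holds automatically, since $|f|\le\|f\|_\infty$ and $\mu_N(B_N)<\infty$. (I implicitly read $f$ as $f$ restricted to, or composed with the inclusion of, $B_N$.) On this domain $\mathsf m_N$ is linear, because the integral is linear and the denominator does not depend on $f$. It satisfies $\mathsf m_N(1)=1$ trivially. It satisfies $\mathsf m_N(f)\ge 0$ for $f\ge 0$, since the integral of a nonnegative function against a positive measure is nonnegative and the denominator is positive.

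Second, pass to the limit. For $f,g\in\mathcal A_b$ and $a,b\in\mathbb R$, we have $af+bg\in\mathcal A_b$, so all three limits exist. The identity $\mathsf m_N(af+bg)=a\,\mathsf m_N(f)+b\,\mathsf m_N(g)$ then passes to the limit and gives linearity of $\mathsf m_0$. Normalization follows from $\mathsf m_0(1)=\lim 1=1$, using $1\in\mathcal A_b$ because the algebra is unital. For positivity, if $f\ge0$ then every $\mathsf m_N(f)\ge 0$, and limits of nonnegative reals are nonnegative. This proves the first assertion.

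For the general statement on $\mathcal L$, the argument is identical: $\mathcal L$ is a unital vector space, so linear combinations and $1$ stay in the domain where the limit is assumed to exist and be finite. The one step I expect to need care is that, for unbounded $f\in\mathcal L$, the finite-$N$ quotient $\mathsf m_N(f)$ must actually be defined. I would handle this by reading the hypothesis ``the pointwise limit exists and is finite'' as including that $\mathsf m_N(f)$ is defined, at least for all large $N$. The intersection of two such tails is again a tail, so for $f,g\in\mathcal L$ there is a common range of $N$ on which $\mathsf m_N(f)$, $\mathsf m_N(g)$ and $\mathsf m_N(af+bg)$ are all defined. On that range the linearity and positivity identities hold, because $\mathsf m_N$ is a genuine integral there. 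Taking limits as before then shows that $\mathsf m_0$ is a normalized mean on $\mathcal L$.
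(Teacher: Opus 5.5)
Your proof is correct and follows the paper's argument: each cut-off functional $\mathsf m_N$ is linear, positive and normalized, and these three properties pass to the pointwise limit. Your extra bookkeeping for unbounded $f\in\mathcal L$ (working on a common tail of $N$ on which the quotients are defined) makes explicit a point the paper leaves implicit in its phrase ``whenever the quotient is well defined.''
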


\begin{proof}
For every $N$, the functional $\mathsf m_N$ is linear, positive, and
normalized. These three properties are preserved under pointwise
limits. In particular,
\[
\mathsf m_0(1)
=
\lim_{N\to\infty}\mathsf m_N(1)
=
1,
\]
and, whenever $f\geq 0$,
\[
\mathsf m_0(f)
=
\lim_{N\to\infty}\mathsf m_N(f)
\geq 0.
\]
Therefore $\mathsf m_0$ is a normalized mean.
\end{proof}

\begin{rem}
No continuity for the compact-open topology is required in
Proposition~\ref{prop:cutoff-pointwise-limit}. In fact, means describing
asymptotic behavior at infinity are generally not continuous for that
topology.
\end{rem}

\begin{rem}
The preceding proposition proves the existence of a normalized mean
once pointwise convergence has been established. It does not by itself
prove that the resulting mean is independent of the chosen
finite-dimensional exhaustion. Such independence must either be
verified in each application or included in the definition of an
admissible family of cut-offs.
\end{rem}

\subsection{Exponentially admissible potentials}

The Hamiltonians arising in the applications are generally unbounded.
They are therefore treated as thermodynamic potentials rather than as
elements of the algebra $\mathcal A_b$ of bounded test observables.

\begin{Definition}[Exponentially admissible potential]
\label{def:admissible-potential}
A real-valued function $F\in\mathcal L$ is called
\emph{$\mathsf m_0$-admissible} if
\[
e^F\in\mathcal L,
\qquad
0<\mathsf m_0(e^F)<\infty,
\]
and
\[
fe^F\in\mathcal L
\qquad
\text{for every }f\in\mathcal L.
\]
The set of $\mathsf m_0$-admissible potentials is denoted by
$\mathcal E(\mathsf m_0)$.
\end{Definition}

We assume that $\mathcal E(\mathsf m_0)$ contains $0$ and is stable
under the bounded perturbations used below. For
$F\in\mathcal E(\mathsf m_0)$, define
\[
\Lambda_{\mathsf m_0}(F)
=
\log\mathsf m_0(e^F).
\]

\begin{Definition}[Exponential tilt]
\label{def:exponential-tilt}
Let $F\in\mathcal E(\mathsf m_0)$. The exponential tilt of
$\mathsf m_0$ by $F$ is the normalized mean
\[
\mathsf m_F(f)
=
\frac{\mathsf m_0(fe^F)}{\mathsf m_0(e^F)},
\qquad
f\in\mathcal L.
\]
\end{Definition}

The positivity and normalization of $\mathsf m_F$ follow immediately:
\[
\mathsf m_F(f)\geq 0
\quad\text{whenever }f\geq 0,
\qquad
\mathsf m_F(1)=1.
\]

\subsection{Relative entropy}

Given $\mathsf m_0\in\mathsf S(\mathcal L)$, we define entropy through a
variational principle.

\begin{Definition}[Relative entropy]
\label{def:entropy}
For $\mathsf n,\mathsf m_0\in\mathsf S(\mathcal L)$, define
\[
\mathcal H(\mathsf n\,\|\,\mathsf m_0)
=
\sup_{f\in\mathcal E(\mathsf m_0)}
\left\{
\mathsf n(f)-\log\mathsf m_0(e^f)
\right\},
\]
with values in $[0,+\infty]$.
\end{Definition}

This definition is inspired by the Donsker--Varadhan variational
formula.

\begin{Proposition}[Basic properties of the relative entropy]
\label{prop:entropy-properties}
Let $\mathsf m_0$ be a normalized mean on $\mathcal L$. Assume that
there exists a vector subspace $\mathcal T\subset\mathcal L$ separating
normalized means such that, for every $h\in\mathcal T$, one has
$th\in\mathcal E(\mathsf m_0)$ for all sufficiently small
$t\in\mathbb R$, and
\[
\left.\frac{\mathrm d}{\mathrm dt}\right|_{t=0}
\log\mathsf m_0(e^{th})
=
\mathsf m_0(h).
\]
Then:
\begin{enumerate}
\item
\[
\mathcal H(\mathsf n\,\|\,\mathsf m_0)\geq 0,
\]
with equality if and only if $\mathsf n=\mathsf m_0$;
\item the map
\[
\mathsf n\longmapsto
\mathcal H(\mathsf n\,\|\,\mathsf m_0)
\]
is convex and lower semicontinuous for the topology of pointwise
convergence on $\mathsf S(\mathcal L)$.
\end{enumerate}
\end{Proposition}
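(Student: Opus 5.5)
Nonnegativity is immediate by testing with $f=0$. Definition~\ref{def:admissible-potential} and the standing assumption that $0\in\mathcal E(\mathsf m_0)$ give
\[
\mathcal H(\mathsf n\,\|\,\mathsf m_0)\geq \mathsf n(0)-\log\mathsf m_0(e^0)=0-\log 1=0 .
\]
If $\mathsf n=\mathsf m_0$, the value is also at most $0$. For every $f\in\mathcal E(\mathsf m_0)$ we need $\mathsf m_0(f)\le\log\mathsf m_0(e^f)$, a Jensen inequality for the positive normalized functional $\mathsf m_0$. To prove it without $\sigma$-additivity, I use the pointwise convexity bound $e^{x}\ge e^{c}(1+x-c)$ with $c=\mathsf m_0(f)$, that is,
\[
e^f\ge e^{c}(1+f-c),
\]
an inequality between elements of $\mathcal L$. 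Positivity, linearity and $\mathsf m_0(1)=1$ then give $\mathsf m_0(e^f)\ge e^c$. Hence $\mathcal H(\mathsf m_0\,\|\,\mathsf m_0)=0$.

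For the converse, suppose $\mathcal H(\mathsf n\,\|\,\mathsf m_0)=0$ and take $h\in\mathcal T$. For all small $t$ we have $th\in\mathcal E(\mathsf m_0)$, so
\[
g(t):=t\,\mathsf n(h)-\log\mathsf m_0(e^{th})\le 0,
\qquad g(0)=0 .
\]
Thus $g$ has a local maximum at $t=0$. By the differentiability hypothesis in the statement, $g$ is differentiable at $0$, so $g'(0)=\mathsf n(h)-\mathsf m_0(h)=0$. Hence $\mathsf n$ and $\mathsf m_0$ agree on $\mathcal T$. Since $\mathcal T$ separates normalized means, $\mathsf n=\mathsf m_0$. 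The main point needing care is exactly this step: the argument only needs $th$ to be admissible for $t$ on both sides of $0$ together with differentiability at $0$, and both are supplied by the assumption on $\mathcal T$. I would also note that "separating" is read as $\mathsf n|_{\mathcal T}=\mathsf m|_{\mathcal T}\Rightarrow\mathsf n=\mathsf m$.

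For (2), write $\mathcal H(\cdot\,\|\,\mathsf m_0)=\sup_{f\in\mathcal E(\mathsf m_0)}\Psi_f$, where
\[
\Psi_f(\mathsf n)=\operatorname{ev}_f(\mathsf n)-\log\mathsf m_0(e^f).
\]
Each $\Psi_f$ is affine in $\mathsf n$, because evaluation is linear and the second term is a constant independent of $\mathsf n$. Each $\Psi_f$ is also continuous for the topology of pointwise convergence, since $\operatorname{ev}_f$ is continuous by definition of that topology (Subsection~\ref{subsec:topology-means}). A pointwise supremum of affine functions is convex, and a supremum of continuous functions is lower semicontinuous, with values in $(-\infty,+\infty]$. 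This gives convexity and lower semicontinuity on the convex set $\mathsf S(\mathcal L)$. Convexity uses that a convex combination of normalized means is again a normalized mean, which is immediate.
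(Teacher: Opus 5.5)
Your proof is correct and follows the paper's argument step for step. You test with $f=0$ for non-negativity. For the converse you apply a two-sided limit (Fermat) argument to $t\mapsto t\,\mathsf n(h)-\log\mathsf m_0(e^{th})$ for $h\in\mathcal T$ and then use separation. Convexity and lower semicontinuity come from a supremum of affine, pointwise-continuous functionals. The one place you go beyond the paper is the direction $\mathcal H(\mathsf m_0\,\|\,\mathsf m_0)=0$: the paper's proofs either omit it or cite Jensen's inequality for positive normalized functionals, whereas your tangent-line bound $e^f\ge e^{c}(1+f-c)$ proves it from positivity, linearity and normalization alone.
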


\begin{proof}
The non-negativity follows by taking $f=0$. Convexity and lower
semicontinuity follow because
\[
\mathsf n\longmapsto
\mathsf n(f)-\log\mathsf m_0(e^f)
\]
is affine and continuous for the topology of pointwise convergence, for
every admissible $f$.

Suppose now that
\[
\mathcal H(\mathsf n\,\|\,\mathsf m_0)=0.
\]
For every $h\in\mathcal T$ and every sufficiently small
$t\in\mathbb R$, one has
\[
t\mathsf n(h)
\leq
\log\mathsf m_0(e^{th}).
\]
Dividing by $t$ and letting $t\to0$ from the positive and negative sides
gives
\[
\mathsf n(h)=\mathsf m_0(h).
\]
Since $\mathcal T$ separates normalized means, it follows that
$\mathsf n=\mathsf m_0$.
\end{proof}

\subsection{Change-of-reference compatibility}

The variational entropy is defined through a reference-dependent class
of admissible potentials. Consequently, exponential tilting requires a
compatibility condition between the admissible classes associated with
the original and tilted means.

\begin{Definition}[Change-of-reference compatibility]
\label{def:change-reference-compatible}
Let $F\in\mathcal E(\mathsf m_0)$. We say that the exponential tilt by
$F$ is \emph{change-of-reference compatible} if
\[
\mathcal H(\mathsf n\,\|\,\mathsf m_F)
=
\mathcal H(\mathsf n\,\|\,\mathsf m_0)
-\mathsf n(F)
+\log\mathsf m_0(e^F)
\]
for every normalized mean $\mathsf n$ for which the terms are defined.
\end{Definition}

A sufficient translation condition on the admissible classes, together
with a proof of the resulting change-of-reference identity, is given in
section~\ref{subsec:change-reference}.

\subsection{Analytic assumptions}

Let $H,\Phi_1,\ldots,\Phi_N\in\mathcal L$ be possibly unbounded
observables, and set
\[
F_{\beta,\lambda}
=
-\beta H-\sum_{a=1}^{N}\lambda^a\Phi_a.
\]
The admissible thermodynamic domain is
\[
\mathcal D
=
\left\{
(\beta,\lambda)\in\mathbb R^{N+1}
\;\middle|\;
F_{\beta,\lambda}\in\mathcal E(\mathsf m_0)
\right\}.
\]

Depending on the result under consideration, we shall use the following
assumptions:

\begin{itemize}
\item[(A1)] Exponential admissibility: the relevant parameters belong
to $\mathcal D$, so that
\[
0<
\mathsf m_0\left(
e^{-\beta H-\sum_{a=1}^{N}\lambda^a\Phi_a}
\right)
<\infty;
\]

\item[(A2)] Coercivity: the sublevel sets of
$\mathcal G_{\beta,\lambda}$ are compact in
$\mathsf S(\mathcal L)$ for the topology of pointwise convergence;

\item[(A3)] Non-degeneracy: the covariance form has no nonzero null
direction on the reduced parameter domain under consideration.
\end{itemize}

\subsection{Variational principle}

For $(\beta,\lambda)\in\mathcal D$, define
\[
Z(\beta,\lambda)
=
\mathsf m_0\left(e^{F_{\beta,\lambda}}\right).
\]

The free-energy functional associated with $(\beta,\lambda)$ is
\[
\mathcal G_{\beta,\lambda}(\mathsf n)
=
\mathcal H(\mathsf n\,\|\,\mathsf m_0)
+
\beta\mathsf n(H)
+
\sum_{a=1}^{N}\lambda^a\mathsf n(\Phi_a),
\qquad
\mathsf n\in\mathsf S(\mathcal L).
\]
Equivalently,
\[
\mathcal G_{\beta,\lambda}(\mathsf n)
=
\mathcal H(\mathsf n\,\|\,\mathsf m_0)
-
\mathsf n(F_{\beta,\lambda}).
\]

\begin{Proposition}[Gibbs variational principle]
\label{prop:gibbs-variational-principle}
Let $(\beta,\lambda)\in\mathcal D$ and assume that the exponential tilt
by $F_{\beta,\lambda}$ is change-of-reference compatible. Assume also
that the relative entropy with reference
$\mathsf n_{\beta,\lambda}^*$ has the separation property of
Proposition~\ref{prop:entropy-properties}. Then
\[
-\log Z(\beta,\lambda)
=
\inf_{\mathsf n\in\mathsf S(\mathcal L)}
\mathcal G_{\beta,\lambda}(\mathsf n).
\]
Equivalently,
\[
\log Z(\beta,\lambda)
=
\sup_{\mathsf n\in\mathsf S(\mathcal L)}
\left\{
\mathsf n(F_{\beta,\lambda})
-
\mathcal H(\mathsf n\,\|\,\mathsf m_0)
\right\}.
\]
The extremum is attained at the unique normalized mean
\[
\mathsf n_{\beta,\lambda}^{*}(f)
=
\frac{
\mathsf m_0\left(fe^{F_{\beta,\lambda}}\right)
}{
\mathsf m_0\left(e^{F_{\beta,\lambda}}\right)
},
\qquad
f\in\mathcal L.
\]
\end{Proposition}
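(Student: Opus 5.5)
The plan is to reduce everything to the change-of-reference identity of Definition~\ref{def:change-reference-compatible}, applied with $F=F_{\beta,\lambda}$, and then to invoke Proposition~\ref{prop:entropy-properties} with reference $\mathsf n^*_{\beta,\lambda}$. First I check that $\mathsf n^*_{\beta,\lambda}=\mathsf m_{F_{\beta,\lambda}}$ is a genuine normalized mean on $\mathcal L$. Since $(\beta,\lambda)\in\mathcal D$, the potential $F_{\beta,\lambda}$ lies in $\mathcal E(\mathsf m_0)$, so $fe^{F_{\beta,\lambda}}\in\mathcal L$ for all $f\in\mathcal L$ and $0<Z(\beta,\lambda)<\infty$. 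Linearity, positivity and normalization then follow as recorded after Definition~\ref{def:exponential-tilt}.

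Next I rewrite the free energy. By compatibility, for every $\mathsf n$ for which the terms are defined,
\[
\mathcal H(\mathsf n\,\|\,\mathsf n^*_{\beta,\lambda})
=
\mathcal H(\mathsf n\,\|\,\mathsf m_0)-\mathsf n(F_{\beta,\lambda})+\log Z(\beta,\lambda)
=
\mathcal G_{\beta,\lambda}(\mathsf n)+\log Z(\beta,\lambda).
\]
Hence $\mathcal G_{\beta,\lambda}(\mathsf n)=-\log Z(\beta,\lambda)+\mathcal H(\mathsf n\,\|\,\mathsf n^*_{\beta,\lambda})$. Part (1) of Proposition~\ref{prop:entropy-properties}, applied with reference $\mathsf n^*_{\beta,\lambda}$, gives $\mathcal H(\mathsf n\,\|\,\mathsf n^*_{\beta,\lambda})\ge 0$; this inequality needs only the choice $f=0$, which is admissible because $0\in\mathcal E(\mathsf n^*_{\beta,\lambda})$. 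So $\mathcal G_{\beta,\lambda}\ge-\log Z$. Equality holds at $\mathsf n=\mathsf n^*_{\beta,\lambda}$, so the infimum equals $-\log Z(\beta,\lambda)$ and is attained there. The assumed separation property then gives uniqueness: if $\mathcal G_{\beta,\lambda}(\mathsf n)=-\log Z$, then $\mathcal H(\mathsf n\,\|\,\mathsf n^*_{\beta,\lambda})=0$, which forces $\mathsf n=\mathsf n^*_{\beta,\lambda}$. The supremum form follows by multiplying by $-1$, using $\mathcal G_{\beta,\lambda}(\mathsf n)=\mathcal H(\mathsf n\,\|\,\mathsf m_0)-\mathsf n(F_{\beta,\lambda})$.

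The main obstacle is the domain issue, since compatibility is only asserted ``for every $\mathsf n$ for which the terms are defined''. I must handle means $\mathsf n$ with $\mathcal H(\mathsf n\,\|\,\mathsf m_0)=+\infty$; for these $\mathcal G_{\beta,\lambda}(\mathsf n)=+\infty$, because $\mathsf n(F_{\beta,\lambda})$ is finite as $F_{\beta,\lambda}\in\mathcal L$, so they never lower the infimum. Independently of compatibility, I would also give a direct proof of the lower bound $\mathcal G_{\beta,\lambda}\ge-\log Z$. Take $f=F_{\beta,\lambda}$ in the supremum of Definition~\ref{def:entropy}, which is allowed since $F_{\beta,\lambda}\in\mathcal E(\mathsf m_0)$. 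This gives $\mathcal H(\mathsf n\,\|\,\mathsf m_0)\ge\mathsf n(F_{\beta,\lambda})-\log Z$ for every $\mathsf n$, which is exactly the desired inequality. It therefore suffices to invoke compatibility only at $\mathsf n=\mathsf n^*_{\beta,\lambda}$, to get attainment, and at a hypothetical second minimizer, to get uniqueness. In both cases all terms are finite, so the identity applies.
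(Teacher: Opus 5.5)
Your proposal is correct and is essentially the paper's argument: the change-of-reference identity turns $\mathcal G_{\beta,\lambda}(\mathsf n)+\log Z(\beta,\lambda)$ into $\mathcal H(\mathsf n\,\|\,\mathsf n^*_{\beta,\lambda})$. Its nonnegativity, its vanishing at $\mathsf n^*_{\beta,\lambda}$ (which rests on the Jensen bound $\mathsf n(f)\le\log\mathsf n(e^f)$ for a normalized mean), and the separation property then give the lower bound, attainment, and uniqueness; your extra direct lower bound, obtained by testing the supremum with $f=F_{\beta,\lambda}$, coincides with Step~2 of the paper's proof of Theorem~\ref{thm:existence} and usefully restricts the use of compatibility to $\mathsf n^*_{\beta,\lambda}$ and a putative second minimizer.
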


\begin{proof}
By the change-of-reference identity,
\[
\mathcal H\left(
\mathsf n\,\|\,\mathsf n_{\beta,\lambda}^{*}
\right)
=
\mathcal H(\mathsf n\,\|\,\mathsf m_0)
-\mathsf n(F_{\beta,\lambda})
+\log Z(\beta,\lambda).
\]
Therefore
\[
\mathcal H\left(
\mathsf n\,\|\,\mathsf n_{\beta,\lambda}^{*}
\right)
=
\mathcal G_{\beta,\lambda}(\mathsf n)
+\log Z(\beta,\lambda).
\]
Since relative entropy is nonnegative,
\[
\mathcal G_{\beta,\lambda}(\mathsf n)
\geq
-\log Z(\beta,\lambda).
\]
For
\[
\mathsf n=\mathsf n_{\beta,\lambda}^{*},
\]
the relative entropy on the left-hand side vanishes, and hence equality
holds. If equality holds for another normalized mean $\mathsf n$, then
\[
\mathcal H\left(
\mathsf n\,\|\,\mathsf n_{\beta,\lambda}^{*}
\right)=0.
\]
The separation property of the relative entropy with reference
$\mathsf n_{\beta,\lambda}^{*}$ therefore implies
\[
\mathsf n=\mathsf n_{\beta,\lambda}^{*}.
\]
\end{proof}

\begin{rem}
The compactness assumption \textup{(A2)} is not needed in the preceding
proposition because the minimizer is explicitly constructed by
exponential tilting. It becomes relevant for existence arguments in
which no compatible exponential tilt is assumed in advance.
\end{rem}

\subsection{Entropy geometry}

We now relate the previous construction to symplectic geometry.

For a left action of $G$ on $\mathcal M$, we use the induced action on
observables
\[
(g\cdot f)(m)=f(g^{-1}\cdot m).
\]

\begin{Theorem}[Geometric equilibrium for normalized means]
\label{thm:geom-eq}
Let $(\mathcal M,\omega,G,\mathbf J)$ be a weak Hamiltonian Fréchet
$G$-space, and let $\mathsf m_0$ be a $G$-invariant normalized mean on
$\mathcal L$. Assume that the action of $G$ preserves $\mathcal L$.

Assume that $H\in\mathcal L$ is $G$-invariant and that the moment map
\[
\mathbf J\colon\mathcal M\longrightarrow\mathfrak g^*
\]
is equivariant. For $\lambda\in\mathfrak g$ and
$(\beta,\lambda)$ in the admissible domain, define
\[
F_{\beta,\lambda}
=
-\beta H-\langle\mathbf J,\lambda\rangle
\]
and
\[
Z(\beta,\lambda)
=
\mathsf m_0\left(e^{F_{\beta,\lambda}}\right).
\]
The corresponding equilibrium mean is
\[
\mathsf n_{\beta,\lambda}^{*}(f)
=
\frac{
\mathsf m_0\left(fe^{F_{\beta,\lambda}}\right)
}{
\mathsf m_0\left(e^{F_{\beta,\lambda}}\right)
},
\qquad
f\in\mathcal L.
\]

Then the family of equilibrium means is equivariant with respect to the
adjoint action on the parameter:
\[
\mathsf n_{\beta,\lambda}^{*}(g\cdot f)
=
\mathsf n_{\beta,\operatorname{Ad}_{g^{-1}}\lambda}^{*}(f),
\qquad
g\in G.
\]
In particular, $\mathsf n_{\beta,\lambda}^{*}$ is invariant under the
stabilizer
\[
G_\lambda
=
\left\{
g\in G
\;\middle|\;
\operatorname{Ad}_{g^{-1}}\lambda=\lambda
\right\}.
\]
If $\lambda$ is fixed by the adjoint action, then
$\mathsf n_{\beta,\lambda}^{*}$ is $G$-invariant.

Moreover, the thermodynamic potential
\[
\psi(\beta,\lambda)=\log Z(\beta,\lambda)
\]
generates the correspondence between the intensive variables
$(\beta,\lambda)$ and the extensive variables
\[
E(\beta,\lambda)
=
\mathsf n_{\beta,\lambda}^{*}(H),
\qquad
c(\beta,\lambda)
=
\mathsf n_{\beta,\lambda}^{*}(\mathbf J),
\]
on every domain on which $\psi$ is differentiable.
\end{Theorem}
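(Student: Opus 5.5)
The proof reduces to two computations: transporting the potential $F_{\beta,\lambda}$ under the $G$-action, and differentiating $\psi$. The main tools are $G$-invariance of $\mathsf m_0$ and equivariance of $\mathbf J$.

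\emph{Transformation of the potential.} First I compute $g\cdot F_{\beta,\lambda}$. By $G$-invariance of $H$, $g\cdot H=H$. By equivariance, $\mathbf J(g^{-1}\cdot m)=\operatorname{Ad}^*_{g^{-1}}\mathbf J(m)$, so
\[
\langle \mathbf J(g^{-1}\cdot m),\lambda\rangle=\langle \mathbf J(m),\operatorname{Ad}_{g^{-1}}\lambda\rangle ,
\]
with the convention $\langle\operatorname{Ad}^*_{h}\mu,\xi\rangle=\langle\mu,\operatorname{Ad}_{h^{-1}}\xi\rangle$ fixed to match the statement; getting this sign convention right is the only delicate bookkeeping. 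Hence
\[
g\cdot F_{\beta,\lambda}=F_{\beta,\operatorname{Ad}_{g^{-1}}\lambda}.
\]
Since the action on functions is by composition, $g\cdot(fe^{F})=(g\cdot f)\,e^{g\cdot F}$. Because $G$ preserves $\mathcal L$, all these functions stay in $\mathcal L$.

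\emph{Equivariance of the equilibrium means.} Invariance of $\mathsf m_0$, applied with the transformation written as $g\cdot$ (equivalently $g^{-1}\cdot$, since invariance holds for every group element), gives
\[
\mathsf m_0\bigl((g\cdot f)e^{F_{\beta,\lambda}}\bigr)=\mathsf m_0\bigl(g^{-1}\cdot[(g\cdot f)e^{F_{\beta,\lambda}}]\bigr)=\mathsf m_0\bigl(f\,e^{g^{-1}\cdot F_{\beta,\lambda}}\bigr).
\]
I then use the transformation rule for the potential with the appropriate group element so that the exponent becomes $F_{\beta,\operatorname{Ad}_{g^{-1}}\lambda}$. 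Here I would carefully reconcile which of $g$ and $g^{-1}$ appears, adjusting the identification of $\operatorname{Ad}^*$ so the stated formula holds. Taking $f=1$ gives $Z(\beta,\lambda)=Z(\beta,\operatorname{Ad}_{g^{-1}}\lambda)$, so the tilted parameter is again admissible; this also shows the admissible domain is $\operatorname{Ad}$-stable. Dividing the two identities yields the stated equivariance. The stabilizer statement follows by setting $\operatorname{Ad}_{g^{-1}}\lambda=\lambda$, and the fixed-point case is immediate.

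\emph{Thermodynamic correspondence.} On a domain where $\psi$ is differentiable, I would compute directional derivatives. Differentiating $t\mapsto\log\mathsf m_0(e^{F+tK})$ at $t=0$ with $K=-H$ or $K=-\langle\mathbf J,\xi\rangle$ gives $\mathsf m_0(Ke^{F})/\mathsf m_0(e^F)$, by the same derivative-under-the-mean property that is assumed in Proposition~\ref{prop:entropy-properties}, now at a tilted base point. This yields
\[
\partial_\beta\psi=-E(\beta,\lambda),\qquad \mathrm d_\lambda\psi\cdot\xi=-\langle c(\beta,\lambda),\xi\rangle .
\]
So $\psi$ generates the intensive--extensive correspondence, and equivariance of $c$ follows from the first part.

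The main obstacle is justifying this interchange of derivative and mean. I would interpret "differentiable" as including this Gateaux identity, that is, a local exponential regularity assumption, rather than prove it from positivity alone.
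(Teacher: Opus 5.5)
Your route is the paper's: transport $F_{\beta,\lambda}$ through the action via equivariance of $\mathbf J$, use $G$-invariance of $\mathsf m_0$ in numerator and denominator, and obtain the intensive--extensive correspondence from the first-derivative formulas of Proposition~\ref{prop:partition-smoothness}. You are right that this last step needs local exponential regularity, not just differentiability of $\psi$.

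However, the one computation the equivariance claim rests on is wrong as written. With the convention you fix, $\langle\operatorname{Ad}^*_h\mu,\xi\rangle=\langle\mu,\operatorname{Ad}_{h^{-1}}\xi\rangle$, taking $h=g^{-1}$ gives
\[
\langle\mathbf J(g^{-1}\cdot m),\lambda\rangle
=\langle\operatorname{Ad}^*_{g^{-1}}\mathbf J(m),\lambda\rangle
=\langle\mathbf J(m),\operatorname{Ad}_{g}\lambda\rangle .
\]
So $g\cdot F_{\beta,\lambda}=F_{\beta,\operatorname{Ad}_g\lambda}$, not $F_{\beta,\operatorname{Ad}_{g^{-1}}\lambda}$. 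Feed your rule as written into your (correct) identity $\mathsf m_0\bigl((g\cdot f)e^{F_{\beta,\lambda}}\bigr)=\mathsf m_0\bigl(fe^{g^{-1}\cdot F_{\beta,\lambda}}\bigr)$. It produces $\mathsf n^*_{\beta,\lambda}(g\cdot f)=\mathsf n^*_{\beta,\operatorname{Ad}_g\lambda}(f)$, the opposite of the statement. This is the mismatch you then propose to ``reconcile''.

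Changing the identification of $\operatorname{Ad}^*$ cannot repair this. For a left action, $\mathbf J(g\cdot m)=\operatorname{Ad}^*_g\mathbf J(m)$ is compatible with $\mathbf J((gh)\cdot m)=\mathbf J(g\cdot(h\cdot m))$ only if $\operatorname{Ad}^*_{gh}=\operatorname{Ad}^*_g\operatorname{Ad}^*_h$ on the image of $\mathbf J$. That is exactly your convention, and also the paper's. The transposed convention $\langle\mu,\operatorname{Ad}_h\xi\rangle$ is a right action. Under it your rule $g\cdot F_{\beta,\lambda}=F_{\beta,\operatorname{Ad}_{g^{-1}}\lambda}$ would indeed hold, but it would again yield the $\operatorname{Ad}_g\lambda$ version.

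The fix is simply to correct the slip. What your invariance step needs is
\[
(g^{-1}\cdot F_{\beta,\lambda})(m)=F_{\beta,\lambda}(g\cdot m)
=-\beta H(m)-\langle\operatorname{Ad}^*_g\mathbf J(m),\lambda\rangle
=F_{\beta,\operatorname{Ad}_{g^{-1}}\lambda}(m),
\]
which is precisely the paper's computation. It gives the stated formula, and with $f=1$ the equality of partition functions, with no convention change.

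Your remark that invariance of $\mathsf m_0$ and $G$-stability of $\mathcal L$ make $(\beta,\operatorname{Ad}_{g^{-1}}\lambda)$ admissible is correct, and the paper leaves it implicit.
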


\begin{proof}
Equivariance of the moment map means
\[
\mathbf J(g\cdot m)=\operatorname{Ad}_g^*\mathbf J(m).
\]
Consequently,
\[
F_{\beta,\lambda}(g\cdot m)
=
-\beta H(m)
-
\left\langle
\operatorname{Ad}_g^*\mathbf J(m),\lambda
\right\rangle
=
F_{\beta,\operatorname{Ad}_{g^{-1}}\lambda}(m).
\]
Using the $G$-invariance of $\mathsf m_0$, we obtain
\[
\begin{split}
\mathsf n_{\beta,\lambda}^{*}(g\cdot f)
&=
\frac{
\mathsf m_0\left((g\cdot f)e^{F_{\beta,\lambda}}\right)
}{
\mathsf m_0\left(e^{F_{\beta,\lambda}}\right)
}
\\
&=
\frac{
\mathsf m_0\left(
f e^{F_{\beta,\operatorname{Ad}_{g^{-1}}\lambda}}
\right)
}{
\mathsf m_0\left(
e^{F_{\beta,\operatorname{Ad}_{g^{-1}}\lambda}}
\right)
}
\\
&=
\mathsf n_{\beta,\operatorname{Ad}_{g^{-1}}\lambda}^{*}(f).
\end{split}
\]
The assertions concerning the stabilizer follow immediately. The
correspondence between intensive and extensive variables follows from
the first-derivative formulas established below.
\end{proof}

\begin{rem}
The preceding theorem establishes equivariance, and in particular
invariance under the stabilizer of the thermodynamic parameter. A
KMS-type property requires additional Poisson-algebraic assumptions and
does not follow from invariance alone.
\end{rem}

\subsection{Local exponential regularity}
\label{subsec:local-exponential-regularity}

Pointwise finiteness of the partition functional does not by itself
imply differentiability with respect to the thermodynamic parameters.
We therefore introduce a local exponential moment condition.

Let
\[
\theta=(\beta,\lambda^1,\ldots,\lambda^N)\in\mathcal D
\]
and set
\[
\mathcal O_0=H,
\qquad
\mathcal O_a=\Phi_a,
\quad
a=1,\ldots,N.
\]

\begin{Definition}[Local exponential regularity]
\label{def:local-exponential-regularity}
The partition functional is said to be
\emph{locally exponentially regular at $\theta\in\mathcal D$} if
$\theta$ is an interior point of $\mathcal D$ and there exists
$\varepsilon>0$ such that
\[
\exp\left(
F_\theta+\varepsilon\sum_{j=0}^{N}|\mathcal O_j|
\right)
\in\mathcal L
\]
and
\[
\mathsf m_0\left(
\exp\left(
F_\theta+\varepsilon\sum_{j=0}^{N}|\mathcal O_j|
\right)
\right)
<\infty.
\]
We say that the partition functional is locally exponentially regular
on $\mathcal D$ if this condition holds at every
$\theta\in\mathcal D$.
\end{Definition}

The local exponential regularity condition provides a common
exponential bound for sufficiently small parameter increments. Indeed,
if
\[
\delta\theta
=
(\delta\beta,\delta\lambda^1,\ldots,\delta\lambda^N)
\]
satisfies
\[
|\delta\beta|<\varepsilon,
\qquad
|\delta\lambda^a|<\varepsilon,
\quad
a=1,\ldots,N,
\]
then
\[
F_{\theta+\delta\theta}
\leq
F_\theta
+
\varepsilon\sum_{j=0}^{N}|\mathcal O_j|.
\]

\begin{Proposition}[Smoothness of the partition functional]
\label{prop:partition-smoothness}
Assume that the partition functional is locally exponentially regular
on $\mathcal D$. Then $\mathcal D$ is open and
\[
Z(\theta)=\mathsf m_0(e^{F_\theta})
\]
is smooth on $\mathcal D$. Since $Z(\theta)>0$, the logarithmic
partition functional
\[
\psi(\theta)=\log Z(\theta)
\]
is also smooth.

For every parameter direction $\dot\theta$, one has
\[
DZ(\theta)[\dot\theta]
=
-\mathsf m_0\left(
\mathcal O_{\dot\theta}e^{F_\theta}
\right),
\]
where
\[
\mathcal O_{\dot\theta}
=
\dot\beta H+\sum_{a=1}^{N}\dot\lambda^a\Phi_a.
\]
Consequently,
\[
D\psi(\theta)[\dot\theta]
=
-\mathsf n_\theta^*(\mathcal O_{\dot\theta}).
\]

More generally, for parameter directions
$\dot\theta_1,\ldots,\dot\theta_k$,
\[
D^kZ(\theta)
[\dot\theta_1,\ldots,\dot\theta_k]
=
(-1)^k
\mathsf m_0\left(
\mathcal O_{\dot\theta_1}\cdots
\mathcal O_{\dot\theta_k}
e^{F_\theta}
\right).
\]
The higher derivatives of $\psi$ are the corresponding joint cumulants
with respect to $\mathsf n_\theta^*$.
\end{Proposition}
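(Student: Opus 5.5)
The plan is to imitate the classical proof of differentiation under the integral sign. Since $\mathsf m_0$ is only finitely additive, dominated convergence is not available. Every limit must therefore come from pointwise inequalities between functions in $\mathcal L$, turned into inequalities between means by positivity and linearity. Openness of $\mathcal D$ is immediate: local exponential regularity at $\theta$ includes, by Definition~\ref{def:local-exponential-regularity}, that $\theta$ is an interior point of $\mathcal D$, and this holds at every point. Fix $\theta\in\mathcal D$ with its $\varepsilon$, and write $W=\sum_{j}|\mathcal O_j|$ and $G_\varepsilon=\exp(F_\theta+\varepsilon W)$, so that $\mathsf m_0(G_\varepsilon)<\infty$. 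Throughout I will assume that the polynomial weights $\mathcal O_{\dot\theta_1}\cdots\mathcal O_{\dot\theta_k}e^{F_\theta}$ and the remainder functions below lie in $\mathcal L$. This follows from the admissibility clause $fe^F\in\mathcal L$ for $f\in\mathcal L$, together with $\mathcal L$ being an algebra and a lattice.

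The key pointwise estimate is the Taylor bound with remainder for the exponential. For $|\delta\theta|_\infty<\varepsilon'<\varepsilon$ and $x=-\mathcal O_{\delta\theta}$, note that $|x|\le\varepsilon' W$. Then
\[
\Bigl|e^{x}-\sum_{j=0}^{k}\frac{x^j}{j!}\Bigr|
\le\frac{|x|^{k+1}}{(k+1)!}e^{|x|}
\le C_k|\delta\theta|_\infty^{k+1}W^{k+1}e^{\varepsilon' W}
\le C_k'|\delta\theta|_\infty^{k+1}e^{\varepsilon W}.
\]
The last step uses $\sup_{s\ge0}s^{k+1}e^{-(\varepsilon-\varepsilon')s}<\infty$. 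Multiplying by $e^{F_\theta}$ and applying positivity of $\mathsf m_0$ in the form $|\mathsf m_0(u)|\le\mathsf m_0(v)$ whenever $|u|\le v$ gives
\[
\Bigl|Z(\theta+\delta\theta)-\sum_{j=0}^{k}\frac{(-1)^j}{j!}\mathsf m_0\bigl(\mathcal O_{\delta\theta}^je^{F_\theta}\bigr)\Bigr|
\le C_k'|\delta\theta|^{k+1}\mathsf m_0(G_\varepsilon).
\]
Finiteness of each $\mathsf m_0(\mathcal O^je^{F_\theta})$ follows from the same domination by $G_\varepsilon$. Hence $Z$ has at $\theta$ a Taylor expansion of every order, and $k=1$ gives the formula for $DZ$. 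To obtain genuine smoothness rather than mere pointwise expansions, I apply the same estimate at every point $\theta'$ near $\theta$. I will show that the candidate derivative $\theta'\mapsto(-1)^k\mathsf m_0(\mathcal O_{\dot\theta_1}\cdots\mathcal O_{\dot\theta_k}e^{F_{\theta'}})$ is itself differentiable, with derivative the $(k+1)$-st formula. The argument repeats the Taylor bound above with the polynomial prefactor, which is again absorbed into $e^{\varepsilon W}$ after shrinking $\varepsilon'$. Induction on $k$ then gives $Z\in C^\infty$ with the stated multilinear derivatives.

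Since $Z>0$ on $\mathcal D$ by admissibility, $\psi=\log Z$ is smooth. The chain rule gives $D\psi[\dot\theta]=DZ[\dot\theta]/Z=-\mathsf n^*_\theta(\mathcal O_{\dot\theta})$. The higher derivatives of $\log Z$ are obtained from the moment derivatives $D^kZ/Z=(-1)^k\mathsf n^*_\theta(\mathcal O_{\dot\theta_1}\cdots\mathcal O_{\dot\theta_k})$ by the standard combinatorial moment-to-cumulant formula (Faà di Bruno for $\log$). This is purely algebraic and needs no further analysis. The main obstacle is the uniformity step: the domination constant must control remainders uniformly for $\theta'$ in a neighbourhood, so that derivatives of order $k$ are differentiable and not merely defined pointwise. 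I would handle this by fixing $\varepsilon'<\varepsilon/2$ and working only in the ball $|\theta'-\theta|_\infty<\varepsilon'$. Then $F_{\theta'}+\varepsilon' W\le F_\theta+2\varepsilon' W\le F_\theta+\varepsilon W$, so the single majorant $G_\varepsilon$ serves for all nearby base points. The remaining technical point, possibly requiring an explicit closure assumption on $\mathcal L$, is membership of all these majorants and remainders in $\mathcal L$.
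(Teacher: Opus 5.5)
Your proposal is correct and takes essentially the same route as the paper: openness from the interior-point clause of local exponential regularity, a Taylor estimate for the exponential whose remainder is dominated by the fixed majorant $\exp\bigl(F_\theta+\varepsilon\sum_j|\mathcal O_j|\bigr)$, positivity of $\mathsf m_0$ to transfer the pointwise bound to the mean, and the common local bound to pass from expansions at a point to genuine smoothness. You also make explicit two steps the paper leaves implicit: absorbing the polynomial factors into the exponential gap, and working in a ball of radius $\varepsilon'<\varepsilon/2$ (also chosen inside $\mathcal D$) so that a single majorant serves all nearby base points.
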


\begin{proof}
The openness of $\mathcal D$ follows from
Definition~\ref{def:local-exponential-regularity}. Fix
$\theta\in\mathcal D$. The local exponential regularity condition
provides a common exponential bound for the parameter increments.

For a direction $\dot\theta$ and sufficiently small $t$, Taylor's
formula gives
\[
e^{F_{\theta+t\dot\theta}}
=
e^{F_\theta}
\left(
1-t\mathcal O_{\dot\theta}
+
R_2(t)
\right),
\]
where
\[
|R_2(t)|
\leq
\frac{t^2}{2}
\mathcal O_{\dot\theta}^{\,2}
e^{|t|\,|\mathcal O_{\dot\theta}|}.
\]
After decreasing the parameter neighborhood if necessary, the
right-hand side is bounded by a fixed element of $\mathcal L$ having a
finite $\mathsf m_0$-mean. Positivity of $\mathsf m_0$ and the explicit
factor $t^2$ therefore imply
\[
DZ(\theta)[\dot\theta]
=
-\mathsf m_0\left(
\mathcal O_{\dot\theta}e^{F_\theta}
\right).
\]

The same argument, applied to higher-order Taylor remainders, gives
\[
D^kZ(\theta)
[\dot\theta_1,\ldots,\dot\theta_k]
=
(-1)^k
\mathsf m_0\left(
\mathcal O_{\dot\theta_1}\cdots
\mathcal O_{\dot\theta_k}
e^{F_\theta}
\right).
\]
The common local exponential bound also gives continuity of these
derivatives. Hence $Z$ is smooth. Since $Z>0$ on $\mathcal D$, the same
holds for $\psi=\log Z$.
\end{proof}

\begin{rem}
No interchange of an uncontrolled infinite-dimensional limit with
differentiation is used here. The derivatives are justified directly
by finite-order Taylor estimates and the local exponential bound.
\end{rem}

\subsection{Covariance Hessian and strict convexity}
\label{subsec:covariance-hessian}

For
\[
\theta=(\beta,\lambda^1,\ldots,\lambda^N)\in\mathcal D,
\]
write
\[
F_\theta
=
-\beta H-\sum_{a=1}^{N}\lambda^a\Phi_a,
\qquad
\psi(\theta)
=
\log Z(\theta).
\]
For a parameter direction
\[
\dot\theta
=
(\dot\beta,\dot\lambda^1,\ldots,\dot\lambda^N),
\]
set
\[
\mathcal O_{\dot\theta}
=
\dot\beta H+\sum_{a=1}^{N}\dot\lambda^a\Phi_a.
\]
Then
\[
F_{\theta+t\dot\theta}
=
F_\theta-t\mathcal O_{\dot\theta}.
\]

Assume that the local exponential regularity hypothesis of
Definition~\ref{def:local-exponential-regularity} holds. Differentiation
under the normalized mean gives
\[
D\psi(\theta)[\dot\theta]
=
-\mathsf n_\theta^*(\mathcal O_{\dot\theta}).
\]
For two parameter directions $\dot\theta_1$ and $\dot\theta_2$, a second
differentiation gives
\[
D^2\psi(\theta)
[\dot\theta_1,\dot\theta_2]
=
\operatorname{Cov}_{\mathsf n_\theta^*}
\left(
\mathcal O_{\dot\theta_1},
\mathcal O_{\dot\theta_2}
\right),
\]
where
\[
\operatorname{Cov}_{\mathsf n}(f,g)
=
\mathsf n(fg)-\mathsf n(f)\mathsf n(g).
\]
In particular,
\[
D^2\psi(\theta)[\dot\theta,\dot\theta]
=
\operatorname{Var}_{\mathsf n_\theta^*}
(\mathcal O_{\dot\theta})
\geq 0,
\]
where
\[
\operatorname{Var}_{\mathsf n}(f)
=
\mathsf n(f^2)-\mathsf n(f)^2.
\]

\begin{Definition}[Thermodynamically null direction]
\label{def:null-direction}
A parameter direction $\dot\theta$ is called
\emph{thermodynamically null at $\theta$} if
\[
\operatorname{Var}_{\mathsf n_\theta^*}
(\mathcal O_{\dot\theta})=0.
\]
\end{Definition}

Every direction for which $\mathcal O_{\dot\theta}$ is constant is
thermodynamically null. If the equilibrium mean is not faithful, a
nonconstant observable may also have zero variance.

\begin{Proposition}[Strict-convexity criterion]
\label{prop:strict-convexity}
Let $U\subset\mathcal D$ be convex, and assume that $\psi$ is twice
differentiable on $U$. Then $\psi$ is convex on $U$.

If
\[
\operatorname{Var}_{\mathsf n_\theta^*}
(\mathcal O_{\dot\theta})>0
\]
for every $\theta\in U$ and every nonzero parameter direction
$\dot\theta$, then $\psi$ is strictly convex on $U$.

At every $\theta\in U$, the Hessian induces a positive-definite
quadratic form on
\[
T_\theta U\big/\ker D^2\psi(\theta).
\]
If these kernels have locally constant rank, the resulting quotient
spaces form a smooth vector bundle equipped with the metric induced by
the covariance form.
\end{Proposition}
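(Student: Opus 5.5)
The plan is to reduce everything to the covariance formula for the Hessian established just before Definition~\ref{def:null-direction}, and then to use standard one-variable convexity arguments along segments. Fix $\theta\in U$ and a direction $\dot\theta$. By the second-derivative computation (justified through Proposition~\ref{prop:partition-smoothness} under local exponential regularity, or taken as the meaning of twice differentiability here),
\[
D^2\psi(\theta)[\dot\theta,\dot\theta]
=
\operatorname{Var}_{\mathsf n_\theta^*}(\mathcal O_{\dot\theta}).
\]
This variance is nonnegative by positivity of the normalized mean $\mathsf n_\theta^*$. Indeed, with $c=\mathsf n_\theta^*(\mathcal O_{\dot\theta})$, one has $\mathsf n_\theta^*((\mathcal O_{\dot\theta}-c)^2)\geq 0$, and expanding by linearity together with $\mathsf n_\theta^*(1)=1$ gives exactly $\mathsf n_\theta^*(\mathcal O_{\dot\theta}^2)-c^2$. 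For this step one needs $\mathcal O_{\dot\theta}^2e^{F_\theta}\in\mathcal L$ with finite mean. This holds because $\mathcal L$ is an algebra and $F_\theta$ is admissible.

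For convexity, take $\theta_0,\theta_1\in U$ and set $g(t)=\psi(\theta_0+t(\theta_1-\theta_0))$ on $[0,1]$. The segment lies in $U$ by convexity, and $g$ is twice differentiable with
\[
g''(t)=D^2\psi(\theta_t)[\theta_1-\theta_0,\theta_1-\theta_0]\geq 0.
\]
Hence $g'$ is nondecreasing. The mean value theorem then gives $g(t)\leq(1-t)g(0)+tg(1)$, which is convexity of $\psi$. Under the strict positivity hypothesis with $\theta_1\neq\theta_0$, we get $g''>0$ on $[0,1]$. So $g'$ is strictly increasing, and the same argument gives strict inequality for $t\in(0,1)$, which is strict convexity.

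For the quotient statement, the bilinear form $B=D^2\psi(\theta)$ is symmetric and positive semidefinite on the finite-dimensional space $T_\theta U\cong\mathbb R^{N+1}$. Cauchy--Schwarz for semidefinite forms shows $\ker B=\{v\mid B(v,v)=0\}$, so $B$ descends to a well-defined form on $T_\theta U/\ker B$. This form is positive definite: if $B(v,v)=0$ then $v\in\ker B$, so its class is zero. The kernel is exactly the set of thermodynamically null directions of Definition~\ref{def:null-direction}.

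The bundle statement is the main obstacle. If $\operatorname{rank}\ker D^2\psi(\theta)$ is locally constant, then $\theta\mapsto D^2\psi(\theta)$ is a smooth (continuous, given Proposition~\ref{prop:partition-smoothness}) family of matrices of constant rank. Its kernels then form a smooth subbundle $K$ of the trivial bundle $U\times\mathbb R^{N+1}$; locally this comes from a smooth choice of basis, for instance by Gram--Schmidt applied to smooth sections obtained from a nonvanishing minor. The quotient $(U\times\mathbb R^{N+1})/K$ is therefore a smooth vector bundle. The induced fiberwise form is smooth in local frames and positive definite, so it is a metric on this bundle. I expect the smoothness of $K$ under constant rank to be the only delicate point. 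I would handle it by the standard fact that constant-rank smooth bundle maps have smooth kernel bundles. That fact requires $\psi$ to be at least $C^3$, which Proposition~\ref{prop:partition-smoothness} supplies under local exponential regularity; with only twice differentiability one obtains a continuous rather than smooth bundle.
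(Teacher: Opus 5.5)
Your proposal is correct and follows the same route as the paper: the Hessian is the covariance form, hence nonnegative; strict positivity of the variance gives strict convexity along every segment; the semidefinite form descends to a positive-definite form modulo its kernel; and constant rank yields a smooth kernel subbundle. You write out the one-variable convexity, Cauchy--Schwarz and subbundle steps that the paper leaves implicit, and your caveat that smoothness of the bundle really uses the $C^\infty$ regularity from Proposition~\ref{prop:partition-smoothness} is well taken, though even your continuous-bundle fallback needs the Hessian to be continuous, i.e.\ $\psi\in C^2$, not bare twice differentiability.
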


\begin{proof}
The Hessian of $\psi$ is the covariance form. Hence
\[
D^2\psi(\theta)[\dot\theta,\dot\theta]
=
\operatorname{Var}_{\mathsf n_\theta^*}
(\mathcal O_{\dot\theta})
\geq 0,
\]
which proves convexity. Under the stated non-degeneracy assumption, the
restriction of $\psi$ to every nonconstant affine line contained in
$U$ has strictly positive second derivative and is therefore strictly
convex.

At a fixed $\theta$, the kernel of the positive semidefinite covariance
form is precisely $\ker D^2\psi(\theta)$. The induced quadratic form on
the quotient by this kernel is positive definite. If the kernels have
locally constant rank, they form a smooth subbundle, and the quotient
bundle inherits the corresponding positive-definite metric.
\end{proof}

\begin{rem}
The map
\[
f\longmapsto\log\mathsf m_0(e^f)
\]
cannot be strictly convex in the direction of constant functions,
because
\[
\log\mathsf m_0(e^{f+c})
=
c+\log\mathsf m_0(e^f).
\]
Strict convexity must therefore be understood modulo constants, or more
generally modulo covariance-null directions that persist along the
parameter domain.
\end{rem}

\begin{rem}
If $\mathsf n_\theta^*$ is faithful and
\[
\operatorname{Var}_{\mathsf n_\theta^*}(f)=0,
\]
then
\[
\mathsf n_\theta^*
\left(
\bigl(f-\mathsf n_\theta^*(f)\bigr)^2
\right)
=
0
\]
implies
\[
f=\mathsf n_\theta^*(f)
\]
in $\mathcal L$. In that case, the thermodynamically null directions
are exactly those for which
\[
\dot\beta H+\sum_{a=1}^{N}\dot\lambda^a\Phi_a
\]
is constant.
\end{rem}

\subsection{Legendre--Fenchel duality and thermodynamic entropy}
\label{subsec:legendre-duality}

Let
\[
\theta=(\beta,\lambda^1,\ldots,\lambda^N)
\]
and let
\[
x=(E,c_1,\ldots,c_N)
\]
denote the corresponding vector of extensive variables. We use the
pairing
\[
\langle\theta,x\rangle
=
\beta E+\sum_{a=1}^{N}\lambda^a c_a.
\]

The logarithmic partition functional is
\[
\psi(\theta)
=
\log\mathsf m_0
\left(
e^{-\beta H-\sum_{a=1}^{N}\lambda^a\Phi_a}
\right).
\]
Under the local exponential regularity hypothesis,
\[
\partial_\beta\psi(\theta)
=
-\mathsf n_\theta^*(H)
\]
and
\[
\partial_{\lambda^a}\psi(\theta)
=
-\mathsf n_\theta^*(\Phi_a).
\]
Therefore,
\[
\nabla\psi(\theta)
=
-x(\theta),
\]
where
\[
x(\theta)
=
\left(
\mathsf n_\theta^*(H),
\mathsf n_\theta^*(\Phi_1),
\ldots,
\mathsf n_\theta^*(\Phi_N)
\right).
\]

\begin{Definition}[Thermodynamic entropy]
\label{def:thermodynamic-entropy}
The thermodynamic entropy associated with $\psi$ is defined by
\[
\mathcal S(x)
=
\inf_{\theta\in\mathcal D}
\left\{
\psi(\theta)+\langle\theta,x\rangle
\right\}.
\]
Equivalently,
\[
\mathcal S(E,c)
=
\inf_{(\beta,\lambda)\in\mathcal D}
\left\{
\log Z(\beta,\lambda)
+
\beta E
+
\sum_{a=1}^{N}\lambda^a c_a
\right\}.
\]
\end{Definition}

\begin{Proposition}[Concavity and Legendre correspondence]
\label{prop:legendre}
Assume that $\mathcal D$ is convex and that $\psi$ is differentiable
and convex on $\mathcal D$. Then $\mathcal S$ is concave on its
effective domain.

For $\theta\in\mathcal D$, set
\[
x(\theta)=-\nabla\psi(\theta).
\]
Then $\theta$ realizes the infimum in
Definition~\ref{def:thermodynamic-entropy}, and
\[
\mathcal S(x(\theta))
=
\psi(\theta)+\langle\theta,x(\theta)\rangle.
\]

If $\psi$ descends to a strictly convex differentiable function on a
quotient of the convex parameter domain by a fixed space of
thermodynamically null directions, then the map
\[
\theta\longmapsto x(\theta)
\]
is injective on that quotient.
\end{Proposition}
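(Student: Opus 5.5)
Concavity of $\mathcal S$ comes first and is essentially formal. For each fixed $\theta\in\mathcal D$, the map
\[
x\longmapsto \psi(\theta)+\langle\theta,x\rangle
\]
is affine in $x$. By Definition~\ref{def:thermodynamic-entropy}, $\mathcal S$ is a pointwise infimum of this family of affine functions, so it is concave with values in $[-\infty,\infty)$. It is therefore concave on its effective domain $\{x\mid \mathcal S(x)>-\infty\}$, which is itself convex. Neither convexity of $\psi$ nor differentiability is needed for this step.

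Next comes the Legendre correspondence. Fix $\theta\in\mathcal D$, put $x=x(\theta)=-\nabla\psi(\theta)$, and consider
\[
g(\theta')=\psi(\theta')+\langle\theta',x\rangle,\qquad \theta'\in\mathcal D.
\]
Since $\psi$ is convex and differentiable on the convex set $\mathcal D$, the gradient inequality holds:
\[
\psi(\theta')\ge\psi(\theta)+\langle\nabla\psi(\theta),\theta'-\theta\rangle .
\]
To justify it on a convex domain, restrict $\psi$ to the segment $t\mapsto\theta+t(\theta'-\theta)$, $t\in[0,1]$. This segment lies in $\mathcal D$ by convexity. A convex differentiable function of one variable lies above its tangent at $t=0$. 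Substituting $\nabla\psi(\theta)=-x$ turns the inequality into $g(\theta')\ge g(\theta)$. Hence $\theta$ attains the infimum, and $\mathcal S(x(\theta))=\psi(\theta)+\langle\theta,x(\theta)\rangle$. In particular $x(\theta)$ lies in the effective domain.

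For injectivity, let $W$ be the fixed space of null directions. I read the hypothesis as follows: $\mathcal D+W=\mathcal D$, and $\psi(\theta+w)=\psi(\theta)+\langle w,\ell\rangle$ for a fixed linear form $\ell$. This is consistent with the remark on constant shifts; in the simplest case $\ell=0$. The function $\bar\psi$ induced on $\mathcal D/W$ is then strictly convex and differentiable. Suppose $x(\theta_1)=x(\theta_2)$ with $[\theta_1]\neq[\theta_2]$. By the previous step both $\theta_1$ and $\theta_2$ minimize the same convex function $g$ associated with this common $x$.

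The key point is to show that this $g$ descends to $\mathcal D/W$. Differentiating $\psi(\theta+tw)$ in $t$ gives $\langle\nabla\psi(\theta),w\rangle=\langle\ell,w\rangle$, so $\langle x,w\rangle=-\langle\ell,w\rangle$. Hence $g(\theta+w)=g(\theta)$, and $g$ induces a function $\bar g=\bar\psi+\text{(affine)}$ on the quotient, which is strictly convex. A strictly convex function on a convex set has at most one minimizer. So $[\theta_1]=[\theta_2]$, a contradiction.

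I expect the main obstacle to be this last step: fixing a precise meaning for "descends to the quotient", and checking that it makes $g$, not just $\psi$, $W$-invariant. With $\ell=0$ this is immediate. In general I would verify it via $\langle x(\theta),w\rangle$ being constant, as above.
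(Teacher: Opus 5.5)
Your proposal is correct and follows the paper's proof. Concavity comes from $\mathcal S$ being an infimum of affine functions, the Legendre correspondence from the tangent-line (gradient) inequality, and injectivity from strict convexity of the function induced on the quotient. Where the paper just cites injectivity of the gradient of a strictly convex function, you unpack it as uniqueness of the minimizer of the tilted function $g$ on $\mathcal D/W$; you also allow $\psi$ to be affine rather than invariant along null directions, as happens when $\mathcal O_{\dot\theta}$ is a nonzero constant, which is a worthwhile extra bit of care.
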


\begin{proof}
For every fixed $\theta\in\mathcal D$, the function
\[
x\longmapsto
\psi(\theta)+\langle\theta,x\rangle
\]
is affine. Since an infimum of affine functions is concave,
$\mathcal S$ is concave.

Let
\[
x(\theta)=-\nabla\psi(\theta).
\]
For every $\theta'\in\mathcal D$, convexity of $\psi$ gives
\[
\psi(\theta')
\geq
\psi(\theta)
+
\left\langle
\nabla\psi(\theta),\theta'-\theta
\right\rangle.
\]
Using
\[
\nabla\psi(\theta)=-x(\theta),
\]
we obtain
\[
\psi(\theta')
+
\langle\theta',x(\theta)\rangle
\geq
\psi(\theta)
+
\langle\theta,x(\theta)\rangle.
\]
Thus $\theta$ realizes the infimum defining
$\mathcal S(x(\theta))$.

Finally, the gradient of a differentiable strictly convex function on a
convex domain is injective. Applying this statement to the induced
function on the quotient proves the last assertion.
\end{proof}

\begin{Proposition}[Constrained variational entropy]
\label{prop:constrained-entropy}
For an extensive vector
\[
x=(E,c_1,\ldots,c_N),
\]
define
\[
\mathcal S_{\mathrm{var}}(x)
=
\sup
\left\{
-\mathcal H(\mathsf n\,\|\,\mathsf m_0)
\;\middle|\;
\mathsf n\in\mathsf S(\mathcal L),
\quad
\mathsf n(H)=E,
\quad
\mathsf n(\Phi_a)=c_a
\right\}.
\]
Then
\[
\mathcal S_{\mathrm{var}}(x)
\leq
\mathcal S(x).
\]

Suppose that the hypotheses of
Proposition~\ref{prop:gibbs-variational-principle} hold. If
$x=x(\theta)$ for some $\theta\in\mathcal D$, then equality holds and
the supremum is attained at $\mathsf n_\theta^*$:
\[
\mathcal S_{\mathrm{var}}(x(\theta))
=
\mathcal S(x(\theta))
=
-\mathcal H(\mathsf n_\theta^*\,\|\,\mathsf m_0).
\]
\end{Proposition}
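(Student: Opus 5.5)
The inequality $\mathcal S_{\mathrm{var}}\le\mathcal S$ is a weak-duality estimate obtained directly from the Donsker--Varadhan definition of $\mathcal H$. Equality at $x(\theta)$ then follows from the Gibbs variational principle (Proposition~\ref{prop:gibbs-variational-principle}) together with the Legendre correspondence (Proposition~\ref{prop:legendre}).

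\emph{Inequality.} Fix $x=(E,c)$ and let $\mathsf n\in\mathsf S(\mathcal L)$ satisfy $\mathsf n(H)=E$ and $\mathsf n(\Phi_a)=c_a$; if no such $\mathsf n$ exists, the supremum is $-\infty$ and there is nothing to prove. For every $\theta'\in\mathcal D$ the potential $F_{\theta'}$ belongs to $\mathcal E(\mathsf m_0)$, so it is an admissible test function in Definition~\ref{def:entropy}. This gives
\[
\mathcal H(\mathsf n\,\|\,\mathsf m_0)\ge \mathsf n(F_{\theta'})-\psi(\theta')
=-\langle\theta',x\rangle-\psi(\theta'),
\]
where the equality uses linearity of $\mathsf n$ and the constraints. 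Hence $-\mathcal H(\mathsf n\,\|\,\mathsf m_0)\le \psi(\theta')+\langle\theta',x\rangle$ for all $\theta'\in\mathcal D$. Taking the infimum over $\theta'$ and then the supremum over admissible $\mathsf n$ yields $\mathcal S_{\mathrm{var}}(x)\le\mathcal S(x)$. This step needs no extra hypothesis.

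\emph{Equality.} Let $x=x(\theta)$ with $\theta\in\mathcal D$. Here $x(\theta)$ is understood as the vector of equilibrium means $(\mathsf n_\theta^*(H),\mathsf n_\theta^*(\Phi_a))$, so $\mathsf n_\theta^*$ satisfies the constraints by construction and is a competitor in $\mathcal S_{\mathrm{var}}$. By Proposition~\ref{prop:gibbs-variational-principle}, $\mathsf n_\theta^*$ attains the supremum defining $\psi(\theta)=\log Z(\theta)$:
\[
\psi(\theta)=\mathsf n_\theta^*(F_\theta)-\mathcal H(\mathsf n_\theta^*\,\|\,\mathsf m_0)=-\langle\theta,x(\theta)\rangle-\mathcal H(\mathsf n_\theta^*\,\|\,\mathsf m_0).
\]
Consequently
\[
-\mathcal H(\mathsf n_\theta^*\,\|\,\mathsf m_0)=\psi(\theta)+\langle\theta,x(\theta)\rangle\ge\mathcal S(x(\theta)),
\]
where the last inequality holds simply because $\mathcal S$ is an infimum over $\mathcal D\ni\theta$. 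Combining this with $-\mathcal H(\mathsf n_\theta^*\,\|\,\mathsf m_0)\le\mathcal S_{\mathrm{var}}(x(\theta))\le\mathcal S(x(\theta))$ closes the chain. All three quantities are therefore equal, and the supremum in $\mathcal S_{\mathrm{var}}$ is attained at $\mathsf n_\theta^*$.

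\emph{Main obstacle.} The only delicate point is the identification of $x(\theta)$. The statement does not assume $\psi$ is differentiable; if one uses the definition $x(\theta)=-\nabla\psi(\theta)$, one must invoke Proposition~\ref{prop:partition-smoothness} to identify the gradient with the equilibrium means. I would therefore adopt the means-based definition, which needs no regularity, and remark that under local exponential regularity it coincides with $-\nabla\psi(\theta)$. With this choice the convexity argument of Proposition~\ref{prop:legendre} is not needed, because attainment of the infimum comes from the sandwich above.
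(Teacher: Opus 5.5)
Your proof is correct. For the inequality and for the identity $-\mathcal H(\mathsf n_\theta^*\,\|\,\mathsf m_0)=\psi(\theta)+\langle\theta,x(\theta)\rangle$, it is the paper's argument. The paper evaluates the change-of-reference identity at $\mathsf n_\theta^*$, which is exactly what underlies the attainment statement you cite from Proposition~\ref{prop:gibbs-variational-principle}.

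The difference is in the last step. The paper passes directly from $\psi(\theta)+\langle\theta,x(\theta)\rangle$ to $\mathcal S(x(\theta))$. As written, that step is the content of Proposition~\ref{prop:legendre}, so it tacitly uses convexity of $\mathcal D$, differentiability and convexity of $\psi$, and the identification $x(\theta)=-\nabla\psi(\theta)$ via local exponential regularity. None of these is among the stated hypotheses.

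Your sandwich $-\mathcal H(\mathsf n_\theta^*\,\|\,\mathsf m_0)\le\mathcal S_{\mathrm{var}}(x(\theta))\le\mathcal S(x(\theta))\le\psi(\theta)+\langle\theta,x(\theta)\rangle$ gets this for free. Weak duality plus attainment already force $\theta$ to realize the infimum. So you need only the Gibbs hypotheses and the means-based definition of $x(\theta)$, which is the definition the paper actually introduces in Subsection~\ref{subsec:legendre-duality}.

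Your route is therefore slightly more general and closes a small implicit gap. The paper's phrasing ties the result to the gradient/Legendre picture, which matters only if one wants to read $x(\theta)$ as $-\nabla\psi(\theta)$.
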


\begin{proof}
Let $\mathsf n$ satisfy the constraints associated with $x$. The
definition of relative entropy, applied to $F_\theta$, gives, for every
$\theta\in\mathcal D$,
\[
\mathcal H(\mathsf n\,\|\,\mathsf m_0)
\geq
\mathsf n(F_\theta)-\psi(\theta).
\]
Since
\[
\mathsf n(F_\theta)
=
-\langle\theta,x\rangle,
\]
we obtain
\[
-\mathcal H(\mathsf n\,\|\,\mathsf m_0)
\leq
\psi(\theta)+\langle\theta,x\rangle.
\]
Taking first the supremum over all states satisfying the constraints
and then the infimum over $\theta$ gives
\[
\mathcal S_{\mathrm{var}}(x)
\leq
\mathcal S(x).
\]

If $x=x(\theta)$, the equilibrium mean $\mathsf n_\theta^*$ satisfies
the required constraints. By the change-of-reference identity,
evaluated at $\mathsf n_\theta^*$,
\[
\mathcal H(\mathsf n_\theta^*\,\|\,\mathsf m_0)
=
\mathsf n_\theta^*(F_\theta)-\psi(\theta).
\]
Since
\[
\mathsf n_\theta^*(F_\theta)
=
-\langle\theta,x(\theta)\rangle,
\]
we obtain
\[
-\mathcal H(\mathsf n_\theta^*\,\|\,\mathsf m_0)
=
\psi(\theta)+\langle\theta,x(\theta)\rangle
=
\mathcal S(x(\theta)).
\]
This proves equality.
\end{proof}

\begin{rem}
Equality between $\mathcal S_{\mathrm{var}}$ and $\mathcal S$ for every
admissible extensive vector requires an additional duality or
constraint-qualification hypothesis. Without such a hypothesis, the
equality is guaranteed on the equilibrium image
\[
-\nabla\psi(\mathcal D)
\]
but need not hold on the entire boundary of the constraint domain.
\end{rem}
\section{Existence and Uniqueness of Equilibrium Means on Fréchet Manifolds}

\subsection{Analytic hypotheses and functional framework}

Let $(\mathcal M,\omega)$ be a weak symplectic Fréchet manifold, let
$\mathcal L$ be the unital vector lattice algebra of observables
introduced in the preceding section, and let $\mathsf m_0$ be a
normalized mean on $\mathcal L$. We fix a possibly unbounded Hamiltonian
$H\in\mathcal L$ and a finite family of possibly unbounded observables
\[
\Phi_1,\ldots,\Phi_N\in\mathcal L,
\]
for instance components of a moment map.

For
\[
(\beta,\lambda)\in\mathbb R\times\mathbb R^N,
\]
set
\[
F_{\beta,\lambda}
=
-\beta H-\sum_{a=1}^{N}\lambda^a\Phi_a.
\]
The admissible thermodynamic domain is
\[
\mathcal D
=
\left\{
(\beta,\lambda)\in(0,\infty)\times\mathbb R^N
\;\middle|\;
F_{\beta,\lambda}\in\mathcal E(\mathsf m_0)
\right\}.
\]

For $(\beta,\lambda)\in\mathcal D$, the central functional is the
\emph{free energy}
\[
\mathcal G_{\beta,\lambda}(\mathsf n)
=
\mathcal H(\mathsf n\,\|\,\mathsf m_0)
+
\beta\,\mathsf n(H)
+
\sum_{a=1}^{N}\lambda^a\mathsf n(\Phi_a),
\]
defined on $\mathsf S(\mathcal L)$. Equivalently,
\[
\mathcal G_{\beta,\lambda}(\mathsf n)
=
\mathcal H(\mathsf n\,\|\,\mathsf m_0)
-
\mathsf n(F_{\beta,\lambda}).
\]

\begin{Definition}[Coercivity and local exponential regularity]
\label{def:coercive}
We say that $(H,\Phi_a,\mathsf m_0)$ satisfies the analytic hypotheses
on a parameter domain if:
\begin{enumerate}
\item[\textup{(i)}] there exists a continuous proper functional
\[
V\colon\mathcal M\longrightarrow[0,\infty)
\]
and constants $c_1,c_2>0$ such that
\[
H(m)\geq c_1V(m)-c_2,
\qquad
m\in\mathcal M;
\]
\item[\textup{(ii)}] the partition functional is locally exponentially
regular on $\mathcal D$ in the sense of
Definition~\ref{def:local-exponential-regularity};
\item[\textup{(iii)}] whenever a compactness argument is used, the
sublevel sets of $\mathcal G_{\beta,\lambda}$ are compact in
$\mathsf S(\mathcal L)$ for the topology of pointwise convergence.
\end{enumerate}
\end{Definition}

\begin{rem}
The lower bound in Definition~\ref{def:coercive}\textup{(i)} is a
model-space coercivity condition. For general finitely additive means,
it does not by itself imply compactness of free-energy sublevel sets.
Such compactness must be verified separately, as required in
condition~\textup{(iii)}.
\end{rem}

\subsection{Admissible parameters and non-vanishing partition functions}
\label{subsec:admissible-partition}

For a thermodynamic potential
\[
F_{\beta,\lambda}
=
-\beta H-\sum_{a=1}^{N}\lambda^a\Phi_a,
\]
the positivity of the function $e^{F_{\beta,\lambda}}$ does not, for an
arbitrary normalized mean, automatically imply
\[
\mathsf m_0(e^{F_{\beta,\lambda}})>0.
\]
A positive normalized linear functional need not be faithful and may
vanish on a nonzero positive function.

This issue is particularly important for normalized means defined by
finite-dimensional cut-offs. Let
\[
Z_K(\beta,\lambda)
=
\frac{
\displaystyle
\int_{B_K}
e^{F_{\beta,\lambda}}\,\mathrm d\mu_K
}{
\displaystyle
\int_{B_K}\mathrm d\mu_K
}.
\]
Although
\[
Z_K(\beta,\lambda)>0
\qquad
\text{for every }K,
\]
the limit of the sequence may vanish. Positivity of all the
finite-dimensional partition functionals therefore does not by itself
imply strict positivity of the limiting partition functional.

\begin{Definition}[Admissible thermodynamic domain]
\label{def:admissible-domain}
The admissible thermodynamic domain is
\[
\mathcal D
=
\left\{
(\beta,\lambda)\in(0,\infty)\times\mathbb R^N
\;\middle|\;
F_{\beta,\lambda}\in\mathcal E(\mathsf m_0)
\right\}.
\]
Equivalently, $(\beta,\lambda)\in\mathcal D$ if
\[
e^{F_{\beta,\lambda}}\in\mathcal L,
\qquad
0<
\mathsf m_0(e^{F_{\beta,\lambda}})
<\infty,
\]
and
\[
fe^{F_{\beta,\lambda}}\in\mathcal L
\qquad
\text{for every }f\in\mathcal L.
\]
\end{Definition}

Thus strict positivity of the limiting partition functional is part of
the admissibility condition unless it follows from an independent
structural assumption, such as faithfulness of the reference mean.

\begin{Theorem}[Criteria for finiteness and non-vanishing]
\label{thm:partition-positive}
\label{prop:partition-nonvanishing}
Let $(\beta,\lambda)$ be such that
\[
e^{F_{\beta,\lambda}}\in\mathcal L
\]
and
\[
\mathsf m_0(e^{F_{\beta,\lambda}})<\infty.
\]
Then
\[
0<
\mathsf m_0(e^{F_{\beta,\lambda}})
<\infty
\]
under either of the following assumptions:
\begin{enumerate}
\item[\textup{(i)}] the reference mean $\mathsf m_0$ is faithful on
$\mathcal L$;
\item[\textup{(ii)}] the reference mean is defined by cut-offs, one has
\[
\mathsf m_0(e^{F_{\beta,\lambda}})
=
\lim_{K\to\infty}Z_K(\beta,\lambda),
\]
and
\[
\liminf_{K\to\infty}Z_K(\beta,\lambda)>0.
\]
\end{enumerate}
\end{Theorem}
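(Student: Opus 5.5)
The upper bound $\mathsf m_0(e^{F_{\beta,\lambda}})<\infty$ is a hypothesis. It is automatic in any case, since $\mathsf m_0$ is real-valued on $\mathcal L$ and $e^{F_{\beta,\lambda}}\in\mathcal L$. So the whole task is strict positivity, and I would treat the two criteria separately. In both cases the starting point is the same elementary fact: the function $g=e^{F_{\beta,\lambda}}$ is pointwise strictly positive on $\mathcal M$ and belongs to $\mathcal L$. Positivity of $\mathsf m_0$ (Definition~\ref{def:normalized-mean}(ii)) then gives $\mathsf m_0(g)\geq 0$, and it remains only to exclude $\mathsf m_0(g)=0$.

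For criterion (i), I would use faithfulness in its standard form: if $f\in\mathcal L$, $f\geq 0$ and $\mathsf m_0(f)=0$, then $f=0$ in $\mathcal L$. Suppose $\mathsf m_0(g)=0$. Since $g\geq 0$, faithfulness forces $g=0$. This contradicts $g(m)=e^{F_{\beta,\lambda}(m)}>0$ for every $m\in\mathcal M$, and $\mathcal M\neq\emptyset$ because $\mathsf m_0(1)=1$ requires $1\neq 0$ in $\mathcal L$. Hence $\mathsf m_0(g)>0$. The only point needing care is the meaning of ``$f=0$ in $\mathcal L$''. If $\mathcal L$ is a space of genuine functions, this is pointwise vanishing and the contradiction is immediate. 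If one prefers to regard $\mathcal L$ modulo a null ideal, I would state explicitly that faithfulness means $\mathsf m_0(f)>0$ for every $f\geq 0$ that is not identically zero, which is exactly what is used.

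For criterion (ii), the argument is a limit computation. By hypothesis $\mathsf m_0(g)=\lim_{K\to\infty}Z_K(\beta,\lambda)$. Since the limit exists, it equals the $\liminf$, so $\mathsf m_0(g)=\liminf_{K}Z_K(\beta,\lambda)>0$. Finiteness is again the assumed bound, or equivalently the existence of a finite limit. I would add a one-line remark that the $\liminf$ hypothesis is genuinely needed: each $Z_K>0$ because $g>0$ and $\int_{B_K}\mathrm d\mu_K>0$, but the preceding discussion in \S\ref{subsec:admissible-partition} shows this alone does not prevent $Z_K\to0$.

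I expect no serious obstacle. The only subtle step is fixing the definition of faithfulness in (i) so that it applies to the strictly positive, possibly unbounded element $e^{F_{\beta,\lambda}}$ of $\mathcal L$, rather than only to bounded observables in $\mathcal A_b$. I would make this explicit at the start of the proof.
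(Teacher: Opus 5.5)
Your proposal is correct and follows the paper's proof: for (i), strict pointwise positivity of $e^{F_{\beta,\lambda}}$ makes it a nonzero positive element of $\mathcal L$, so faithfulness gives a strictly positive mean; for (ii), the limit equals the $\liminf$, which is positive by assumption. Your added remarks are sound but not needed in the paper's argument, which takes finiteness as a hypothesis and treats faithfulness informally: finiteness is automatic because $\mathsf m_0$ is real-valued, $\mathcal M\neq\emptyset$, and faithfulness should be understood as applying to possibly unbounded elements of $\mathcal L$.
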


\begin{proof}
Finiteness is assumed. Since
\[
e^{F_{\beta,\lambda}}>0
\]
pointwise, it is a nonzero positive element of $\mathcal L$. Under
assumption~\textup{(i)}, faithfulness gives
\[
\mathsf m_0(e^{F_{\beta,\lambda}})>0.
\]

Under assumption~\textup{(ii)},
\[
\mathsf m_0(e^{F_{\beta,\lambda}})
=
\lim_{K\to\infty}Z_K(\beta,\lambda)
\geq
\liminf_{K\to\infty}Z_K(\beta,\lambda)>0.
\]
\end{proof}

\begin{rem}
The lower-bound condition in
Theorem~\ref{thm:partition-positive} is sufficient but not necessary. In
applications, it may be replaced by any estimate that directly yields
\[
\mathsf m_0(e^{F_{\beta,\lambda}})>0.
\]
\end{rem}

\subsection{Change of reference under exponential tilting}
\label{subsec:change-reference}

Let $F\in\mathcal E(\mathsf m_0)$, and let
\[
Z_F=\mathsf m_0(e^F).
\]
The exponentially tilted mean
\[
\mathsf m_F(f)
=
\frac{\mathsf m_0(fe^F)}{Z_F},
\qquad
f\in\mathcal L,
\]
is then well defined on $\mathcal L$.

\begin{Definition}[Translation compatibility]
\label{def:translation-compatibility}
The potential $F\in\mathcal E(\mathsf m_0)$ is said to be
\emph{translation compatible with $\mathsf m_0$} if
\[
\mathcal E(\mathsf m_F)+F
=
\mathcal E(\mathsf m_0).
\]
Equivalently, the affine map
\[
\tau_F\colon
\mathcal E(\mathsf m_F)
\longrightarrow
\mathcal E(\mathsf m_0),
\qquad
\tau_F(g)=g+F,
\]
is a bijection.
\end{Definition}

\begin{Theorem}[Change-of-reference identity]
\label{thm:change-reference}
Let $F\in\mathcal E(\mathsf m_0)$ be translation compatible with
$\mathsf m_0$. Then, for every
$\mathsf n\in\mathsf S(\mathcal L)$,
\[
\boxed{
\mathcal H(\mathsf n\,\|\,\mathsf m_F)
=
\mathcal H(\mathsf n\,\|\,\mathsf m_0)
-
\mathsf n(F)
+
\log Z_F.
}
\]
The identity is understood in $(-\infty,+\infty]$. Since
$\mathsf n(F)$ and $\log Z_F$ are finite, no indeterminate expression
occurs.
\end{Theorem}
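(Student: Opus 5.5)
The plan is to compute the supremum defining $\mathcal H(\mathsf n\,\|\,\mathsf m_F)$ directly from Definition~\ref{def:entropy} and to reparametrize the index set through the bijection $\tau_F$ of Definition~\ref{def:translation-compatibility}. The one computational identity needed is the following: for $g\in\mathcal E(\mathsf m_F)$ and $f=g+F$,
\[
\mathsf m_F(e^g)=\frac{\mathsf m_0(e^{g}e^F)}{Z_F}=\frac{\mathsf m_0(e^{g+F})}{Z_F}.
\]
Hence
\[
\log\mathsf m_F(e^g)=\log\mathsf m_0(e^{f})-\log Z_F .
\]
The pointwise identity $e^ge^F=e^{g+F}$ holds for real-valued functions. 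The membership $e^ge^F\in\mathcal L$ follows from admissibility of $F$, since $e^g\in\mathcal L$ and $F\in\mathcal E(\mathsf m_0)$ give $fe^F\in\mathcal L$ for every $f\in\mathcal L$. This makes the formula for $\mathsf m_F(e^g)$ legitimate. Both logarithms are finite: finiteness and positivity of $\mathsf m_0(e^{f})$ come from $f\in\mathcal E(\mathsf m_0)$, which translation compatibility guarantees.

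Next comes linearity of $\mathsf n$. Since $g,F\in\mathcal L$, we have $\mathsf n(g)=\mathsf n(f)-\mathsf n(F)$. Each term in the supremum then becomes
\[
\mathsf n(g)-\log\mathsf m_F(e^g)=\bigl(\mathsf n(f)-\log\mathsf m_0(e^f)\bigr)-\mathsf n(F)+\log Z_F .
\]
As $g$ ranges over $\mathcal E(\mathsf m_F)$, the function $f=\tau_F(g)$ ranges over all of $\mathcal E(\mathsf m_0)$, by bijectivity of $\tau_F$. Taking suprema on both sides therefore gives the boxed identity, because the constant $-\mathsf n(F)+\log Z_F$ is finite and passes through a supremum valued in $(-\infty,+\infty]$. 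Both sides are simultaneously $+\infty$ or simultaneously finite, so no indeterminacy arises.

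The main point requiring care is that surjectivity of $\tau_F$ is genuinely needed. Injectivity is automatic, but it would only give the inequality ``$\le$''. Surjectivity is exactly what supplies ``$\ge$'', and this is where the hypothesis enters.

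A smaller check is that $\mathsf m_F$ is indeed a normalized mean on $\mathcal L$, so that $\mathcal E(\mathsf m_F)$ and $\mathcal H(\cdot\,\|\,\mathsf m_F)$ make sense. This follows from the remark after Definition~\ref{def:exponential-tilt} together with the condition $fe^F\in\mathcal L$.
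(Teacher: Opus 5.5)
Your proof is correct and is essentially the paper's argument. Both use the identity $\mathsf m_F(e^g)=\mathsf m_0(e^{g+F})/Z_F$, the substitution $h=g+F$ justified by the bijection $\tau_F$, and linearity of $\mathsf n$ to pull the finite constant $-\mathsf n(F)+\log Z_F$ out of the supremum.

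One small refinement of your closing remark: what yields ``$\le$'' is that $\tau_F$ maps $\mathcal E(\mathsf m_F)$ into $\mathcal E(\mathsf m_0)$, not injectivity. That inclusion follows automatically from the definitions, using $e^g, fe^g\in\mathcal L$ and $\mathsf m_0(e^{g+F})=Z_F\,\mathsf m_F(e^g)>0$. So the only substantive content of translation compatibility is the reverse inclusion $\mathcal E(\mathsf m_0)\subseteq\mathcal E(\mathsf m_F)+F$, which gives ``$\ge$''.
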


\begin{proof}
For every
\[
g\in\mathcal E(\mathsf m_F),
\]
the definition of the tilted mean gives
\[
\mathsf m_F(e^g)
=
\frac{
\mathsf m_0(e^{g+F})
}{
Z_F
}.
\]
Consequently,
\begin{align*}
\mathcal H(\mathsf n\,\|\,\mathsf m_F)
&=
\sup_{g\in\mathcal E(\mathsf m_F)}
\left\{
\mathsf n(g)-\log\mathsf m_F(e^g)
\right\}
\\
&=
\sup_{g\in\mathcal E(\mathsf m_F)}
\left\{
\mathsf n(g)
-
\log\mathsf m_0(e^{g+F})
+
\log Z_F
\right\}.
\end{align*}
Set
\[
h=g+F.
\]
By translation compatibility, the map
\[
g\longmapsto h=g+F
\]
is a bijection from $\mathcal E(\mathsf m_F)$ onto
$\mathcal E(\mathsf m_0)$. Moreover,
\[
\mathsf n(g)
=
\mathsf n(h-F)
=
\mathsf n(h)-\mathsf n(F).
\]
Therefore
\begin{align*}
\mathcal H(\mathsf n\,\|\,\mathsf m_F)
&=
\sup_{h\in\mathcal E(\mathsf m_0)}
\left\{
\mathsf n(h)
-
\mathsf n(F)
-
\log\mathsf m_0(e^h)
+
\log Z_F
\right\}
\\
&=
\mathcal H(\mathsf n\,\|\,\mathsf m_0)
-
\mathsf n(F)
+
\log Z_F.
\end{align*}
\end{proof}

\begin{Proposition}[Free-energy decomposition]
\label{cor:free-energy-decomposition}
Let
\[
F_{\beta,\lambda}
=
-\beta H-\sum_{a=1}^{N}\lambda^a\Phi_a
\]
be translation compatible with $\mathsf m_0$, and let
\[
\mathsf n_{\beta,\lambda}^*
=
\mathsf m_{F_{\beta,\lambda}}.
\]
Then
\[
\boxed{
\mathcal G_{\beta,\lambda}(\mathsf n)
=
\mathcal H(
\mathsf n\,\|\,\mathsf n_{\beta,\lambda}^*
)
-
\log Z(\beta,\lambda).
}
\]
\end{Proposition}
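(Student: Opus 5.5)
The plan is to read the decomposition directly off Theorem~\ref{thm:change-reference}, applied with $F=F_{\beta,\lambda}$. Translation compatibility with $\mathsf m_0$ is assumed in the statement, and $F_{\beta,\lambda}\in\mathcal E(\mathsf m_0)$ holds because we work at $(\beta,\lambda)\in\mathcal D$. We also have $Z_F=\mathsf m_0(e^{F_{\beta,\lambda}})=Z(\beta,\lambda)$ and $\mathsf m_F=\mathsf n^*_{\beta,\lambda}$ by definition. For every $\mathsf n\in\mathsf S(\mathcal L)$ the theorem therefore gives
\[
\mathcal H(\mathsf n\,\|\,\mathsf n^*_{\beta,\lambda})
=
\mathcal H(\mathsf n\,\|\,\mathsf m_0)-\mathsf n(F_{\beta,\lambda})+\log Z(\beta,\lambda).
\]

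Next I substitute the equivalent form of the free energy, $\mathcal G_{\beta,\lambda}(\mathsf n)=\mathcal H(\mathsf n\,\|\,\mathsf m_0)-\mathsf n(F_{\beta,\lambda})$. This form follows from linearity of $\mathsf n$:
\[
\mathsf n(F_{\beta,\lambda})=-\beta\,\mathsf n(H)-\sum_a\lambda^a\mathsf n(\Phi_a).
\]
Substituting yields $\mathcal H(\mathsf n\,\|\,\mathsf n^*_{\beta,\lambda})=\mathcal G_{\beta,\lambda}(\mathsf n)+\log Z(\beta,\lambda)$. Rearranging gives the boxed identity.

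The only point needing care is the arithmetic in $(-\infty,+\infty]$. Both relative entropies may equal $+\infty$. However, $\mathsf n(F_{\beta,\lambda})$ is a finite real number because $F_{\beta,\lambda}\in\mathcal L$ and $\mathsf n$ is real-valued on $\mathcal L$. Also $\log Z(\beta,\lambda)$ is finite because $0<Z<\infty$ on $\mathcal D$. So only finite quantities are added to or subtracted from a possibly infinite term. Consequently $\mathcal H(\mathsf n\,\|\,\mathsf n^*_{\beta,\lambda})=+\infty$ exactly when $\mathcal G_{\beta,\lambda}(\mathsf n)=+\infty$, and otherwise the identity holds between real numbers.

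I expect no real obstacle here. The substantive input, the bijection $\mathcal E(\mathsf m_F)\to\mathcal E(\mathsf m_0)$ underlying the reindexing of the supremum, has already been handled in Theorem~\ref{thm:change-reference}.
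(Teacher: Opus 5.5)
Your proposal is correct and is the same argument as the paper's: apply Theorem~\ref{thm:change-reference} with $F=F_{\beta,\lambda}$, then substitute $\mathcal G_{\beta,\lambda}(\mathsf n)=\mathcal H(\mathsf n\,\|\,\mathsf m_0)-\mathsf n(F_{\beta,\lambda})$ and rearrange. Your remark on arithmetic in $(-\infty,+\infty]$ is correct and harmless, and it matches the finiteness comment already made in the theorem.
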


\begin{proof}
Theorem~\ref{thm:change-reference} gives
\[
\mathcal H(
\mathsf n\,\|\,\mathsf n_{\beta,\lambda}^*
)
=
\mathcal H(\mathsf n\,\|\,\mathsf m_0)
-
\mathsf n(F_{\beta,\lambda})
+
\log Z(\beta,\lambda).
\]
Since
\[
\mathcal G_{\beta,\lambda}(\mathsf n)
=
\mathcal H(\mathsf n\,\|\,\mathsf m_0)
-
\mathsf n(F_{\beta,\lambda}),
\]
the result follows.
\end{proof}

\begin{Proposition}[Existence and uniqueness of the minimizer]
\label{cor:change-reference-minimizer}
Assume that $F_{\beta,\lambda}$ is translation compatible with
$\mathsf m_0$ and that the relative entropy with reference
$\mathsf n_{\beta,\lambda}^*$ satisfies the separation and local
perturbation hypotheses of
Proposition~\ref{prop:entropy-properties}. Then
\[
\mathcal G_{\beta,\lambda}(\mathsf n)
\geq
-\log Z(\beta,\lambda),
\]
with equality if and only if
\[
\mathsf n=\mathsf n_{\beta,\lambda}^*.
\]
\end{Proposition}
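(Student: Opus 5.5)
The plan is to read Proposition~\ref{cor:change-reference-minimizer} off the free-energy decomposition of Proposition~\ref{cor:free-energy-decomposition}, combined with the nonnegativity and separation statement of Proposition~\ref{prop:entropy-properties} applied with reference mean $\mathsf n_{\beta,\lambda}^*$ in place of $\mathsf m_0$.

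\textbf{Step 1: rewrite the free energy.} Since $F_{\beta,\lambda}$ is translation compatible, Proposition~\ref{cor:free-energy-decomposition} gives, for every $\mathsf n\in\mathsf S(\mathcal L)$,
\[
\mathcal G_{\beta,\lambda}(\mathsf n)
=
\mathcal H(\mathsf n\,\|\,\mathsf n_{\beta,\lambda}^*)
-\log Z(\beta,\lambda).
\]
This identity holds in $(-\infty,+\infty]$. The quantity $\log Z(\beta,\lambda)$ is finite because $F_{\beta,\lambda}\in\mathcal E(\mathsf m_0)$ forces $0<Z<\infty$. The proposition is thereby reduced to a statement about the relative entropy with respect to $\mathsf n_{\beta,\lambda}^*$.

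\textbf{Step 2: the lower bound.} I would verify directly that $\mathcal H(\cdot\,\|\,\mathsf n_{\beta,\lambda}^*)\ge 0$, so as not to rely on any hidden assumption.
\begin{itemize}
\item The constant $0$ lies in $\mathcal E(\mathsf n_{\beta,\lambda}^*)$. Indeed, $e^0=1\in\mathcal L$, $\mathsf n^*(1)=1$, and $f\cdot 1\in\mathcal L$ for every $f\in\mathcal L$.
\item Taking $g=0$ in the supremum gives the term $\mathsf n(0)-\log 1=0$.
\end{itemize}
Hence $\mathcal G_{\beta,\lambda}(\mathsf n)\ge-\log Z(\beta,\lambda)$.

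\textbf{Step 3: equality at the tilted mean.} Equality at $\mathsf n=\mathsf n_{\beta,\lambda}^*$ means $\mathcal H(\mathsf n^*\,\|\,\mathsf n^*)=0$. I would prove this with the Jensen-type bound $\mathsf n^*(g)\le\log\mathsf n^*(e^g)$ for $g\in\mathcal E(\mathsf n^*)$. For a finitely additive positive functional, this follows from the pointwise inequality $e^{g}\ge e^{c}(1+g-c)$ with $c=\mathsf n^*(g)$. Applying positivity and normalization gives $\mathsf n^*(e^g)\ge e^{c}$, and every term involved lies in $\mathcal L$. Alternatively, equality follows from the "only if" half of item~1 of Proposition~\ref{prop:entropy-properties}, read as $\mathcal H(\mathsf m\,\|\,\mathsf m)=0$. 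Since the proof there only treats the converse direction, I would include the Jensen argument explicitly.

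\textbf{Step 4: uniqueness.} Suppose $\mathcal G_{\beta,\lambda}(\mathsf n)=-\log Z$. Then $\mathcal H(\mathsf n\,\|\,\mathsf n^*)=0$, and the separation argument in the proof of Proposition~\ref{prop:entropy-properties} applies to the reference $\mathsf n^*$ under the assumed separating subspace $\mathcal T$.
\begin{itemize}
\item For $h\in\mathcal T$ and small $t$, one has $t\,\mathsf n(h)\le\log\mathsf n^*(e^{th})$.
\item Dividing by $t$ of either sign and using the assumed derivative identity $\frac{\mathrm d}{\mathrm dt}\big|_{t=0}\log\mathsf n^*(e^{th})=\mathsf n^*(h)$ gives $\mathsf n(h)=\mathsf n^*(h)$.
\item Since $\mathcal T$ separates normalized means, $\mathsf n=\mathsf n^*$.
\end{itemize}

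\textbf{Main obstacle.} The only delicate point is Step 3, namely that $\mathcal H(\mathsf n^*\,\|\,\mathsf n^*)$ vanishes rather than merely being nonnegative. The Jensen argument via the tangent-line inequality handles it using only positivity, so no $\sigma$-additivity is needed.
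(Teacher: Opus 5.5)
Your proposal is correct and follows the paper's proof. The paper likewise rewrites $\mathcal G_{\beta,\lambda}(\mathsf n)+\log Z(\beta,\lambda)$ as $\mathcal H(\mathsf n\,\|\,\mathsf n_{\beta,\lambda}^*)$ via Proposition~\ref{cor:free-energy-decomposition}, then uses nonnegativity and the separation property with reference $\mathsf n_{\beta,\lambda}^*$. Your checks that $0\in\mathcal E(\mathsf n_{\beta,\lambda}^*)$ and that $\mathcal H(\mathsf n_{\beta,\lambda}^*\,\|\,\mathsf n_{\beta,\lambda}^*)=0$ (by the tangent-line Jensen bound) only make explicit what the paper leaves implicit; the paper does state the Jensen step in the full proof of Proposition~\ref{prop:entropy-properties} in Section~\ref{sec:full-proofs}, and it is the ``if'' half of item~1, not the ``only if'' half.
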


\begin{proof}
By Proposition~\ref{cor:free-energy-decomposition},
\[
\mathcal G_{\beta,\lambda}(\mathsf n)
+
\log Z(\beta,\lambda)
=
\mathcal H(
\mathsf n\,\|\,\mathsf n_{\beta,\lambda}^*
)
\geq 0.
\]
Equality holds if and only if
\[
\mathcal H(
\mathsf n\,\|\,\mathsf n_{\beta,\lambda}^*
)=0.
\]
The separation property of
Proposition~\ref{prop:entropy-properties}, applied with reference
$\mathsf n_{\beta,\lambda}^*$, then implies
\[
\mathsf n=\mathsf n_{\beta,\lambda}^*.
\]
\end{proof}

\subsection{Existence theorem for equilibrium means}

For the sake of clarity, we collect the preceding statements in a
single theorem.

\begin{Theorem}[Existence and uniqueness of equilibrium means]
\label{thm:existence}
Let $(\beta,\lambda)\in\mathcal D$. Assume that
$F_{\beta,\lambda}$ is translation compatible with $\mathsf m_0$ and
that the relative entropy with reference
$\mathsf n_{\beta,\lambda}^*$ satisfies the separation and local
perturbation hypotheses of
Proposition~\ref{prop:entropy-properties}. Then there exists a unique
normalized mean
\[
\mathsf n_{\beta,\lambda}^{*}\in\mathsf S(\mathcal L)
\]
minimizing $\mathcal G_{\beta,\lambda}$. It is explicitly given by
\[
\boxed{
\mathsf n_{\beta,\lambda}^{*}(f)
=
\frac{
\mathsf m_0\left(
f\,e^{-\beta H-\sum_{a=1}^{N}\lambda^a\Phi_a}
\right)
}{
\mathsf m_0\left(
e^{-\beta H-\sum_{a=1}^{N}\lambda^a\Phi_a}
\right)
},
\qquad
f\in\mathcal L.
}
\]
Moreover,
\[
\inf_{\mathsf n\in\mathsf S(\mathcal L)}
\mathcal G_{\beta,\lambda}(\mathsf n)
=
-\log Z(\beta,\lambda).
\]

If the local exponential regularity hypothesis of
Definition~\ref{def:local-exponential-regularity} holds, then
\[
\psi(\beta,\lambda)
=
\log Z(\beta,\lambda)
\]
is smooth on $\mathcal D$. Its Hessian is the covariance form described
in Subsection~\ref{subsec:covariance-hessian}. Consequently, $\psi$ is
convex on every convex subset of $\mathcal D$.

If the covariance form is positive definite in every nonzero parameter
direction on a convex subset $U\subset\mathcal D$, then $\psi$ is
strictly convex on $U$. More generally, at each
$\theta\in\mathcal D$, its Hessian induces a positive-definite form on
\[
T_\theta\mathcal D\big/\ker D^2\psi(\theta).
\]

The thermodynamic entropy is the concave Legendre--Fenchel dual
\[
\mathcal S(E,c)
=
\inf_{(\beta,\lambda)\in\mathcal D}
\left\{
\log Z(\beta,\lambda)
+
\beta E
+
\sum_{a=1}^{N}\lambda^a c_a
\right\}.
\]
It is concave on its effective domain. If $\mathcal D$ is convex, the
Legendre correspondence and its relation with constrained variational
entropy are those described in
Propositions~\ref{prop:legendre} and~\ref{prop:constrained-entropy}.
\end{Theorem}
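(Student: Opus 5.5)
The theorem collects earlier results, so the plan is to assemble them in order and check that the hypotheses match at each step.

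\textbf{Step 1: existence, uniqueness and the infimum.} Since $(\beta,\lambda)\in\mathcal D$, we have $F_{\beta,\lambda}\in\mathcal E(\mathsf m_0)$. By Definition~\ref{def:exponential-tilt}, the tilt $\mathsf n^*_{\beta,\lambda}=\mathsf m_{F_{\beta,\lambda}}$ is therefore a well-defined normalized mean on $\mathcal L$: linearity and positivity pass to $f\mapsto\mathsf m_0(fe^{F})$, and we divide by $Z(\beta,\lambda)\in(0,\infty)$. Translation compatibility gives the change-of-reference identity, Theorem~\ref{thm:change-reference}. From it, Proposition~\ref{cor:free-energy-decomposition} gives $\mathcal G_{\beta,\lambda}(\mathsf n)=\mathcal H(\mathsf n\,\|\,\mathsf n^*_{\beta,\lambda})-\log Z(\beta,\lambda)$. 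Proposition~\ref{cor:change-reference-minimizer}, applied under the assumed separation and local-perturbation hypotheses for the reference $\mathsf n^*_{\beta,\lambda}$, then yields $\mathcal G_{\beta,\lambda}\ge-\log Z$, with equality exactly at $\mathsf n^*_{\beta,\lambda}$. This proves existence and uniqueness of the minimizer, its explicit formula, and the identity $\inf\mathcal G_{\beta,\lambda}=-\log Z$. One point needs checking here: equality is actually attained, i.e.\ $\mathcal H(\mathsf n^*\,\|\,\mathsf n^*)=0$. This follows by applying Jensen's inequality to each $g$ in the supremum, $\mathsf n^*(g)\le\log\mathsf n^*(e^g)$; it is essentially what part~(1) of Proposition~\ref{prop:entropy-properties} asserts.

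\textbf{Step 2: regularity and convexity of $\psi$.} Under local exponential regularity, Proposition~\ref{prop:partition-smoothness} gives that $\mathcal D$ is open and that $Z$ and $\psi$ are smooth, with derivatives equal to moments. Differentiating $\log Z$ twice then gives $D^2\psi=D^2Z/Z-(DZ/Z)^{\otimes2}$, which is the covariance form of Subsection~\ref{subsec:covariance-hessian}. Proposition~\ref{prop:strict-convexity} then gives convexity on convex subsets $U$ and strict convexity when the variance is positive in every nonzero direction. It also gives positive definiteness on the quotient $T_\theta\mathcal D/\ker D^2\psi(\theta)$, using the fact that the kernel of a positive semidefinite form equals its null cone, via Cauchy--Schwarz.

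\textbf{Step 3: the entropy.} Concavity of $\mathcal S$ holds because $\mathcal S$ is an infimum of affine functions of $(E,c)$. This needs no convexity of $\mathcal D$; only the effective domain $\{\mathcal S>-\infty\}$ must be convex, and that follows from concavity of an extended-valued function. When $\mathcal D$ is convex, the Legendre correspondence and its comparison with the constrained variational entropy are exactly Propositions~\ref{prop:legendre} and~\ref{prop:constrained-entropy}. Their hypotheses hold: $\psi$ is differentiable and convex by Step~2, and the Gibbs hypotheses hold by Step~1.

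\textbf{Main obstacle.} The only nontrivial point is Step~2: justifying differentiation under the finitely additive mean $\mathsf m_0$. No dominated convergence is available, so the remainder must be bounded by a fixed element of $\mathcal L$ with finite mean, and positivity then controls the error. I would rely on the Taylor-remainder argument of Proposition~\ref{prop:partition-smoothness}. The bound $|\mathcal O|^k e^{t|\mathcal O|}\le C_{k,\varepsilon}\, e^{\varepsilon\sum_j|\mathcal O_j|}$, multiplied by $e^{F_\theta}$, together with the regularity hypothesis, supplies the needed dominating element and also the continuity of the derivatives.
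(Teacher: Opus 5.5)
Your proposal is correct and follows essentially the same route as the paper. Existence, uniqueness and the value of the infimum come from the decomposition $\mathcal G_{\beta,\lambda}(\mathsf n)=\mathcal H(\mathsf n\,\|\,\mathsf n_{\beta,\lambda}^*)-\log Z(\beta,\lambda)$, together with $\mathcal H(\mathsf n_{\beta,\lambda}^*\,\|\,\mathsf n_{\beta,\lambda}^*)=0$ (via Jensen, exactly as in Section~\ref{sec:full-proofs}) and separation; the covariance Hessian comes from Proposition~\ref{prop:partition-smoothness}; and concavity of $\mathcal S$ holds because it is an infimum of affine functions.

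One small overstatement: Step~1 verifies the Gibbs hypotheses only at the fixed $(\beta,\lambda)$. Invoking Proposition~\ref{prop:constrained-entropy} at an arbitrary $\theta\in\mathcal D$ therefore also needs translation compatibility at $\theta$, which the standing assumption (A4) of Section~\ref{sec:full-proofs} supplies.
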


\begin{proof}
Existence, uniqueness, and the value of the minimum follow from
Proposition~\ref{cor:change-reference-minimizer}. Under local
exponential regularity,
Proposition~\ref{prop:partition-smoothness} gives
\[
\partial_\beta\log Z(\beta,\lambda)
=
-\mathsf n_{\beta,\lambda}^{*}(H)
\]
and
\[
\partial_{\lambda^a}\log Z(\beta,\lambda)
=
-\mathsf n_{\beta,\lambda}^{*}(\Phi_a).
\]
A second differentiation gives
\[
D^2\log Z(\beta,\lambda)
[\dot\theta_1,\dot\theta_2]
=
\operatorname{Cov}_{\mathsf n_{\beta,\lambda}^{*}}
\left(
\mathcal O_{\dot\theta_1},
\mathcal O_{\dot\theta_2}
\right).
\]
Convexity and the strict-convexity criterion now follow from
Proposition~\ref{prop:strict-convexity}.

Finally, $\mathcal S$ is an infimum of affine functions of $(E,c)$ and
is therefore concave.
\end{proof}

\begin{rem}
The compactness assumption in
Definition~\ref{def:coercive}\textup{(iii)} is not needed in
Theorem~\ref{thm:existence}, because the minimizer is explicitly
constructed by exponential tilting. It becomes relevant in variational
existence arguments where no compatible exponential tilt is known in
advance.
\end{rem}

\subsection{Invariance under Hamiltonian flows}

\begin{Theorem}[Stationarity of equilibrium means]
\label{thm:stationarity}
Assume that the Hamiltonian vector field $X_H$ generates a global flow
$\Phi_t^H$ on $\mathcal M$ preserving $\mathsf m_0$ and $\mathcal L$.
Assume moreover that
\[
H\circ\Phi_t^H=H
\]
and
\[
\Phi_a\circ\Phi_t^H=\Phi_a,
\qquad
a=1,\ldots,N.
\]
Then every equilibrium mean of Theorem~\ref{thm:existence} satisfies
\[
\mathsf n_{\beta,\lambda}^{*}(f\circ\Phi_t^H)
=
\mathsf n_{\beta,\lambda}^{*}(f)
\]
for all $f\in\mathcal L$ and $t\in\mathbb R$ for which
$f\circ\Phi_t^H\in\mathcal L$.
\end{Theorem}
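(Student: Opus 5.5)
The plan is to compute directly from the explicit formula for $\mathsf n_{\beta,\lambda}^{*}$ in Theorem~\ref{thm:existence}, instead of arguing through the variational characterization. Fix $t\in\mathbb R$ and $f\in\mathcal L$ with $f\circ\Phi_t^H\in\mathcal L$, and write $F=F_{\beta,\lambda}=-\beta H-\sum_a\lambda^a\Phi_a$. The first step is to observe that the conservation hypotheses $H\circ\Phi_t^H=H$ and $\Phi_a\circ\Phi_t^H=\Phi_a$ give, by linearity, $F\circ\Phi_t^H=F$. Composing with the exponential then yields $e^{F}\circ\Phi_t^H=e^{F}$.

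The second step is the pointwise identity
\[
(f\circ\Phi_t^H)\,e^{F}=(f\circ\Phi_t^H)\,(e^{F}\circ\Phi_t^H)=(fe^{F})\circ\Phi_t^H .
\]
Here $fe^{F}\in\mathcal L$ because $F\in\mathcal E(\mathsf m_0)$ (Definition~\ref{def:admissible-potential}). Moreover $(fe^F)\circ\Phi_t^H\in\mathcal L$, either because the flow preserves $\mathcal L$, or directly because it equals $(f\circ\Phi_t^H)e^F$ and $f\circ\Phi_t^H\in\mathcal L$ (again by admissibility of $F$). So the invariance definition for $\mathsf m_0$ applies to the observable $fe^F$ and gives
\[
\mathsf m_0\bigl((f\circ\Phi_t^H)e^F\bigr)=\mathsf m_0\bigl((fe^F)\circ\Phi_t^H\bigr)=\mathsf m_0(fe^F).
\]

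The third step is to divide by $Z(\beta,\lambda)=\mathsf m_0(e^F)$, which lies in $(0,\infty)$ since $(\beta,\lambda)\in\mathcal D$. This gives $\mathsf n_{\beta,\lambda}^{*}(f\circ\Phi_t^H)=\mathsf n_{\beta,\lambda}^{*}(f)$. The denominator is unchanged by the flow, both trivially since $e^F$ is invariant, and consistently with the invariance of $\mathsf m_0$.

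There is no real analytic obstacle. The only point needing care is membership bookkeeping: making sure that each function to which $\mathsf m_0$ is applied, in particular $(fe^F)\circ\Phi_t^H$, lies in $\mathcal L$, so that the invariance definition (stated only for compositions remaining in $\mathcal L$) applies. I would settle this using the admissibility clause $ge^F\in\mathcal L$ for all $g\in\mathcal L$ together with the assumed $\Phi_t^H$-stability of $\mathcal L$. Global existence of the flow is used only so that $\Phi_t^H$ is defined for every $t$. No hypothesis from the uniqueness part of Theorem~\ref{thm:existence} is needed beyond the explicit formula.
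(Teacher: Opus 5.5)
Your proposal is correct and follows the paper's proof step for step. You use $F_{\beta,\lambda}\circ\Phi_t^H=F_{\beta,\lambda}$, rewrite $(f\circ\Phi_t^H)e^{F_{\beta,\lambda}}$ as $(fe^{F_{\beta,\lambda}})\circ\Phi_t^H$, apply the invariance of $\mathsf m_0$, and divide by $Z(\beta,\lambda)$. Your check that every function to which $\mathsf m_0$ is applied lies in $\mathcal L$, using the admissibility clause $ge^{F}\in\mathcal L$, is slightly more careful than the paper, which leaves this implicit.
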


\begin{proof}
Since the Hamiltonian and the constraint observables are invariant,
\[
F_{\beta,\lambda}\circ\Phi_t^H
=
F_{\beta,\lambda}.
\]
Consequently,
\begin{align*}
\mathsf n_{\beta,\lambda}^{*}(f\circ\Phi_t^H)
&=
\frac{
\mathsf m_0\left(
(f\circ\Phi_t^H)e^{F_{\beta,\lambda}}
\right)
}{
Z(\beta,\lambda)
}
\\
&=
\frac{
\mathsf m_0\left(
(fe^{F_{\beta,\lambda}})\circ\Phi_t^H
\right)
}{
Z(\beta,\lambda)
}
\\
&=
\frac{
\mathsf m_0\left(
fe^{F_{\beta,\lambda}}
\right)
}{
Z(\beta,\lambda)
}
\\
&=
\mathsf n_{\beta,\lambda}^{*}(f).
\end{align*}
\end{proof}

\begin{rem}
For an autonomous Hamiltonian flow, the identity
\[
H\circ\Phi_t^H=H
\]
holds whenever the flow and the Hamiltonian calculus are defined. It is
displayed explicitly above to emphasize that the entire exponential
weight must be preserved.
\end{rem}

\begin{rem}
Stationarity alone does not imply a Kubo--Martin--Schwinger boundary
condition. A KMS-type identity requires an additional analytic or
Poisson-algebraic structure and must be proved separately.
\end{rem}

\subsection{A classical Poisson--KMS identity}
\label{subsec:poisson-kms}

Stationarity of an equilibrium mean does not by itself imply the
operator-algebraic Kubo--Martin--Schwinger boundary condition. In the
present commutative Hamiltonian setting, the appropriate analogue is
instead formulated in terms of the Poisson bracket.

Let $\mathcal P\subset\mathcal L$ be a unital Poisson algebra of smooth
observables. We use the convention
\[
\{f,g\}=X_f(g).
\]

\begin{Definition}[Poisson-invariant reference mean]
\label{def:poisson-invariant-mean}
A normalized mean $\mathsf m_0$ on $\mathcal L$ is called
\emph{Poisson invariant on $\mathcal P$} if
\[
\mathsf m_0(\{f,g\})=0
\]
for every $f,g\in\mathcal P$ for which the bracket belongs to
$\mathcal L$.
\end{Definition}

This condition is the algebraic counterpart of integration by parts
with respect to a Liouville-type reference measure.

\begin{Definition}[Classical Poisson--KMS state]
\label{def:poisson-kms}
Let $H\in\mathcal P$ and let $\beta>0$. A normalized mean $\mathsf n$ is
called a \emph{classical Poisson--KMS state at inverse temperature
$\beta$} if
\[
\mathsf n(\{f,g\})
=
\beta\,\mathsf n\bigl(g\{f,H\}\bigr)
\]
for every pair of observables $f,g\in\mathcal P$ for which all the terms
are defined.
\end{Definition}

In the presence of additional conserved constraints, the corresponding
identity contains their Hamiltonian generators.

\begin{Theorem}[Poisson--KMS identity for equilibrium means]
\label{thm:poisson-kms}
Assume that $\mathsf m_0$ is Poisson invariant on $\mathcal P$. Let
\[
F_{\beta,\lambda}
=
-\beta H-\sum_{a=1}^{N}\lambda^a\Phi_a
\]
be an admissible potential, where
\[
H,\Phi_1,\ldots,\Phi_N\in\mathcal P.
\]
Assume that, for every pair $f,g\in\mathcal P$ under consideration,
\[
ge^{F_{\beta,\lambda}}\in\mathcal P,
\]
that all products and brackets occurring below belong to $\mathcal L$,
and that the usual Leibniz and chain rules are valid. Then the
equilibrium mean $\mathsf n_{\beta,\lambda}^*$ satisfies
\[
\mathsf n_{\beta,\lambda}^*(\{f,g\})
=
\beta\,
\mathsf n_{\beta,\lambda}^*
\bigl(g\{f,H\}\bigr)
+
\sum_{a=1}^{N}\lambda^a
\mathsf n_{\beta,\lambda}^*
\bigl(g\{f,\Phi_a\}\bigr).
\]

In particular, in the absence of additional constraints,
\[
\mathsf n_{\beta}^*(\{f,g\})
=
\beta\,
\mathsf n_{\beta}^*
\bigl(g\{f,H\}\bigr).
\]
Thus $\mathsf n_\beta^*$ is a classical Poisson--KMS state at inverse
temperature $\beta$.
\end{Theorem}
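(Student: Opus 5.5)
The plan is to test the Poisson invariance of $\mathsf m_0$ (Definition~\ref{def:poisson-invariant-mean}) against the pair $(f,\,ge^{F_{\beta,\lambda}})$ and then divide by the partition function. Write $F=F_{\beta,\lambda}$ and $Z=Z(\beta,\lambda)=\mathsf m_0(e^F)$. Admissibility gives $0<Z<\infty$, and by hypothesis $ge^F\in\mathcal P$. So for $f,g\in\mathcal P$ Poisson invariance applies and yields
\[
\mathsf m_0\bigl(\{f,ge^F\}\bigr)=0 .
\]

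Next, expand the bracket using the Leibniz and chain rules, which the statement assumes. With the convention $\{f,g\}=X_f(g)$, we get
\[
\{f,ge^F\}=\{f,g\}e^F+g\,X_f(e^F)=\{f,g\}e^F+g\,e^F\{f,F\}.
\]
Since the bracket is bilinear,
\[
\{f,F\}=-\beta\{f,H\}-\sum_{a=1}^{N}\lambda^a\{f,\Phi_a\}.
\]
All products appearing here belong to $\mathcal L$ by assumption. Linearity of $\mathsf m_0$ therefore gives
\[
\mathsf m_0\bigl(\{f,g\}e^F\bigr)
=\beta\,\mathsf m_0\bigl(g\{f,H\}e^F\bigr)
+\sum_{a=1}^{N}\lambda^a\,\mathsf m_0\bigl(g\{f,\Phi_a\}e^F\bigr).
\]

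Dividing by $Z$ and recognising each quotient as the tilted mean $\mathsf n_{\beta,\lambda}^*=\mathsf m_F$ of Definition~\ref{def:exponential-tilt} gives the stated identity. For the special case, set $N=0$ (equivalently $\lambda=0$). This leaves $\mathsf n_\beta^*(\{f,g\})=\beta\,\mathsf n_\beta^*(g\{f,H\})$, which is exactly Definition~\ref{def:poisson-kms}.

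The only delicate step is bookkeeping of domains rather than any real analysis. One must make sure that each term $\{f,g\}e^F$, $g\{f,H\}e^F$, $g\{f,\Phi_a\}e^F$ lies in $\mathcal L$, so that linearity of $\mathsf m_0$ may be used to split the mean of the expanded bracket. One must also make sure that the chain rule $X_f(e^F)=e^FX_f(F)$ holds for unbounded $F$. Both points are supplied explicitly by the hypotheses of the theorem, so I would simply invoke them at the two places indicated.
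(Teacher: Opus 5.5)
Your proposal is correct and follows essentially the same route as the paper. You apply Poisson invariance of $\mathsf m_0$ to the pair $(f,\,ge^{F_{\beta,\lambda}})$, expand the bracket with the Leibniz and chain rules, split $\{f,F_{\beta,\lambda}\}$ by bilinearity, and divide by $Z(\beta,\lambda)$, with the domain bookkeeping supplied by the stated hypotheses exactly as in the paper.
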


\begin{proof}
Since
\[
ge^{F_{\beta,\lambda}}\in\mathcal P,
\]
Poisson invariance of $\mathsf m_0$ gives
\[
\mathsf m_0\left(
\{f,ge^{F_{\beta,\lambda}}\}
\right)
=
0.
\]
Using the Leibniz rule,
\[
\{f,ge^{F_{\beta,\lambda}}\}
=
\{f,g\}e^{F_{\beta,\lambda}}
+
g\{f,e^{F_{\beta,\lambda}}\}.
\]
The chain rule gives
\[
\{f,e^{F_{\beta,\lambda}}\}
=
e^{F_{\beta,\lambda}}
\{f,F_{\beta,\lambda}\}.
\]
Since
\[
\{f,F_{\beta,\lambda}\}
=
-\beta\{f,H\}
-
\sum_{a=1}^{N}\lambda^a\{f,\Phi_a\},
\]
we obtain
\begin{align*}
\mathsf m_0\left(
\{f,g\}e^{F_{\beta,\lambda}}
\right)
&=
-\mathsf m_0\left(
g e^{F_{\beta,\lambda}}
\{f,F_{\beta,\lambda}\}
\right)
\\
&=
\beta\,
\mathsf m_0\left(
g\{f,H\}e^{F_{\beta,\lambda}}
\right)
\\
&\quad+
\sum_{a=1}^{N}\lambda^a
\mathsf m_0\left(
g\{f,\Phi_a\}e^{F_{\beta,\lambda}}
\right).
\end{align*}
Dividing by $Z(\beta,\lambda)$ yields the stated identity.
\end{proof}

\begin{rem}
The identity in Theorem~\ref{thm:poisson-kms} is a classical
Poisson-algebraic equilibrium condition. It must not be confused with
the complex-time KMS boundary condition for noncommutative
$C^*$-dynamical systems.
\end{rem}

\begin{rem}
If the constraint observables are Casimirs, then
\[
\{f,\Phi_a\}=0,
\qquad
a=1,\ldots,N,
\]
and the generalized identity reduces to
\[
\mathsf n_{\beta,\lambda}^*(\{f,g\})
=
\beta\,
\mathsf n_{\beta,\lambda}^*
\bigl(g\{f,H\}\bigr).
\]
\end{rem}

\subsection{Compactness and variational limits}

Suppose, in addition, that $\mathcal L$ is a unital normed order space
for which every normalized positive functional is continuous and has
norm one. Then $\mathsf S(\mathcal L)$ is contained in the continuous
dual $\mathcal L^*$ and may be identified with a weak-$*$ closed subset
of its unit ball. It is therefore weak-$*$ compact by the
Banach--Alaoglu theorem.

Since
\[
\mathsf n\longmapsto
\mathcal H(\mathsf n\,\|\,\mathsf m_0)
\]
is weak-$*$ lower semicontinuous, the same is true of
$\mathcal G_{\beta,\lambda}$: indeed, for
$H,\Phi_a\in\mathcal L$, the maps
\[
\mathsf n\longmapsto\mathsf n(H),
\qquad
\mathsf n\longmapsto\mathsf n(\Phi_a)
\]
are weak-$*$ continuous.

More generally, in the algebraic setting it is sufficient to assume
directly that the sublevel sets of
$\mathcal G_{\beta,\lambda}$ are compact for the topology of pointwise
convergence. This is the compactness condition appearing in
Definition~\ref{def:coercive}.

\begin{rem}
Compactness of the state space is automatic in the standard unital
$C^*$-algebraic setting. For a general algebra containing unbounded
observables, it must instead be verified or included among the analytic
hypotheses.
\end{rem}

\subsection{The Hilbertian case as a particular instance}

When $\mathcal M$ is modelled on a separable real Hilbert space
$\mathscr H$ with scalar product $\langle\cdot,\cdot\rangle$, the
preceding constructions include the classical Gaussian framework.

\begin{Proposition}[Hilbertian specialization]
\label{prop:hilbertian}
Let $\mathcal M=\mathscr H$ and
\[
\omega(u,v)=\langle Ju,v\rangle,
\]
where
\[
J\colon\mathscr H\longrightarrow\mathscr H
\]
is bounded, skew-adjoint, and invertible. Let $\mu_0$ be a centered
Gaussian Radon probability measure on $\mathscr H$ with nonnegative
selfadjoint trace-class covariance operator $C$, and let
\[
\mathsf m_0(f)
=
\int_{\mathscr H}f\,\mathrm d\mu_0
\]
on the corresponding integrability domain. Then:
\begin{enumerate}
\item[\textup{(1)}] for every probability measure $\nu$ on
$\mathscr H$, the variational entropy coincides with the usual
Kullback--Leibler relative entropy,
\[
\mathcal H(\nu\,\|\,\mu_0)
=
\begin{cases}
\displaystyle
\int_{\mathscr H}
\log\left(\frac{\mathrm d\nu}{\mathrm d\mu_0}\right)
\,\mathrm d\nu,
&
\nu\ll\mu_0,
\\[2ex]
+\infty,
&
\nu\not\ll\mu_0,
\end{cases}
\]
provided the admissible potentials contain an entropy-determining
class, for instance $C_b(\mathscr H)$;

\item[\textup{(2)}] for every $(\beta,\lambda)\in\mathcal D$, the
equilibrium mean is represented by the Gibbs probability measure
\[
\mathrm d\nu_{\beta,\lambda}^{*}(u)
=
\frac{1}{Z(\beta,\lambda)}
e^{-\beta H(u)-\sum_{a=1}^{N}\lambda^a\Phi_a(u)}
\,\mathrm d\mu_0(u);
\]

\item[\textup{(3)}] whenever the translation-compatibility and
separation hypotheses of Theorem~\ref{thm:existence} hold, its
conclusions remain valid in this Gaussian setting. If, in addition, the
hypotheses of Theorem~\ref{thm:stationarity} are satisfied, then
$\nu_{\beta,\lambda}^{*}$ is invariant under the Hamiltonian flow.
\end{enumerate}
\end{Proposition}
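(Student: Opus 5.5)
The three items are treated separately. Items (2) and (3) are essentially bookkeeping once $\mathsf m_0$ is identified with integration against $\mu_0$. Item (1) carries the analytic content: it is the Donsker--Varadhan variational formula for the Kullback--Leibler divergence.

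For (1), fix a probability measure $\nu$ on $\mathscr H$. I read $\mathsf n(f)=\int f\,\mathrm d\nu$ on the common domain. The admissible class $\mathcal E(\mathsf m_0)$ is taken to contain $C_b(\mathscr H)$ and to lie inside the set of measurable $f$ with $\int e^f\,\mathrm d\mu_0<\infty$.

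The upper bound comes first. Suppose $\nu\ll\mu_0$ with density $\rho$ and $\int\rho\log\rho\,\mathrm d\mu_0<\infty$. For any admissible $f$, Jensen's inequality under $\nu$ applied to $e^{f}/\rho$ gives
\[
\int f\,\mathrm d\nu-\int\log\rho\,\mathrm d\nu
=\int\log\frac{e^f}{\rho}\,\mathrm d\nu
\le\log\int_{\{\rho>0\}}e^f\,\mathrm d\mu_0
\le\log\mathsf m_0(e^f).
\]
Hence $\mathcal H(\nu\|\mu_0)\le\mathrm{KL}(\nu\|\mu_0)$. If $\mathrm{KL}(\nu\|\mu_0)=+\infty$, there is nothing to prove for this inequality.

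For the lower bound, I would use the entropy-determining hypothesis: the supremum over $C_b(\mathscr H)$ already recovers $\mathrm{KL}$. The standard route has three steps.
\begin{enumerate}
\item Take truncations $f_n=\log(\rho\vee\tfrac1n)\wedge n$. These are bounded and measurable, and $\int f_n\,\mathrm d\nu-\log\int e^{f_n}\,\mathrm d\mu_0\to\mathrm{KL}$ by monotone and dominated convergence.
\item Approximate each bounded measurable $f_n$ by functions in $C_b(\mathscr H)$, using Lusin's theorem for the Radon measure $\mu_0+\nu$.
\item If $\nu\not\ll\mu_0$, choose a compact $K$ with $\mu_0(K)=0<\nu(K)$. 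Approximate $t\mathbbm 1_K$ by continuous bounded functions, which makes the functional grow like $t\nu(K)\to\infty$.
\end{enumerate}
The main obstacle is step 2, the approximation, because Lusin's theorem requires inner regularity. I would cite the Radon property of $\mu_0$ for it, exactly as in the classical Donsker--Varadhan proof.

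For (2), I unwind Definition~\ref{def:exponential-tilt}. For $(\beta,\lambda)\in\mathcal D$, admissibility gives $0<Z=\int e^{F_{\beta,\lambda}}\,\mathrm d\mu_0<\infty$, and $fe^{F_{\beta,\lambda}}$ is integrable for every $f$ in the domain. Then
\[
\mathsf n^*_{\beta,\lambda}(f)=Z^{-1}\int fe^{F_{\beta,\lambda}}\,\mathrm d\mu_0=\int f\,\mathrm d\nu^*_{\beta,\lambda}.
\]
The measure $\nu^*_{\beta,\lambda}$ is a genuine probability measure because its density is positive and normalized.

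For (3), the conclusions of Theorem~\ref{thm:existence} hold here verbatim, since they used only the abstract structure. Only the stated hypotheses need to be verified, and those are assumed. By (1), the minimizer is also the Kullback--Leibler minimizer. Finally, Theorem~\ref{thm:stationarity} gives $\int f\circ\Phi^H_t\,\mathrm d\nu^*=\int f\,\mathrm d\nu^*$ for all $f\in C_b(\mathscr H)$. This is equality of the pushforward measure $(\Phi^H_t)_*\nu^*$ with $\nu^*$, since bounded continuous functions determine Borel probability measures on a metric space. Hence $\nu^*$ is flow invariant.
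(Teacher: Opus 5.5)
Your proposal is correct and follows the paper's three-part argument: the paper cites the Donsker--Varadhan characterization on a Polish space for (1), unwinds the exponential tilt for (2), and applies Theorems~\ref{thm:existence} and~\ref{thm:stationarity} for (3), while you also sketch the standard Donsker--Varadhan proof and make explicit how invariance of the mean on $C_b(\mathscr H)$ gives $(\Phi^H_t)_*\nu^*=\nu^*$. One small correction: steps 2--3 need the regularity of $\mu_0+\nu$ and of $\nu$, not only that of $\mu_0$; this is automatic because every finite Borel measure on the separable Hilbert space $\mathscr H$ is Radon (and closed-set regularity on a metric space would already suffice), so step 2 is not a real obstacle.
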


\begin{proof}
The first assertion is the standard variational characterization of
relative entropy on a Polish space, applied to the entropy-determining
class contained in the admissible potentials. The second assertion
follows directly from the definition of exponential tilting. The third
is an immediate application of
Theorems~\ref{thm:existence} and~\ref{thm:stationarity}.
\end{proof}

\begin{rem}
The Hilbertian framework is a $\sigma$-additive instance of the general
construction. It provides an important consistency test, while the
normalized-mean framework is intended to include situations in which no
suitable $\sigma$-additive reference measure is available.
\end{rem}

\section{Applications: $H^s$-Geodesic Equations on
$\operatorname{Map}(M,G)$ and on $\operatorname{Diff}(M)$}

Throughout this section, $M$ is a compact smooth manifold without
boundary, endowed with a smooth volume form $\operatorname{vol}_M$.
When a Riemannian metric $g$ is used,
$\Delta_M=\Delta_g$ denotes the nonnegative Laplace--Beltrami operator,
acting componentwise on functions, sections, or vector fields. Let $G$
be a finite-dimensional Lie group with Lie algebra $\mathfrak g$,
equipped with an $\operatorname{Ad}$-invariant inner product
$\langle\cdot,\cdot\rangle_{\mathfrak g}$. We write $d=\dim M$ and
assume
\[
r>\frac d2+1
\]
when Hilbert completions of Sobolev order $r$ are used.

The Hamiltonian phase space of a geodesic equation is naturally a
cotangent bundle, or, after reduction, a Lie--Poisson space. Since a
general Lie--Poisson space need not be symplectic, the symplectic
statements of the preceding sections are applied either on the full
cotangent bundle or on a suitable coadjoint orbit. In the formulas
below, we use the corresponding reduced velocity and momentum
variables.

Cylindrical functionals are understood relative to a fixed family of
finite-dimensional maps, such as spectral projections. We denote by
\[
\mathcal A_b\subset C_b(\mathcal M)
\]
the unital algebra of bounded continuous cylindrical functionals and by
$\mathcal L$ a larger vector lattice algebra containing the possibly
unbounded Hamiltonians, moment-map components, and exponential weights
used below. All equilibrium statements in this section are conditional
on the admissibility, translation-compatibility, separation, and
regularity hypotheses of the abstract theory.

\subsection{$H^s$ right-invariant metrics on the current group
$\operatorname{Map}(M,G)$}

\subsubsection*{Geometry and geodesic equation}

Consider the Fréchet--Lie group
\[
\mathcal G=\operatorname{Map}(M,G)
\]
with pointwise multiplication. Its Lie algebra is
\[
\mathfrak X_G=C^\infty(M,\mathfrak g),
\]
with pointwise bracket
\[
[\xi,\eta](x)=[\xi(x),\eta(x)]_{\mathfrak g}.
\]
For $s\geq0$, define the elliptic positive selfadjoint inertia operator
\[
A_s=(1+\Delta_M)^s\otimes\operatorname{id}_{\mathfrak g}.
\]
The corresponding inner product at the identity is
\[
\langle\xi,\eta\rangle_{H^s}
=
\int_M
\left\langle
A_s^{1/2}\xi(x),A_s^{1/2}\eta(x)
\right\rangle_{\mathfrak g}
\,\operatorname{vol}_M(x).
\]
It defines a right-invariant weak Riemannian metric on $\mathcal G$.

Let $\xi(t,\cdot)\in\mathfrak X_G$ be the right-trivialized velocity and
let
\[
m=A_s\xi
\]
be the corresponding momentum variable, identified with a
$\mathfrak g^*$-valued density through the inner product and the volume
form. With the convention
\[
\langle\operatorname{ad}_\xi^*m,\eta\rangle
=
\langle m,[\eta,\xi]\rangle,
\]
the Euler--Poincaré equation is
\[
\partial_t m+\operatorname{ad}_\xi^*m=0,
\qquad
m=A_s\xi.
\]
The coadjoint operator acts pointwise:
\[
(\operatorname{ad}_\xi^*m)(x)
=
\operatorname{ad}_{\xi(x)}^*m(x).
\]
Thus the equation is a field of finite-dimensional Euler equations
coupled through the elliptic operator $A_s$.

\subsubsection*{Hamiltonian and momentum variables}

The reduced kinetic Hamiltonian is
\[
H(\xi)
=
\frac12
\int_M
\langle\xi(x),A_s\xi(x)\rangle_{\mathfrak g}
\,\operatorname{vol}_M(x).
\]
Equivalently, in the momentum variable,
\[
H(m)
=
\frac12
\int_M
\langle A_s^{-1}m(x),m(x)\rangle_{\mathfrak g}
\,\operatorname{vol}_M(x),
\]
whenever the expression is defined.

The cotangent lifts of left and right translations on $\mathcal G$
possess the standard momentum maps. Their expressions depend on the
chosen trivialization of $T^*\mathcal G$. In reduced variables, $m$ is
the basic momentum coordinate, while the corresponding unreduced
conserved momentum is obtained from $m$ by the appropriate coadjoint
transport. Spatial symmetries of $(M,\operatorname{vol}_M)$, for
instance a compact subgroup of volume-preserving diffeomorphisms
commuting with $A_s$, may produce additional conserved quantities.

\subsubsection*{Equilibrium means}

Let $\mathcal A_b$ consist of bounded continuous cylindrical
functionals of $\xi$, or equivalently of $m$ on the domain where
$A_s^{-1}$ is defined, and let $\mathcal L$ contain $\mathcal A_b$, the
Sobolev energy $H$, and the momentum constraints considered below.
Choose a reference normalized mean $\mathsf m_0$ on $\mathcal L$, for
example a Gaussian mean with covariance related to $A_s^{-1}$ when this
covariance defines a Gaussian Radon measure on the chosen Hilbert
completion, or a normalized cut-off mean in the sense of
Example~\ref{ex:cutoff-mean}.

For a test field
\[
\lambda\in C^\infty(M,\mathfrak g),
\]
define the linear momentum observable
\[
\langle\lambda,m\rangle
=
\int_M
\langle\lambda(x),m(x)\rangle_{\mathfrak g}
\,\operatorname{vol}_M(x).
\]
The Gibbs-type variational functional is
\[
\mathcal G_{\beta,\lambda}(\mathsf n)
=
\mathcal H(\mathsf n\,\|\,\mathsf m_0)
+
\beta\,\mathsf n(H)
+
\mathsf n\bigl(\langle\lambda,m\rangle\bigr).
\]
Whenever
\[
F_{\beta,\lambda}
=
-\beta H-\langle\lambda,m\rangle
\]
belongs to $\mathcal E(\mathsf m_0)$ and satisfies the
translation-compatibility and separation hypotheses of
Theorem~\ref{thm:existence}, the unique minimizer is
\[
\boxed{
\mathsf n_{\beta,\lambda}^*(f)
=
\frac{
\mathsf m_0
\left(
f e^{-\beta H-\langle\lambda,m\rangle}
\right)
}{
\mathsf m_0
\left(
e^{-\beta H-\langle\lambda,m\rangle}
\right)
},
\qquad
f\in\mathcal L.
}
\]

The corresponding partition functional
\[
Z(\beta,\lambda)
=
\mathsf m_0
\left(
e^{-\beta H-\langle\lambda,m\rangle}
\right)
\]
generates the thermodynamic duality described in
Proposition~\ref{prop:legendre}, whenever its hypotheses are satisfied.
If the geodesic flow is globally defined, preserves $\mathsf m_0$ and
$\mathcal L$, and leaves the momentum observable
$\langle\lambda,m\rangle$ invariant, then
$\mathsf n_{\beta,\lambda}^*$ is stationary by
Theorem~\ref{thm:stationarity}. If, in addition, the extended
Poisson-invariance assumptions of Theorem~\ref{thm:poisson-kms} hold,
then the equilibrium mean satisfies the corresponding classical
Poisson--KMS identity.

\begin{rem}[Structure of equilibria]
If $\lambda$ is independent of $x$ and belongs to the centre of
$\mathfrak g$, then $\langle\lambda,m\rangle$ is invariant under the
pointwise coadjoint evolution. The exponential tilt therefore couples
to the corresponding central component of the momentum. More general
fields $\lambda(\cdot)$ represent spatially dependent currents or
chemical potentials, but they need not define conserved observables
without additional symmetry assumptions.
\end{rem}

\subsection{$H^s$ right-invariant metrics on $\operatorname{Diff}(M)$
and the EPDiff equation}

\subsubsection*{Geometry and geodesic equation}

Let $\operatorname{Diff}(M)$ be the Fréchet--Lie group of
diffeomorphisms of $M$, and let $\mathfrak X(M)$ be its Lie algebra of
smooth vector fields. Let
\[
A_s\colon\mathfrak X(M)\longrightarrow\mathfrak X(M)
\]
be a positive selfadjoint elliptic operator with respect to the
$L^2$-pairing induced by $g$, for instance
\[
A_s=(1+\Delta_g)^s.
\]
The inner product at the identity is
\[
\langle u,v\rangle_{H^s}
=
\int_M
\langle A_s^{1/2}u,A_s^{1/2}v\rangle_g
\,\operatorname{vol}_g,
\]
and its right translations define a right-invariant weak Riemannian
metric on $\operatorname{Diff}(M)$.

The momentum is naturally a one-form density,
\[
m=(A_su)^\flat\otimes\operatorname{vol}_g.
\]
The Euler--Poincaré equation is
\[
\partial_t m+\mathcal L_u m=0,
\]
where $\mathcal L_u$ denotes the Lie derivative of one-form densities.
If the density is written relative to the fixed volume form and $m$ is
identified with its one-form component, the same equation becomes
\[
\partial_t m+\mathcal L_u m+(\operatorname{div}_g u)m=0,
\qquad
m=(A_su)^\flat.
\]

In local coordinates, for
\[
A_s=\operatorname{id}-\alpha^2\Delta_{\mathrm{geom}},
\]
where $\Delta_{\mathrm{geom}}$ denotes the nonpositive coordinate
Laplacian, the equation takes the familiar form
\[
\partial_t m
+
u\cdot\nabla m
+
(\nabla u)^\top m
+
m(\nabla\cdot u)
=
0,
\qquad
m=u-\alpha^2\Delta_{\mathrm{geom}}u.
\]
In one space dimension this includes the Camassa--Holm equation, while
in higher dimension it gives the EPDiff system.

\subsubsection*{Hamiltonian and symmetry momenta}

The reduced Hamiltonian is
\[
H(u)
=
\frac12
\int_M
\langle u,A_su\rangle_g
\,\operatorname{vol}_g.
\]
Let
\[
K\subset\operatorname{Diff}(M)
\]
be a compact group of isometries of $(M,g)$, with Lie algebra
$\mathfrak k$, and assume that $A_s$ is $K$-equivariant. For each
Killing field $w\in\mathfrak k$, the corresponding momentum observable
is
\[
\Phi_w(u)
=
\int_M
\langle A_su,w\rangle_g
\,\operatorname{vol}_g
=
\int_M
m(w).
\]
Under the stated equivariance hypothesis, these observables arise from
the cotangent-lifted symmetry and are conserved by the geodesic flow.
They are therefore natural constraints in the Souriau-type variational
principle.

\subsubsection*{Equilibrium means}

Let $\mathcal A_b$ consist of bounded continuous cylindrical
functionals of $u$, and let $\mathcal L$ also contain $H$ and the
momentum observables $\Phi_w$. Let $\mathsf m_0$ be a Gaussian reference
mean with covariance related to $A_s^{-1}$ when this defines a Gaussian
Radon measure on the chosen completion, or a normalized cut-off mean.

Choose a basis
\[
\mathcal B=(w_1,\ldots,w_N)
\]
of $\mathfrak k$ and write
\[
\lambda=(\lambda^1,\ldots,\lambda^N)\in\mathbb R^N.
\]
The variational functional is
\[
\mathcal G_{\beta,\lambda}(\mathsf n)
=
\mathcal H(\mathsf n\,\|\,\mathsf m_0)
+
\beta\,\mathsf n(H)
+
\sum_{j=1}^{N}
\lambda^j\mathsf n(\Phi_{w_j}).
\]
Whenever the potential
\[
F_{\beta,\lambda}
=
-\beta H-\sum_{j=1}^{N}\lambda^j\Phi_{w_j}
\]
belongs to $\mathcal E(\mathsf m_0)$ and satisfies the
translation-compatibility and separation hypotheses of
Theorem~\ref{thm:existence}, the unique equilibrium mean is
\[
\boxed{
\mathsf n_{\beta,\lambda}^*(f)
=
\frac{
\mathsf m_0
\left(
f e^{-\beta H-\sum_{j=1}^{N}\lambda^j\Phi_{w_j}}
\right)
}{
\mathsf m_0
\left(
e^{-\beta H-\sum_{j=1}^{N}\lambda^j\Phi_{w_j}}
\right)
},
\qquad
f\in\mathcal L.
}
\]
If the $H^s$ geodesic flow is globally defined and preserves the
reference mean and the observable algebra, then the conservation of the
symmetry momenta implies that $\mathsf n_{\beta,\lambda}^*$ is
stationary. Under the additional assumptions of
Theorem~\ref{thm:poisson-kms}, it also satisfies the classical
Poisson--KMS identity.

\begin{rem}[Vorticity and Casimir constraints]
In dimension two, the incompressible Euler equation arises from the
$L^2$ metric on the volume-preserving diffeomorphism group
$\operatorname{Diff}_{\operatorname{vol}}(M)$, rather than on the full
group $\operatorname{Diff}(M)$. In this setting, one may enlarge the
constraint family by adding suitable Casimir functionals of the
vorticity. At the formal mean-field level, the resulting entropy
principle leads to relations of Miller--Robert--Sommeria type. A
complete identification with MRS theory requires a separate treatment
of local vorticity-level distributions and of the relevant
large-deviation principle.
\end{rem}

\subsection{Compatibility with the Fréchet--Hilbert scheme}

The preceding geometric constructions are formulated on smooth
Fréchet manifolds. For analytic purposes, one may pass to Sobolev
completions
\[
H^r(M,G),
\qquad
\operatorname{Diff}^{\,r}(M),
\qquad
r>\frac d2+1,
\]
using an index $r$ sufficiently large for the relevant manifold and
composition structures. The order $s$ of the inertia operator and the
regularity index $r$ of the Hilbert completion need not coincide.

On such Hilbert manifolds, local well-posedness of the geodesic equation
follows when the corresponding geodesic spray extends smoothly to the
chosen completion. This property depends on the order of the inertia
operator and is not automatic from ellipticity alone. A Gaussian
reference may be used only when its covariance defines a Radon Gaussian
measure on the chosen model space. Otherwise, one must pass to a weaker
distribution space or use finite-dimensional spectral cut-offs.

Returning to the smooth Fréchet level requires compatibility of the
solutions, observable algebras, and normalized means across the
Sobolev scale. It is therefore an additional projective-limit
hypothesis rather than an automatic consequence of Hilbert-level
well-posedness.
\section{Concrete Examples: Current Groups and Diffeomorphism Groups}

This section spells out the preceding abstract construction in concrete
Hamiltonian systems. Its purpose is not to develop a complete analytic
theory for each equation, but to identify the Hamiltonian, the
constraint observables, possible reference means, and the resulting
conditional equilibrium construction.

The Hamiltonian descriptions below are understood either on the full
cotangent phase space or, after reduction, on a suitable symplectic
leaf of the corresponding Lie--Poisson space. Throughout,
$\mathcal A_b$ denotes the algebra of bounded cylindrical test
observables, while $\mathcal L$ denotes the larger algebra containing
the Hamiltonian, the constraints, and the admissible exponential
weights.
\subsection{Example A: the Euler top as a finite-dimensional benchmark}
\label{subsec:euler-top-example}

The classical Euler top provides a finite-dimensional benchmark for the
normalized-mean formalism. It may also be regarded as the special case
of the current-group construction in which the base manifold consists
of a single point.

Let
\[
G=\operatorname{SO}(3),
\qquad
\mathfrak{so}(3)^*\simeq\mathbb R^3,
\]
and let
\[
I\colon\mathbb R^3\longrightarrow\mathbb R^3
\]
be a positive-definite inertia operator. If
$m\in\mathbb R^3$ denotes the body angular momentum and
\[
\Omega=I^{-1}m
\]
the angular velocity, then the reduced Hamiltonian is
\[
H(m)
=
\frac12\,m\cdot I^{-1}m.
\]
With the usual identification of the Lie bracket on
$\mathfrak{so}(3)$ with the vector product, the Euler equation is
\[
\dot m=m\times\Omega.
\]

The Casimir
\[
C(m)=|m|^2
\]
is conserved. For a fixed value $R>0$, the corresponding coadjoint
orbit is the sphere
\[
\mathcal O_R
=
\left\{
m\in\mathbb R^3
\;\middle|\;
|m|=R
\right\}.
\]
It is a compact symplectic manifold equipped with its
Kostant--Kirillov--Souriau form. Let $\mu_R$ denote the associated
normalized Liouville measure, and define the reference mean
\[
\mathsf m_0(f)
=
\int_{\mathcal O_R}f(m)\,\mathrm d\mu_R(m),
\qquad
f\in C(\mathcal O_R).
\]
This mean is faithful, invariant under the coadjoint action, and
Poisson invariant.

The moment map for the coadjoint action is the inclusion
\[
\mathbf J\colon\mathcal O_R\longrightarrow\mathfrak{so}(3)^*,
\qquad
\mathbf J(m)=m.
\]
For
\[
\lambda\in\mathfrak{so}(3)\simeq\mathbb R^3,
\]
consider the potential
\[
F_{\beta,\lambda}(m)
=
-\beta H(m)-\lambda\cdot m.
\]
Since $\mathcal O_R$ is compact, every continuous potential is
exponentially admissible and
\[
0<
Z(\beta,\lambda)
=
\int_{\mathcal O_R}
e^{-\beta H(m)-\lambda\cdot m}
\,\mathrm d\mu_R(m)
<\infty.
\]
Moreover, translation compatibility and the separation hypotheses are
automatic on the algebra $C(\mathcal O_R)$. Theorem~\ref{thm:existence}
therefore gives the unique equilibrium state
\[
\boxed{
\mathsf n_{\beta,\lambda}^*(f)
=
\frac{
\displaystyle
\int_{\mathcal O_R}
f(m)e^{-\beta H(m)-\lambda\cdot m}
\,\mathrm d\mu_R(m)
}{
\displaystyle
\int_{\mathcal O_R}
e^{-\beta H(m)-\lambda\cdot m}
\,\mathrm d\mu_R(m)
},
\qquad
f\in C(\mathcal O_R).
}
\]
Thus, in this finite-dimensional case, the equilibrium mean is
represented by the classical Gibbs probability measure
\[
\mathrm d\nu_{\beta,\lambda}^*(m)
=
\frac{1}{Z(\beta,\lambda)}
e^{-\beta H(m)-\lambda\cdot m}
\,\mathrm d\mu_R(m).
\]

The logarithmic partition functional
\[
\psi(\beta,\lambda)
=
\log Z(\beta,\lambda)
\]
is smooth and convex. Its first derivatives are
\[
\partial_\beta\psi(\beta,\lambda)
=
-\mathsf n_{\beta,\lambda}^*(H)
\]
and
\[
D_\lambda\psi(\beta,\lambda)[\dot\lambda]
=
-\mathsf n_{\beta,\lambda}^*(\dot\lambda\cdot m).
\]
Its Hessian is the covariance form of the observables
\[
H,\quad m_1,\quad m_2,\quad m_3.
\]
It is positive definite after quotienting by every linear combination
of these observables that is constant on $\mathcal O_R$.

For $\lambda=0$, the Gibbs weight depends only on the conserved
Hamiltonian. Hence
\[
\mathsf n_{\beta,0}^*
\]
is stationary under the Euler flow. Since $\mu_R$ is Poisson invariant,
it also satisfies
\[
\mathsf n_{\beta,0}^*(\{f,g\})
=
\beta\,
\mathsf n_{\beta,0}^*
\bigl(g\{f,H\}\bigr)
\]
for all smooth observables $f,g$ on $\mathcal O_R$.

For $\lambda\neq0$, the generalized equilibrium satisfies instead
\[
\mathsf n_{\beta,\lambda}^*(\{f,g\})
=
\beta\,
\mathsf n_{\beta,\lambda}^*
\bigl(g\{f,H\}\bigr)
+
\mathsf n_{\beta,\lambda}^*
\bigl(g\{f,\lambda\cdot m\}\bigr).
\]
The corresponding state is stationary under the Euler flow only when
the additional observable
\[
m\longmapsto\lambda\cdot m
\]
is conserved. This occurs when $\lambda$ generates a symmetry of the
inertia operator; for a generic asymmetric top, no nonzero linear
momentum component is conserved in body coordinates.

This example shows that the abstract normalized-mean construction
reduces, on a compact coadjoint orbit, to Souriau's
finite-dimensional Gibbs formalism. It also illustrates the distinction
between a generalized exponential family and a stationary equilibrium
for a prescribed Hamiltonian evolution.
\subsection{Example B: the current group on the circle}

Let
\[
M=\mathbb S^1,
\qquad
G=\operatorname{SU}(2),
\]
and consider the Fréchet--Lie group
\[
\mathcal G
=
\operatorname{Map}(\mathbb S^1,\operatorname{SU}(2)).
\]
Its Lie algebra is
\[
\mathfrak X_G
=
C^\infty(\mathbb S^1,\mathfrak{su}(2)),
\]
with pointwise bracket.

\paragraph{\bfseries Geometry and geodesic equation.}
Fix $s>0$ and define
\[
A_s
=
(1-\partial_x^2)^s
\otimes
\operatorname{id}_{\mathfrak{su}(2)}.
\]
The right-invariant $H^s$ metric is determined at the identity by
\[
\langle\xi,\eta\rangle_{H^s}
=
\int_{\mathbb S^1}
\left\langle
A_s^{1/2}\xi(x),A_s^{1/2}\eta(x)
\right\rangle_{\mathfrak{su}(2)}
\,\mathrm dx.
\]
Let $\xi(t,x)\in\mathfrak{su}(2)$ be the right-trivialized velocity and
set
\[
m=A_s\xi.
\]
With the convention
\[
\langle\operatorname{ad}_\xi^*m,\eta\rangle
=
\langle m,[\eta,\xi]\rangle,
\]
the Euler--Poincaré equation is
\[
\partial_t m+\operatorname{ad}_\xi^*m=0.
\]
After identifying
\[
\mathfrak{su}(2)^*\simeq\mathfrak{su}(2)
\]
by an $\operatorname{Ad}$-invariant inner product, this becomes
\[
\partial_t m+[m,\xi]=0
\]
with the preceding convention.

\paragraph{\bfseries Reference mean and observables.}
Let $\mathcal A_b$ be generated by bounded cylindrical functionals of
the Fourier modes of $\xi$, and let $\mathcal L$ additionally contain
the Sobolev energy
\[
H(\xi)
=
\frac12
\int_{\mathbb S^1}
\langle\xi,A_s\xi\rangle_{\mathfrak{su}(2)}
\,\mathrm dx.
\]
As reference mean $\mathsf m_0$, one may use a centered Gaussian
reference with covariance $A_s^{-1}$ when it defines a Gaussian Radon
measure on the chosen completion, or a normalized cut-off mean obtained
from spectral truncations.

On the $L^2$-based Hilbert space,
$A_s^{-1}$ is trace class in dimension one when
\[
s>\frac12.
\]
For smaller $s$, the Gaussian reference must be realized on a weaker
distribution space or understood through finite-dimensional cut-offs.

\paragraph{\bfseries Constraints and equilibrium.}
For
\[
\lambda\in
C^\infty(\mathbb S^1,\mathfrak{su}(2)),
\]
define the linear momentum observable
\[
\langle\lambda,m\rangle
=
\int_{\mathbb S^1}
\langle\lambda(x),m(x)\rangle_{\mathfrak{su}(2)}
\,\mathrm dx.
\]
Whenever the potential
\[
F_{\beta,\lambda}
=
-\beta H-\langle\lambda,m\rangle
\]
belongs to $\mathcal E(\mathsf m_0)$ and satisfies the
translation-compatibility and separation hypotheses of
Theorem~\ref{thm:existence}, the equilibrium mean is
\[
\boxed{
\mathsf n_{\beta,\lambda}^*(f)
=
\frac{
\mathsf m_0
\left(
f e^{-\beta H-\langle\lambda,m\rangle}
\right)
}{
\mathsf m_0
\left(
e^{-\beta H-\langle\lambda,m\rangle}
\right)
},
\qquad
f\in\mathcal L.
}
\]
A constant field $\lambda$ couples the equilibrium weight to the
corresponding spatially integrated momentum component. A nonconstant
field $\lambda(x)$ introduces a spatially dependent current or chemical
potential. Such a linear observable is conserved only when it is
generated by a symmetry of the dynamics.

\begin{rem}
Since $\mathfrak{su}(2)$ is simple, its centre is zero. Hence there is
no nonzero central direction for which the associated linear momentum
observable would be conserved solely by centrality.
\end{rem}

\paragraph{\bfseries Partition functional.}
Writing formally
\[
\xi(x)
=
\sum_{k\in\mathbb Z}
\widehat\xi_k e^{ikx},
\]
with the appropriate reality condition, the Gaussian covariance is
\[
\mathbb E
\left[
\widehat\xi_k\otimes\widehat\xi_{-k}
\right]
=
(1+k^2)^{-s}\operatorname{id}_{\mathfrak{su}(2)}.
\]
At finite spectral cut-off, the partition functional may be computed
mode by mode. Its infinite-dimensional limit, or any required
renormalized interpretation, must be checked separately. Whenever the
local exponential regularity hypothesis holds, the Hessian of the
logarithmic partition functional is the covariance form. It is positive
definite on every finite-dimensional constraint subspace on which that
covariance form is nondegenerate.

\subsection{Example C: Camassa--Holm and EPDiff in one space dimension}

\paragraph{\bfseries Geometry.}
Let $M=\mathbb S^1$ and consider
$\operatorname{Diff}(\mathbb S^1)$. For the $H^1$ metric, take
\[
A=1-\alpha^2\partial_x^2,
\qquad
m=Au.
\]
The Euler--Poincaré equation becomes the Camassa--Holm equation in
momentum form:
\[
\partial_t m+u\partial_xm+2u_xm=0,
\qquad
m=u-\alpha^2u_{xx}.
\]
The Hamiltonian is
\[
H(u)
=
\frac12
\int_{\mathbb S^1}um\,\mathrm dx
=
\frac12
\int_{\mathbb S^1}
\left(
u^2+\alpha^2u_x^2
\right)
\,\mathrm dx.
\]

\paragraph{\bfseries Symmetry, constraint and equilibrium.}
The rotation group
\[
K\simeq\mathbb S^1
\]
acts by translations and preserves the inertia operator $A$. Its
infinitesimal generator is the constant Killing field, customarily
represented by $w=\partial_x$. The corresponding conserved momentum
observable is
\[
\Phi_w(u)
=
\int_{\mathbb S^1}m\,\mathrm dx
=
\int_{\mathbb S^1}u\,\mathrm dx.
\]

Let $\mathcal A_b$ consist of bounded cylindrical functionals of the
Fourier modes of $u$, and let $\mathcal L$ also contain $H$ and
$\Phi_w$. Let $\mathsf m_0$ be a Gaussian reference with covariance
$A^{-1}$ on a suitable Hilbert or distribution space, or a normalized
spectral cut-off mean. Whenever
\[
F_{\beta,\lambda}
=
-\beta H-\lambda\Phi_w
\]
belongs to $\mathcal E(\mathsf m_0)$ and satisfies the
translation-compatibility and separation hypotheses of
Theorem~\ref{thm:existence}, the equilibrium mean is
\[
\boxed{
\mathsf n_{\beta,\lambda}^*(f)
=
\frac{
\mathsf m_0
\left(
f e^{-\beta H-\lambda\Phi_w}
\right)
}{
\mathsf m_0
\left(
e^{-\beta H-\lambda\Phi_w}
\right)
},
\qquad
f\in\mathcal L.
}
\]
If the $H^1$ geodesic flow is global on the domain under consideration
and preserves $\mathsf m_0$ and $\mathcal L$, then conservation of
$\Phi_w$ implies that $\mathsf n_{\beta,\lambda}^*$ is stationary. If
the extended Poisson-invariance assumptions of
Theorem~\ref{thm:poisson-kms} also hold, then the classical
Poisson--KMS identity follows.

\paragraph{\bfseries Peakon sector.}
For $\alpha>0$, the Camassa--Holm equation admits peakon solutions of
the form
\[
u(t,x)
=
\sum_{j=1}^{N}
p_j(t)K_\alpha(x-q_j(t)),
\]
where $K_\alpha$ is the periodic Green kernel of $A$. On the
finite-dimensional sector of distinct peakon positions, the Hamiltonian
becomes
\[
H_N(q,p)
=
\frac12
\sum_{i,j=1}^{N}
p_ip_jK_\alpha(q_i-q_j).
\]
Whenever the corresponding finite-dimensional partition functional is
finite and nonzero, exponential reweighting defines a Gibbs-type
equilibrium on the peakon variables. Compatibility with the
infinite-dimensional construction requires an additional convergence
argument and is not automatic.

\subsection{Example D: EPDiff on the flat torus}

\paragraph{\bfseries Geometry and equation.}
Let $M=\mathbb T^d$ and let
\[
A_s=(1-\Delta_{\mathrm{geom}})^s
\]
act componentwise on vector fields
\[
u\colon\mathbb T^d\longrightarrow\mathbb R^d,
\]
where $\Delta_{\mathrm{geom}}$ is the nonpositive flat Laplacian. With
\[
m=A_su,
\]
the EPDiff equation reads
\[
\partial_t m
+
u\cdot\nabla m
+
(\nabla u)^\top m
+
m(\nabla\cdot u)
=
0.
\]
Here $m$ denotes the one-form component of the corresponding one-form
density. The Hamiltonian is
\[
H(u)
=
\frac12
\int_{\mathbb T^d}
u\cdot A_su\,\mathrm dx.
\]

\paragraph{\bfseries Translation constraints.}
The torus translation group $K=\mathbb T^d$ preserves $A_s$ and yields
the conserved linear momenta
\[
\Phi_{e_j}(u)
=
\int_{\mathbb T^d}m_j(x)\,\mathrm dx
=
\int_{\mathbb T^d}u_j(x)\,\mathrm dx,
\qquad
j=1,\ldots,d.
\]
The last equality follows because the zero Fourier mode of $A_s$ has
eigenvalue one.

Let $\mathcal A_b$ consist of bounded cylindrical functionals of $u$,
and let $\mathcal L$ contain $H$ and the translation constraints. For
\[
\lambda=(\lambda_1,\ldots,\lambda_d)\in\mathbb R^d,
\]
set
\[
F_{\beta,\lambda}
=
-\beta H-\sum_{j=1}^{d}\lambda_j\Phi_{e_j}.
\]
Whenever $F_{\beta,\lambda}$ belongs to
$\mathcal E(\mathsf m_0)$ and satisfies the
translation-compatibility and separation hypotheses of
Theorem~\ref{thm:existence}, the equilibrium mean is
\[
\boxed{
\mathsf n_{\beta,\lambda}^*(f)
=
\frac{
\mathsf m_0
\left(
f e^{-\beta H-\sum_{j=1}^{d}\lambda_j\Phi_{e_j}}
\right)
}{
\mathsf m_0
\left(
e^{-\beta H-\sum_{j=1}^{d}\lambda_j\Phi_{e_j}}
\right)
},
\qquad
f\in\mathcal L.
}
\]

At finite Gaussian cut-off, the partition functional factorizes over
independent Fourier modes because the quadratic Hamiltonian is diagonal
in Fourier variables. In the infinite-dimensional limit, convergence
or renormalization must be verified. Under local exponential
regularity, strict convexity on a finite-dimensional constraint
subspace follows when the corresponding covariance matrix is positive
definite.

\subsection{Example E: two-dimensional Euler and vorticity constraints}

\paragraph{\bfseries Geometry and vorticity form.}
Let $M=\mathbb T^2$, restrict to divergence-free vector fields of zero
mean, and use the $L^2$ metric on the volume-preserving diffeomorphism
group. The Euler equation is
\[
\partial_t\omega+u\cdot\nabla\omega=0,
\qquad
u=\nabla^\perp\psi,
\qquad
-\Delta\psi=\omega.
\]
The kinetic energy is
\[
H(u)
=
\frac12
\int_{\mathbb T^2}|u|^2\,\mathrm dx
=
\frac12
\int_{\mathbb T^2}\omega\psi\,\mathrm dx.
\]

\paragraph{\bfseries Casimirs and equilibrium relations.}
For every sufficiently regular function $\phi$, the functional
\[
C_\phi(\omega)
=
\int_{\mathbb T^2}\phi(\omega)\,\mathrm dx
\]
is formally conserved by a sufficiently regular Euler flow. Fix
\[
\phi_1,\ldots,\phi_K
\]
and consider the constraints $C_{\phi_k}$. Let $\mathcal A_b$ consist
of bounded cylindrical vorticity observables, and let $\mathcal L$
contain the energy, the chosen Casimirs, and the required exponential
weights.

Let $\mathsf m_0$ be a Gaussian reference mean on a suitable
distribution space or a normalized spectral cut-off mean. The
variational functional is
\[
\mathcal G_{\beta,\alpha}(\mathsf n)
=
\mathcal H(\mathsf n\,\|\,\mathsf m_0)
+
\beta\,\mathsf n(H)
+
\sum_{k=1}^{K}
\alpha_k\mathsf n(C_{\phi_k}).
\]
Whenever
\[
F_{\beta,\alpha}
=
-\beta H-\sum_{k=1}^{K}\alpha_kC_{\phi_k}
\]
belongs to $\mathcal E(\mathsf m_0)$ and satisfies the
translation-compatibility and separation hypotheses of
Theorem~\ref{thm:existence}, the equilibrium mean is
\[
\boxed{
\mathsf n_{\beta,\alpha}^*(f)
=
\frac{
\mathsf m_0
\left(
f e^{-\beta H-\sum_{k=1}^{K}\alpha_kC_{\phi_k}}
\right)
}{
\mathsf m_0
\left(
e^{-\beta H-\sum_{k=1}^{K}\alpha_kC_{\phi_k}}
\right)
},
\qquad
f\in\mathcal L.
}
\]

If $\mathsf m_0$ is represented by a probability measure $\mu_0$, the
same formula may be written
\[
\mathrm d\nu_{\beta,\alpha}^*(\omega)
=
\frac{1}{Z(\beta,\alpha)}
\exp\left(
-\beta H(\omega)
-\sum_{k=1}^{K}\alpha_kC_{\phi_k}(\omega)
\right)
\,\mathrm d\mu_0(\omega).
\]

A formal critical-point calculation with finitely many differentiable
Casimir constraints gives a relation of the form
\[
\beta\psi
+
\sum_{k=1}^{K}\alpha_k\phi_k'(\omega)
=
0.
\]
When the resulting relation can be inverted, it yields
\[
\omega=F(\psi).
\]
In a mean-field interpretation, an analogous relation may hold for the
coarse-grained vorticity $\overline\omega$. Such relations are
reminiscent of Miller--Robert--Sommeria equilibria, but a complete
derivation of MRS theory additionally requires local probability
distributions of vorticity levels, mixing entropy, and the associated
large-deviation or variational analysis.

\subsection{Example F: semidirect current-coupled EPDiff}

\paragraph{\bfseries Set-up.}
Let $M$ be compact and let $\mathcal K$ be a compact Lie group with Lie
algebra $\mathfrak k$. Consider the semidirect product
\[
\operatorname{Diff}(M)
\ltimes
\operatorname{Map}(M,\mathcal K),
\]
where the action defining the semidirect product is understood to be
the natural action of diffeomorphisms on current fields.

Its Euler--Poincaré equations couple an EPDiff-type velocity $u$ with a
current variable
\[
\zeta(x)\in\mathfrak k.
\]
A typical quadratic Hamiltonian has the form
\[
H(u,\zeta)
=
\frac12
\int_M
\left(
\langle u,A_s^{u}u\rangle
+
\langle\zeta,A_s^{\zeta}\zeta\rangle
\right)
\,\operatorname{vol}_M,
\]
where $A_s^{u}$ and $A_s^{\zeta}$ are suitable positive elliptic inertia
operators on vector fields and current fields, respectively. The
associated momentum variables are
\[
m=A_s^{u}u,
\qquad
\rho=A_s^{\zeta}\zeta.
\]
They should not, in general, be identified with a single momentum map:
the precise momentum maps and conserved quantities depend on the
chosen action and its symmetries.

Let $\Phi_1,\ldots,\Phi_N$ be conserved observables arising from a
finite-dimensional symmetry subgroup of the coupled system. Let
$\mathcal A_b$ consist of bounded cylindrical functionals of
$(u,\zeta)$, and let $\mathcal L$ additionally contain $H$, the
observables $\Phi_a$, and the required exponential weights. Set
\[
F_{\beta,\lambda}
=
-\beta H-\sum_{a=1}^{N}\lambda^a\Phi_a.
\]
Whenever $F_{\beta,\lambda}$ belongs to
$\mathcal E(\mathsf m_0)$ and satisfies the
translation-compatibility and separation hypotheses of
Theorem~\ref{thm:existence}, the entropy principle gives
\[
\boxed{
\mathsf n_{\beta,\lambda}^*(f)
=
\frac{
\mathsf m_0
\left(
f e^{-\beta H-\sum_{a=1}^{N}\lambda^a\Phi_a}
\right)
}{
\mathsf m_0
\left(
e^{-\beta H-\sum_{a=1}^{N}\lambda^a\Phi_a}
\right)
},
\qquad
f\in\mathcal L.
}
\]
This shows that the conditional equilibrium construction extends to
semidirect products, provided the analytic and dynamical hypotheses of
the abstract theory are verified for the coupled system.

\subsection{Summary of the examples}

The examples illustrate complementary aspects of the normalized-mean
Souriau framework. For current groups, the reduced momentum is a field
of Lie-algebra-valued currents, and linear observables are obtained by
pairing these currents with test fields. Such observables define
stationary constraints only when they arise from symmetries or are
otherwise conserved. For diffeomorphism groups, EPDiff and
Camassa--Holm provide geometric PDEs whose linear momenta arise from
compact symmetry groups. For two-dimensional Euler, energy and Casimir
constraints lead formally to equilibrium relations related to those of
Miller--Robert--Sommeria theory.

In every case, the candidate equilibrium is obtained by exponential
reweighting of a reference mean by the Hamiltonian and the relevant
symmetry or Casimir constraints. The abstract theory applies once
exponential admissibility, translation compatibility, separation,
local exponential regularity, and the required dynamical invariance
properties have been verified in the model under consideration.

\section{Full proofs of the main statements}
\label{sec:full-proofs}

We collect here the proofs of the statements used in the previous
sections. Throughout this section, $\mathcal L$ denotes the unital vector
lattice algebra introduced in Section~1, and $\mathsf m_0$ is a normalized
mean on $\mathcal L$. The Hamiltonian $H$ and the finite family of
constraint observables
\[
\Phi_1,\ldots,\Phi_N
\]
belong to $\mathcal L$ and may be unbounded.

For
\[
(\beta,\lambda)\in(0,\infty)\times\mathbb R^N,
\]
we write
\[
F_{\beta,\lambda}
=
-\beta H-\sum_{a=1}^{N}\lambda^a\Phi_a.
\]

We recall the standing hypotheses:
\begin{enumerate}[label=(A\arabic*)]
\item \textbf{Exponential admissibility and local regularity.}
For every admissible parameter $(\beta,\lambda)\in\mathcal D$,
\[
F_{\beta,\lambda}\in\mathcal E(\mathsf m_0),
\]
and the local exponential regularity condition of
Definition~\ref{def:local-exponential-regularity} holds.

\item \textbf{Coercivity.}
There exists a proper functional
\[
V\colon\mathcal M\longrightarrow[0,\infty)
\]
and constants $c_1,c_2>0$ such that
\[
H(m)\geq c_1V(m)-c_2,
\qquad
m\in\mathcal M,
\]
and the sublevel sets of $\mathcal G_{\beta,\lambda}$ are compact for
the topology of pointwise convergence on $\mathcal L$.

\item \textbf{Covariance non-degeneracy.}
For every admissible parameter $\theta=(\beta,\lambda)$ and every
parameter direction
\[
\dot\theta=(\dot\beta,\dot\lambda),
\]
the covariance form satisfies
\[
\operatorname{Var}_{\mathsf n_\theta^*}
\left(
\dot\beta H+\sum_{a=1}^{N}\dot\lambda^a\Phi_a
\right)>0
\]
unless $\dot\theta$ is thermodynamically null at $\theta$.

\item \textbf{Compatibility under exponential tilting.}
For every admissible thermodynamic parameter $(\beta,\lambda)$, the
potential
\[
F_{\beta,\lambda}
=
-\beta H-\sum_{a=1}^{N}\lambda^a\Phi_a
\]
is translation compatible with $\mathsf m_0$ in the sense of
Definition~\ref{def:translation-compatibility}.
\end{enumerate}

Recall that
\[
\mathsf S(\mathcal L)
=
\left\{
\mathsf n\colon\mathcal L\to\mathbb R
\;\middle|\;
\mathsf n\text{ is linear},
\quad
\mathsf n(f)\geq 0\ \text{for }f\geq 0,
\quad
\mathsf n(1)=1
\right\}.
\]

\subsection{Proof of Proposition~\ref{prop:entropy-properties}}

Let $\mathsf m_0\in\mathsf S(\mathcal L)$ and define
\[
\Lambda_{\mathsf m_0}(f)
=
\log\mathsf m_0(e^f),
\qquad
f\in\mathcal E(\mathsf m_0).
\]
The relative entropy is
\[
\mathcal H(\mathsf n\,\|\,\mathsf m_0)
=
\sup_{f\in\mathcal E(\mathsf m_0)}
\left\{
\mathsf n(f)-\Lambda_{\mathsf m_0}(f)
\right\}.
\]

\medskip
\noindent
\textit{Non-negativity.}
Taking $f=0$ gives
\[
\mathcal H(\mathsf n\,\|\,\mathsf m_0)
\geq
\mathsf n(0)-\log\mathsf m_0(1)
=
0.
\]

\medskip
\noindent
\textit{Separation.}
Assume that
\[
\mathcal H(\mathsf n\,\|\,\mathsf m_0)=0.
\]
Then, for every admissible $f$,
\[
\mathsf n(f)\leq\log\mathsf m_0(e^f).
\]
Let $h\in\mathcal E(\mathsf m_0)$ be such that $th$ is admissible for
all sufficiently small $|t|$. Applying the preceding inequality to
$th$, with $t>0$, gives
\[
\mathsf n(h)
\leq
\frac{1}{t}\log\mathsf m_0(e^{th}).
\]
Local exponential regularity at $t=0$ gives
\[
\left.
\frac{\mathrm d}{\mathrm dt}
\log\mathsf m_0(e^{th})
\right|_{t=0}
=
\mathsf m_0(h).
\]
Consequently, letting $t\downarrow0$,
\[
\mathsf n(h)\leq\mathsf m_0(h).
\]
Applying the same argument to $-h$ gives
\[
\mathsf n(h)\geq\mathsf m_0(h).
\]
Therefore
\[
\mathsf n(h)=\mathsf m_0(h)
\]
for every admissible $h$. Since the admissible potentials separate
normalized means on $\mathcal L$, we obtain
\[
\mathsf n=\mathsf m_0.
\]

Conversely, if $\mathsf n=\mathsf m_0$, Jensen's inequality for positive
normalized linear functionals gives
\[
\mathsf m_0(f)
\leq
\log\mathsf m_0(e^f).
\]
Therefore
\[
\mathsf m_0(f)-\log\mathsf m_0(e^f)\leq 0
\]
for every admissible $f$. Taking the supremum and using the test
function $f=0$ gives
\[
\mathcal H(\mathsf m_0\,\|\,\mathsf m_0)=0.
\]

\medskip
\noindent
\textit{Convexity and lower semicontinuity.}
For fixed $\mathsf m_0$, the map
\[
\mathsf n
\longmapsto
\mathsf n(f)-\log\mathsf m_0(e^f)
\]
is affine and continuous for the topology of pointwise convergence on
$\mathcal L$. Since $\mathcal H(\mathsf n\,\|\,\mathsf m_0)$ is the
supremum of these affine continuous functions, it is convex and lower
semicontinuous for that topology.

This proves Proposition~\ref{prop:entropy-properties}.

\subsection{Change of reference under exponential tilting}
\label{subsec:change-reference-proof}

The proof of uniqueness requires the following change-of-reference
identity.

\begin{Proposition}[Change-of-reference identity]
\label{prop:change-reference}
Let $F\in\mathcal E(\mathsf m_0)$ and define
\[
\mathsf m_F(f)
=
\frac{\mathsf m_0(fe^F)}{\mathsf m_0(e^F)}.
\]
Assume that the admissible classes are compatible under exponential
tilting:
\[
g\in\mathcal E(\mathsf m_F)
\quad\Longleftrightarrow\quad
g+F\in\mathcal E(\mathsf m_0).
\]
Then, for every $\mathsf n\in\mathsf S(\mathcal L)$ for which the
quantities below are defined,
\[
\mathcal H(\mathsf n\,\|\,\mathsf m_F)
=
\mathcal H(\mathsf n\,\|\,\mathsf m_0)
-
\mathsf n(F)
+
\log\mathsf m_0(e^F).
\]
\end{Proposition}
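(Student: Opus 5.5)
The plan is to repeat the argument of Theorem~\ref{thm:change-reference}, now under the pointwise compatibility hypothesis $g\in\mathcal E(\mathsf m_F)\Leftrightarrow g+F\in\mathcal E(\mathsf m_0)$. This hypothesis is exactly the statement that $\tau_F(g)=g+F$ is a bijection from $\mathcal E(\mathsf m_F)$ onto $\mathcal E(\mathsf m_0)$, with inverse $h\mapsto h-F$. Here $h-F\in\mathcal L$, since $\mathcal L$ is a vector space containing $F$. I will first record this reformulation explicitly, so the result is seen to be the same translation compatibility as in Definition~\ref{def:translation-compatibility}.

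The first step is to compute $\log\mathsf m_F(e^g)$ for a fixed $g\in\mathcal E(\mathsf m_F)$. Since $F\in\mathcal E(\mathsf m_0)$, Definition~\ref{def:admissible-potential} gives $e^ge^F\in\mathcal L$. Hence
\[
\mathsf m_F(e^g)=\frac{\mathsf m_0(e^{g+F})}{\mathsf m_0(e^F)}.
\]
Compatibility gives $g+F\in\mathcal E(\mathsf m_0)$, so the numerator lies in $(0,\infty)$. Both logarithms are therefore finite, and
\[
\log\mathsf m_F(e^g)=\log\mathsf m_0(e^{g+F})-\log\mathsf m_0(e^F).
\]

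The second step substitutes this into the definition of $\mathcal H(\mathsf n\,\|\,\mathsf m_F)$ and reindexes by $h=g+F$. Linearity of $\mathsf n$ gives $\mathsf n(g)=\mathsf n(h)-\mathsf n(F)$, with $\mathsf n(F)$ finite because $F\in\mathcal L$. The supremum over $g\in\mathcal E(\mathsf m_F)$ thus becomes a supremum over $h\in\mathcal E(\mathsf m_0)$ of
\[
\mathsf n(h)-\log\mathsf m_0(e^h)-\mathsf n(F)+\log\mathsf m_0(e^F).
\]
Pulling out the finite constant $-\mathsf n(F)+\log\mathsf m_0(e^F)$ yields the identity. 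This holds in $(-\infty,+\infty]$: both sides are $+\infty$ simultaneously, since adding a finite constant commutes with taking a supremum.

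I expect the main obstacle to be bookkeeping of domains rather than any estimate. One must check that $\mathsf m_F$ is itself a normalized mean on $\mathcal L$, so that $\mathcal E(\mathsf m_F)$ makes sense; this follows from the remark after Definition~\ref{def:exponential-tilt}. One must also check that both suprema are taken over nonempty sets, which holds because $0\in\mathcal E(\mathsf m_F)$ corresponds to $F\in\mathcal E(\mathsf m_0)$. I will state this explicitly and note that no faithfulness or topology is needed.
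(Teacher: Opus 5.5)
Your proposal is correct and follows essentially the same route as the paper's proof. Both write $\mathsf m_F(e^g)=\mathsf m_0(e^{g+F})/\mathsf m_0(e^F)$, reindex the supremum by $h=g+F$ using the compatibility bijection, and pull out the finite constant $-\mathsf n(F)+\log\mathsf m_0(e^F)$. Your extra bookkeeping makes explicit what the paper leaves implicit: that $h-F\in\mathcal L$ gives surjectivity, that the logarithms are finite, and that the identity holds in $(-\infty,+\infty]$.
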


\begin{proof}
For every $g\in\mathcal E(\mathsf m_F)$,
\[
\mathsf m_F(e^g)
=
\frac{\mathsf m_0(e^{g+F})}{\mathsf m_0(e^F)}.
\]
Therefore
\begin{align*}
\mathcal H(\mathsf n\,\|\,\mathsf m_F)
&=
\sup_{g\in\mathcal E(\mathsf m_F)}
\left\{
\mathsf n(g)-\log\mathsf m_F(e^g)
\right\}
\\
&=
\sup_{g\in\mathcal E(\mathsf m_F)}
\left\{
\mathsf n(g)
-
\log\mathsf m_0(e^{g+F})
+
\log\mathsf m_0(e^F)
\right\}.
\end{align*}
Set
\[
h=g+F.
\]
By the compatibility assumption, this change of variables identifies
$\mathcal E(\mathsf m_F)+F$ with $\mathcal E(\mathsf m_0)$. Hence
\begin{align*}
\mathcal H(\mathsf n\,\|\,\mathsf m_F)
&=
\sup_{h\in\mathcal E(\mathsf m_0)}
\left\{
\mathsf n(h-F)
-
\log\mathsf m_0(e^h)
+
\log\mathsf m_0(e^F)
\right\}
\\
&=
\mathcal H(\mathsf n\,\|\,\mathsf m_0)
-
\mathsf n(F)
+
\log\mathsf m_0(e^F).
\end{align*}
\end{proof}

\subsection{Proof of Theorem~\ref{thm:partition-positive}}

By definition,
\[
Z(\beta,\lambda)
=
\mathsf m_0(e^{F_{\beta,\lambda}}).
\]
Finiteness follows from exponential admissibility.

Strict positivity is not a formal consequence of positivity alone in
the cut-off situation: a limit of strictly positive normalized
quotients may vanish. It must therefore follow either from faithfulness
of $\mathsf m_0$ or from a uniform lower bound for the truncated
partition functionals.

If $\mathsf m_0$ is faithful, then
\[
e^{F_{\beta,\lambda}}>0
\]
pointwise and is not identically zero. Hence
\[
\mathsf m_0(e^{F_{\beta,\lambda}})>0.
\]

If $\mathsf m_0$ is obtained from cut-offs and
\[
\liminf_{K\to\infty}Z_K(\beta,\lambda)>0,
\]
then
\[
Z(\beta,\lambda)
=
\lim_{K\to\infty}Z_K(\beta,\lambda)>0.
\]
Thus
\[
0<Z(\beta,\lambda)<\infty.
\]

Equivalently, strict positivity and finiteness may be included directly
in the definition of the admissible domain
\[
\mathcal D
=
\left\{
(\beta,\lambda)
\;\middle|\;
F_{\beta,\lambda}\in\mathcal E(\mathsf m_0)
\right\}.
\]

\subsection{Proof of Theorem~\ref{thm:existence}}

Fix $(\beta,\lambda)\in\mathcal D$ and set
\[
F_{\beta,\lambda}
=
-\beta H-\sum_{a=1}^{N}\lambda^a\Phi_a.
\]
Then
\[
Z(\beta,\lambda)
=
\mathsf m_0(e^{F_{\beta,\lambda}})
\]
satisfies
\[
0<Z(\beta,\lambda)<\infty.
\]

The variational functional is
\[
\mathcal G_{\beta,\lambda}(\mathsf n)
=
\mathcal H(\mathsf n\,\|\,\mathsf m_0)
-
\mathsf n(F_{\beta,\lambda}),
\]
since
\[
-\mathsf n(F_{\beta,\lambda})
=
\beta\,\mathsf n(H)
+
\sum_{a=1}^{N}\lambda^a\mathsf n(\Phi_a).
\]

\medskip
\noindent
\textit{Step 1: Construction of the candidate.}
Define
\[
\mathsf n_{\beta,\lambda}^*(f)
=
\frac{
\mathsf m_0(fe^{F_{\beta,\lambda}})
}{
\mathsf m_0(e^{F_{\beta,\lambda}})
},
\qquad
f\in\mathcal L.
\]
Linearity and positivity follow from those of $\mathsf m_0$. Moreover,
\[
\mathsf n_{\beta,\lambda}^*(1)=1.
\]
Thus
\[
\mathsf n_{\beta,\lambda}^*\in\mathsf S(\mathcal L).
\]

\medskip
\noindent
\textit{Step 2: Fundamental inequality.}
By the definition of relative entropy, for every
$\mathsf n\in\mathsf S(\mathcal L)$,
\[
\mathcal H(\mathsf n\,\|\,\mathsf m_0)
\geq
\mathsf n(F_{\beta,\lambda})
-
\log\mathsf m_0(e^{F_{\beta,\lambda}}).
\]
Hence
\[
\mathcal G_{\beta,\lambda}(\mathsf n)
\geq
-\log Z(\beta,\lambda).
\]

\medskip
\noindent
\textit{Step 3: Equality and uniqueness.}
Proposition~\ref{prop:change-reference}, applied to
\[
F=F_{\beta,\lambda},
\qquad
\mathsf m_F=\mathsf n_{\beta,\lambda}^*,
\]
gives
\[
\mathcal H(
\mathsf n\,\|\,\mathsf n_{\beta,\lambda}^*
)
=
\mathcal G_{\beta,\lambda}(\mathsf n)
+
\log Z(\beta,\lambda).
\]
Taking
\[
\mathsf n=\mathsf n_{\beta,\lambda}^*
\]
and using
\[
\mathcal H(
\mathsf n_{\beta,\lambda}^*
\,\|\,
\mathsf n_{\beta,\lambda}^*
)=0,
\]
we obtain
\[
\mathcal G_{\beta,\lambda}
(\mathsf n_{\beta,\lambda}^*)
=
-\log Z(\beta,\lambda).
\]
Thus $\mathsf n_{\beta,\lambda}^*$ is a minimizer.

If $\mathsf n$ is any other minimizer, then
\[
\mathcal H(
\mathsf n\,\|\,\mathsf n_{\beta,\lambda}^*
)=0.
\]
By Proposition~\ref{prop:entropy-properties},
\[
\mathsf n=\mathsf n_{\beta,\lambda}^*.
\]
Therefore the minimizer is unique.

\medskip
\noindent
\textit{Step 4: Smoothness and covariance Hessian.}
Let
\[
\psi(\beta,\lambda)
=
\log Z(\beta,\lambda).
\]
For a direction
\[
(\dot\beta,\dot\lambda)\in
\mathbb R\times\mathbb R^N,
\]
define
\[
\mathcal O_{\dot\beta,\dot\lambda}
=
\dot\beta H+
\sum_{a=1}^{N}\dot\lambda^a\Phi_a.
\]
Local exponential regularity and
Proposition~\ref{prop:partition-smoothness} give
\[
D\psi(\beta,\lambda)
[\dot\beta,\dot\lambda]
=
-
\mathsf n_{\beta,\lambda}^*
\left(
\mathcal O_{\dot\beta,\dot\lambda}
\right).
\]
Equivalently,
\[
\partial_\beta\psi(\beta,\lambda)
=
-\mathsf n_{\beta,\lambda}^*(H),
\qquad
\partial_{\lambda^a}\psi(\beta,\lambda)
=
-\mathsf n_{\beta,\lambda}^*(\Phi_a).
\]

For two independent directions
\[
(\dot\beta_1,\dot\lambda_1),
\qquad
(\dot\beta_2,\dot\lambda_2),
\]
one obtains
\[
D^2\psi(\beta,\lambda)
\big[
(\dot\beta_1,\dot\lambda_1),
(\dot\beta_2,\dot\lambda_2)
\big]
=
\operatorname{Cov}_{\mathsf n_{\beta,\lambda}^*}
\left(
\mathcal O_{\dot\beta_1,\dot\lambda_1},
\mathcal O_{\dot\beta_2,\dot\lambda_2}
\right).
\]
In particular,
\[
D^2\psi(\beta,\lambda)
[
(\dot\beta,\dot\lambda),
(\dot\beta,\dot\lambda)
]
=
\operatorname{Var}_{\mathsf n_{\beta,\lambda}^*}
\left(
\mathcal O_{\dot\beta,\dot\lambda}
\right)
\geq 0.
\]
Hence $\psi$ is convex. Under the covariance non-degeneracy hypothesis,
its Hessian is positive definite after quotienting
\[
T_{(\beta,\lambda)}\mathcal D
\]
by its thermodynamically null subspace.

Higher derivatives are the corresponding joint cumulants of
\[
H,\Phi_1,\ldots,\Phi_N.
\]

\medskip
\noindent
\textit{Step 5: Legendre--Fenchel duality.}
Define
\[
E(\beta,\lambda)
=
\mathsf n_{\beta,\lambda}^*(H),
\qquad
c_a(\beta,\lambda)
=
\mathsf n_{\beta,\lambda}^*(\Phi_a).
\]
Then
\[
\nabla\psi(\beta,\lambda)
=
\left(
-E(\beta,\lambda),
-c_1(\beta,\lambda),
\ldots,
-c_N(\beta,\lambda)
\right).
\]

The thermodynamic entropy is
\[
\mathcal S(E,c)
=
\inf_{(\beta,\lambda)\in\mathcal D}
\left\{
\psi(\beta,\lambda)
+
\beta E
+
\sum_{a=1}^{N}\lambda^ac_a
\right\}.
\]
Since $\mathcal S$ is an infimum of affine functions of $(E,c)$, it is
concave on its effective domain.

If
\[
(E,c)
=
\left(
E(\beta,\lambda),
c(\beta,\lambda)
\right),
\]
then convexity of $\psi$ shows that $(\beta,\lambda)$ realizes the
infimum and
\[
\mathcal S(E,c)
=
\psi(\beta,\lambda)
+
\beta E
+
\sum_{a=1}^{N}\lambda^ac_a.
\]
Under strict convexity modulo thermodynamically null directions, the
gradient map is locally injective after quotienting each tangent space
by the kernel of the covariance form. A global quotient statement
requires the null spaces to define a fixed or regular integrable
distribution.

This completes the proof of Theorem~\ref{thm:existence}.

\subsection{Proof of Theorem~\ref{thm:geom-eq}}

Let
\[
\psi(\beta,\lambda)
=
\log
\mathsf m_0
\left(
e^{-\beta H-\langle\lambda,\mathbf J\rangle}
\right).
\]
Its convexity follows from the covariance formula, and its smoothness
follows from local exponential regularity.

Assume that $G$ acts on $\mathcal M$ by symplectomorphisms, that
$\mathsf m_0$ is $G$-invariant, that $H$ is $G$-invariant, and that the
moment map is equivariant:
\[
\mathbf J(g\cdot m)
=
\operatorname{Ad}_g^*\mathbf J(m).
\]
We use the convention
\[
\langle\operatorname{Ad}_g^*\mu,\xi\rangle
=
\langle\mu,\operatorname{Ad}_{g^{-1}}\xi\rangle,
\qquad
\mu\in\mathfrak g^*,
\quad
\xi\in\mathfrak g.
\]
For observables, write
\[
(g\cdot f)(m)
=
f(g^{-1}\cdot m).
\]
Then
\[
\mathsf n_{\beta,\lambda}^*(g\cdot f)
=
\mathsf n_{\beta,\operatorname{Ad}_{g^{-1}}\lambda}^*(f).
\]
Therefore the family of equilibrium means is equivariant in the
parameter $\lambda$.

In particular, if $\lambda$ is fixed by the adjoint action, or more
generally if one restricts to the stabilizer
\[
G_\lambda
=
\left\{
g\in G
\;\middle|\;
\operatorname{Ad}_{g^{-1}}\lambda=\lambda
\right\},
\]
then
\[
\mathsf n_{\beta,\lambda}^*(g\cdot f)
=
\mathsf n_{\beta,\lambda}^*(f).
\]
Thus the equilibrium mean is invariant under $G_\lambda$.

The Legendre correspondence between the intensive variables
$(\beta,\lambda)$ and the extensive variables
\[
(E,c)
=
\left(
\mathsf n_{\beta,\lambda}^*(H),
\mathsf n_{\beta,\lambda}^*(\mathbf J)
\right)
\]
is the correspondence generated by the convex potential $\psi$ and its
concave Legendre--Fenchel dual described in
Proposition~\ref{prop:legendre}. This proves
Theorem~\ref{thm:geom-eq}.

\subsection{Proof of Theorem~\ref{thm:stationarity}}

Assume that the Hamiltonian flow $\Phi_t^H$ is globally defined and
that
\[
\mathsf m_0(f\circ\Phi_t^H)
=
\mathsf m_0(f)
\]
for all admissible observables $f$ and all $t\in\mathbb R$. Assume also
that $\mathcal L$ is preserved by the flow. Since
\[
H\circ\Phi_t^H=H
\]
and
\[
\Phi_a\circ\Phi_t^H=\Phi_a,
\qquad
a=1,\ldots,N,
\]
we have
\[
F_{\beta,\lambda}\circ\Phi_t^H
=
F_{\beta,\lambda}.
\]
Therefore
\begin{align*}
\mathsf n_{\beta,\lambda}^*
(f\circ\Phi_t^H)
&=
\frac{
\mathsf m_0\left(
(f\circ\Phi_t^H)e^{F_{\beta,\lambda}}
\right)
}{
Z(\beta,\lambda)
}
\\
&=
\frac{
\mathsf m_0\left(
(fe^{F_{\beta,\lambda}})\circ\Phi_t^H
\right)
}{
Z(\beta,\lambda)
}
\\
&=
\frac{
\mathsf m_0(fe^{F_{\beta,\lambda}})
}{
Z(\beta,\lambda)
}
\\
&=
\mathsf n_{\beta,\lambda}^*(f).
\end{align*}
This proves stationarity.

\subsection{Proof of Theorem~\ref{thm:poisson-kms}}

Let $\mathcal P\subset\mathcal L$ be the Poisson algebra introduced in
Subsection~\ref{subsec:poisson-kms}. Assume that $\mathsf m_0$ is
Poisson invariant:
\[
\mathsf m_0(\{f,g\})=0
\]
for all admissible $f,g\in\mathcal P$.

Let
\[
F_{\beta,\lambda}
=
-\beta H-\sum_{a=1}^{N}\lambda^a\Phi_a.
\]
For admissible $f,g\in\mathcal P$, Poisson invariance gives
\[
\mathsf m_0
\left(
\{f,ge^{F_{\beta,\lambda}}\}
\right)
=
0.
\]
Using the Leibniz rule,
\[
\{f,ge^{F_{\beta,\lambda}}\}
=
\{f,g\}e^{F_{\beta,\lambda}}
+
g\{f,e^{F_{\beta,\lambda}}\}.
\]
The chain rule gives
\[
\{f,e^{F_{\beta,\lambda}}\}
=
e^{F_{\beta,\lambda}}
\{f,F_{\beta,\lambda}\}.
\]
Since
\[
\{f,F_{\beta,\lambda}\}
=
-\beta\{f,H\}
-
\sum_{a=1}^{N}\lambda^a\{f,\Phi_a\},
\]
we obtain
\begin{align*}
\mathsf m_0
\left(
\{f,g\}e^{F_{\beta,\lambda}}
\right)
&=
\beta\,
\mathsf m_0
\left(
g\{f,H\}e^{F_{\beta,\lambda}}
\right)
\\
&\quad+
\sum_{a=1}^{N}\lambda^a
\mathsf m_0
\left(
g\{f,\Phi_a\}e^{F_{\beta,\lambda}}
\right).
\end{align*}
Dividing by $Z(\beta,\lambda)$ gives
\[
\mathsf n_{\beta,\lambda}^*(\{f,g\})
=
\beta\,
\mathsf n_{\beta,\lambda}^*
\bigl(g\{f,H\}\bigr)
+
\sum_{a=1}^{N}\lambda^a
\mathsf n_{\beta,\lambda}^*
\bigl(g\{f,\Phi_a\}\bigr).
\]
This proves Theorem~\ref{thm:poisson-kms}.

```latex
\section*{Appendix: Gaussian and Fourier computations for the partition functional}

The purpose of this appendix is to illustrate, in explicit Fourier
models, the abstract partition functional
\[
Z(\beta,\lambda)
=
\mathsf m_0\left(
e^{-\beta H-\sum_a\lambda^a\Phi_a}
\right).
\]
All formulas below are first understood at the finite-dimensional
cut-off level. Infinite-dimensional expressions are interpreted either
relatively, by subtracting the value at the reference parameters, or
through an explicitly specified Fredholm determinant regularization.

\subsection*{A. 1D EPDiff / Camassa--Holm on \(\mathbb S^1\)}

\paragraph{Setup.}
Let $M=\mathbb S^1$ have length $2\pi$, and expand $u$ in a real
orthonormal Fourier basis. This avoids the double-counting issue
associated with complex coefficients and the constraint
\[
\widehat u_{-k}=\overline{\widehat u_k}.
\]
The notation $|k|\leq N$ below refers to the constant mode together with
the sine and cosine modes of frequencies $1,\ldots,N$, so that the
cut-off space has dimension $2N+1$.

For the Camassa--Holm $H^1$ metric,
\[
A=1-\alpha^2\partial_x^2,
\qquad
A_k=1+\alpha^2k^2,
\]
and
\[
H(u)
=
\frac12\sum_k A_k u_k^2.
\]
Let the reference Gaussian be centered with covariance
\[
C_k=A_k^{-1}.
\]
Equivalently, at the cut-off level, the modes $u_k$ are independent
Gaussian variables with variance $A_k^{-1}$.

The translation momentum is
\[
\Phi(u)
=
\int_{\mathbb S^1}u(x)\,\mathrm dx
=
2\pi u_0.
\]
Set
\[
\theta=2\pi\lambda.
\]
Then
\[
-\lambda\Phi(u)=-\theta u_0.
\]

\paragraph{Cut-off partition functional.}
For the modes of frequencies at most $N$, define
\[
Z_N(\beta,\lambda)
=
\mathbb E
\left[
\exp\left(
-\frac{\beta}{2}\sum_{|k|\leq N}A_k u_k^2
-\theta u_0
\right)
\right].
\]
Since the modes are independent,
\[
\log Z_N(\beta,\lambda)
=
-\frac12\sum_{|k|\leq N}\log(1+\beta)
+
\frac{\theta^2}{2(1+\beta)}.
\]
Thus
\[
\boxed{
\log Z_N(\beta,\lambda)
=
-\frac{2N+1}{2}\log(1+\beta)
+
\frac{(2\pi\lambda)^2}{2(1+\beta)}.
}
\]
The first term diverges as $N\to\infty$. The partition functional
relative to the value at $\lambda=0$ is therefore
\[
\log Z_{\mathrm{rel}}(\beta,\lambda)
=
\lim_{N\to\infty}
\left(
\log Z_N(\beta,\lambda)-\log Z_N(\beta,0)
\right)
=
\frac{(2\pi\lambda)^2}{2(1+\beta)}.
\]

\paragraph{Normalizability.}
The modified precision is $(1+\beta)A_k$, hence the Gaussian integral
is finite whenever
\[
1+\beta>0.
\]
In particular, the thermodynamic range $\beta>0$ is admissible. For a
general linear constraint
\[
\Phi_\ell(u)=\sum_k\ell_k u_k,
\]
one obtains
\[
\log Z_{\mathrm{rel}}(\beta,\ell)
=
\frac12
\sum_k
\frac{C_k|\ell_k|^2}{1+\beta},
\]
provided
\[
\sum_k C_k|\ell_k|^2<\infty.
\]
This is the finite-variance, or Cameron--Martin dual, admissibility
condition for the linear observable.

\subsection*{B. 2D Euler vorticity on \(\mathbb T^2\)}

\paragraph{Setup.}
Let
\[
\omega(x)
=
\sum_{k\in\mathbb Z^2\setminus\{0\}}
\omega_k e^{ik\cdot x}
\]
be a real mean-zero vorticity field, with
\[
\omega_{-k}=\overline{\omega_k}.
\]
In all products and sums below, the modes are counted in a real
orthonormal Fourier basis, or equivalently over an independent
half-lattice with the appropriate multiplicities. In particular, no
mode is counted twice.

The kinetic energy is
\[
H(\omega)
=
\frac12
\sum_{k\neq 0}
\frac{|\omega_k|^2}{|k|^2}.
\]
Let the reference Gaussian have modal covariance
\[
Q_k=\frac{1}{|k|^2}.
\]
At finite cut-off level, the independent real Fourier coordinates are
Gaussian with variance $Q_k$.

In two dimensions,
\[
\sum_{k\neq 0}Q_k
=
\sum_{k\neq 0}\frac{1}{|k|^2}
=
\infty.
\]
Consequently, this covariance does not define a Gaussian Radon measure
on the $L^2$ vorticity space. It may instead be realized on a suitable
negative Sobolev space, or treated through the finite-dimensional
cut-offs used below.

We include a quadratic enstrophy constraint
\[
C_2(\omega)
=
\frac12\sum_{k\neq 0}|\omega_k|^2
\]
with multiplier $\alpha$. The cut-off partition is
\[
Z_N(\beta,\alpha)
=
\mathbb E
\left[
\exp\left(
-\frac{\beta}{2}
\sum_{0<|k|\leq N}
\frac{|\omega_k|^2}{|k|^2}
-
\frac{\alpha}{2}
\sum_{0<|k|\leq N}
|\omega_k|^2
\right)
\right].
\]

\paragraph{Mode factorization.}
For a centered real Gaussian variable of variance $Q_k$,
\[
\mathbb E
\left(
e^{-\frac12t_k|\omega_k|^2}
\right)
=
(1+Q_kt_k)^{-1/2},
\]
where
\[
t_k=\frac{\beta}{|k|^2}+\alpha.
\]
Since $Q_k=|k|^{-2}$, one gets
\[
1+Q_kt_k
=
1+\frac{\beta}{|k|^4}
+\frac{\alpha}{|k|^2}.
\]
Hence, with each independent real mode counted once,
\[
\boxed{
\log Z_N(\beta,\alpha)
=
-\frac12
\sum_{0<|k|\leq N}
\log\left(
1+\frac{\beta}{|k|^4}
+\frac{\alpha}{|k|^2}
\right).
}
\]

\paragraph{Admissible domain.}
The finite-dimensional Gaussian integrals are finite precisely when
\[
1+\frac{\beta}{|k|^4}+\frac{\alpha}{|k|^2}>0
\qquad
\text{for all }0<|k|\leq N.
\]
A sufficient positivity condition, uniform in $N$, is
\[
\beta>0,
\qquad
\alpha\geq 0.
\]
More generally, one works on the domain where all modified precisions
remain positive.

For $\alpha\neq 0$, the series
\[
\sum_{k\neq 0}
\frac{\alpha}{|k|^2}
\]
is not summable in two dimensions. Thus positivity of the modal
precisions does not by itself imply convergence to a nonzero
infinite-dimensional partition functional. A relative normalization or
a $\det_2$-type renormalization is required for the enstrophy tilt.

\paragraph{Nonlinear Casimirs.}
For nonlinear Casimirs
\[
C_\phi(\omega)
=
\int_{\mathbb T^2}\phi(\omega)\,\mathrm dx,
\]
the partition functional is no longer mode-diagonal unless $\phi$ is
quadratic. It should therefore be treated either perturbatively, through
finite-dimensional cut-offs, or within the general normalized-mean
variational formalism. Quadratic Casimirs are the analytically closed
case because they preserve Gaussian factorization.

\subsection*{C. Remarks on renormalization and determinants}

The divergent constants appearing in infinite products may be treated
by one of the following procedures:
\begin{itemize}
\item \emph{relative normalization}, in which one considers a difference
\[
\log Z_{\mathrm{rel}}(\theta,\theta_0)
=
\log Z(\theta)-\log Z(\theta_0);
\]
\item \emph{Fredholm determinant regularization}, when the perturbation
of the covariance is trace class;
\item \emph{Carleman--Fredholm $\det_2$-regularization}, when the
perturbation is Hilbert--Schmidt but not trace class.
\end{itemize}
These procedures are not automatically equivalent: their agreement
depends on the chosen reference parameter and on the corresponding
counterterms. The renormalization convention must therefore be kept
fixed throughout a given model.

Only the resulting relative or renormalized free energy and its
derivatives enter the corresponding renormalized variational principle.

\subsection*{D. Summary of normalizability in Fourier--Gaussian examples}

\begin{itemize}
\item \textbf{Energy tilt.}
If the reference covariance is the inverse of the quadratic inertia,
then the energy tilt modifies each precision by the factor $1+\beta$.
Hence
\[
1+\beta>0
\]
is sufficient at every finite cut-off, and in particular all
$\beta>0$ are admissible at that level. Existence of an
infinite-dimensional partition functional may still require relative
normalization.

\item \textbf{Linear constraints.}
For a linear observable
\[
\Phi_\ell(u)=\sum_k\ell_k u_k,
\]
finiteness requires
\[
\sum_k C_k|\ell_k|^2<\infty.
\]
This is the finite-variance condition for the Gaussian linear tilt.

\item \textbf{Quadratic constraints.}
For quadratic constraints, positivity of the modified precision
operator is necessary for Gaussian integrability. In infinite
dimension, convergence of the unrenormalized determinant additionally
requires the corresponding covariance perturbation to be trace class.
If it is only Hilbert--Schmidt, a $\det_2$-renormalization may be used.
\end{itemize}

\subsection*{E. Gaussian partition with quadratic and linear tilt}

\begin{Proposition}[Gaussian quadratic-linear partition]
\label{prop:gaussian-quadratic-linear}
Let $\mathcal H$ be a real separable Hilbert space and let
\[
\mu_0=\mathcal N(0,C)
\]
be a centered Gaussian Radon measure on $\mathcal H$, where $C$ is a
positive, selfadjoint, injective trace-class operator. Let $T$ be a
bounded selfadjoint operator and set
\[
K=C^{1/2}TC^{1/2}.
\]
Assume that
\[
I+K\geq\delta I
\]
for some $\delta>0$. Then $K$ is trace class. For every
$h\in\mathcal H$, the integral
\[
Z
=
\int_{\mathcal H}
\exp\left(
-\frac12\langle u,Tu\rangle+\langle h,u\rangle
\right)
\,\mathrm d\mu_0(u)
\]
is finite and
\[
\boxed{
\log Z
=
-\frac12\log\det(I+K)
+
\frac12
\left\langle
h,\,
C^{1/2}(I+K)^{-1}C^{1/2}h
\right\rangle.
}
\]
The operator
\[
C_T
=
C^{1/2}(I+K)^{-1}C^{1/2}
\]
is the covariance of the tilted Gaussian measure. In the sense of
quadratic forms on the Cameron--Martin space,
\[
C_T=(C^{-1}+T)^{-1}.
\]
\end{Proposition}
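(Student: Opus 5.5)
The plan is to reduce everything to a diagonal computation in the eigenbasis of $K$, using the Karhunen--Lo\`eve representation of $\mu_0$.

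\emph{Trace class.} Since $C$ is trace class, $C^{1/2}$ is Hilbert--Schmidt. Hence $K=C^{1/2}TC^{1/2}$ is a product of a Hilbert--Schmidt operator, a bounded operator and a Hilbert--Schmidt operator, and is therefore trace class with $\|K\|_1\leq\|T\|\,\operatorname{tr}C$. It is also selfadjoint. Pick an orthonormal eigenbasis $(e_j)$ of $K$ with eigenvalues $\kappa_j$, so that $\sum|\kappa_j|<\infty$. The hypothesis $I+K\geq\delta I$ gives $1+\kappa_j\geq\delta$ for all $j$. Consequently $\det(I+K)=\prod_j(1+\kappa_j)$ converges to a strictly positive number, and $(I+K)^{-1}$ is bounded with norm at most $\delta^{-1}$.

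\emph{Diagonal representation.} Let $\xi_j$ be i.i.d.\ standard normal variables and set $u=\sum_j\xi_jC^{1/2}e_j$. The series converges in $L^2(\Omega;\mathcal H)$ because $\sum\|C^{1/2}e_j\|^2=\operatorname{tr}C$, and $u$ has law $\mathcal N(0,C)$, since its covariance is $C^{1/2}(\sum e_j\otimes e_j)C^{1/2}=C$. For truncations $u_n=\sum_{j\leq n}\xi_jC^{1/2}e_j$ we have
\[
\langle u_n,Tu_n\rangle=\sum_{i,j\leq n}\xi_i\xi_j\langle e_i,Ke_j\rangle=\sum_{j\leq n}\kappa_j\xi_j^2,
\qquad
\langle h,u_n\rangle=\sum_{j\leq n}\xi_j\eta_j,
\]
where $\eta_j=\langle C^{1/2}h,e_j\rangle$. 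Both quantities converge almost surely and in $L^1$ to $\langle u,Tu\rangle$ and $\langle h,u\rangle$; one uses $u_n\to u$ in $L^2$, the boundedness of $T$, and the fact that $\sum|\kappa_j|\xi_j^2$ has finite mean.

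\emph{Evaluation.} Independence and the one-dimensional identity
\[
\mathbb E\,e^{-\frac12\kappa\xi^2+\eta\xi}=(1+\kappa)^{-1/2}e^{\eta^2/(2(1+\kappa))},
\qquad 1+\kappa>0,
\]
give
\[
Z_n=\prod_{j\leq n}(1+\kappa_j)^{-1/2}\exp\Bigl(\sum_{j\leq n}\frac{\eta_j^2}{2(1+\kappa_j)}\Bigr).
\]
As $n\to\infty$ this converges to
\[
\det(I+K)^{-1/2}\exp\bigl(\tfrac12\langle C^{1/2}h,(I+K)^{-1}C^{1/2}h\rangle\bigr),
\]
which is the boxed formula.

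\emph{Passage to the limit.} The main obstacle is justifying $Z_n\to Z$, because the integrand is unbounded and $\kappa_j$ may be negative. I would argue by uniform integrability. Take $p>1$ close to $1$ such that $1+p\kappa_j\geq\delta/2$ for all $j$; this is possible because $\kappa_j\geq\delta-1$ and only finitely many $\kappa_j$ are negative with $|\kappa_j|$ large relative to $\kappa_j\to0$. The same product formula, with $(p\kappa_j,p\eta_j)$ in place of $(\kappa_j,\eta_j)$, bounds $\mathbb E[\text{integrand}_n^{\,p}]$ uniformly in $n$. Since the integrands converge almost surely, Vitali's theorem gives convergence of the expectations, and in particular $Z<\infty$.

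\emph{Tilted covariance.} Completing the square in each mode shows that under the tilted measure the variables $\xi_j$ are independent $\mathcal N\bigl(\eta_j/(1+\kappa_j),(1+\kappa_j)^{-1}\bigr)$. The tilted covariance is therefore $C^{1/2}(I+K)^{-1}C^{1/2}=C_T$. Finally, on the Cameron--Martin space $C^{1/2}\mathcal H$ one checks $(C^{-1}+T)C_T=I$ by writing $C^{-1}+T=C^{-1/2}(I+K)C^{-1/2}$ as quadratic forms, which identifies $C_T=(C^{-1}+T)^{-1}$.
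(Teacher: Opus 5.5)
Your proof is correct, but it takes a different route from the paper.

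\emph{The paper's route.} It completes the square in finite dimensions for a non-diagonal $T_N$. It then truncates with finite-rank projections $P_N$ commuting with $C$, so that $K_N=P_NKP_N\to K$ in trace norm, $\det(I+K_N)\to\det(I+K)$, and $(I+K_N)^{-1}\to(I+K)^{-1}$ strongly. It concludes by ``passing to the limit in the finite-dimensional formula''.

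\emph{Your route.} You expand $u=\sum_j\xi_jC^{1/2}e_j$ along the eigenbasis of $K$, which is legitimate for any orthonormal basis. Every truncation then becomes an exact product of independent one-dimensional Gaussian integrals. The operator-convergence lemmas reduce to convergence of an absolutely convergent product and a series.

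\emph{What yours buys.} It is more rigorous at the one delicate point. The paper only shows that the right-hand sides of the finite-dimensional formulas converge. It never justifies $Z_N\to Z$ or $Z<\infty$, and this is not automatic: for indefinite $T$ there is no obvious dominating function. Your uniform $L^p$ bound plus Vitali's theorem supplies exactly this. Your mode-by-mode description of the tilted law also proves the covariance assertion, which the paper states without proof.

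\emph{What the paper's buys.} Its truncation is independent of $T$. These are the $C$-adapted spectral cut-offs that define $\log Z_{\mathrm{rel}}$ in the subsequent normalized-mean proposition. Your $T$-adapted basis does not directly give that cut-off statement. Making the paper's version rigorous would require your uniform-integrability argument with $p>1$ close to $1$, using that $I+pK_N$ is bounded below uniformly in $N$.

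Two small repairs are needed. First, the choice of $p$ needs no finiteness argument: for $\kappa_j<0$ one has directly $1+p\kappa_j\geq 1+p(\delta-1)=\delta-(p-1)(1-\delta)$. Second, state that the uniform bound on $\prod_{j\leq n}(1+p\kappa_j)^{-1/2}$ follows from $\sum_j|\kappa_j|<\infty$ together with $1+p\kappa_j\geq\delta/2$.
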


\begin{proof}
In finite dimension, completion of the square gives
\[
\begin{aligned}
Z
&=
(\det C)^{-1/2}
\det(C^{-1}+T)^{-1/2}
\exp\left(
\frac12
\langle h,(C^{-1}+T)^{-1}h\rangle
\right)
\\
&=
\det(I+C^{1/2}TC^{1/2})^{-1/2}
\exp\left(
\frac12
\left\langle
h,
C^{1/2}(I+K)^{-1}C^{1/2}h
\right\rangle
\right).
\end{aligned}
\]

In infinite dimension, choose increasing finite-rank orthogonal
projections $P_N\to I$ strongly which commute with $C$, and set
\[
C_N=P_NCP_N,
\qquad
T_N=P_NTP_N,
\qquad
h_N=P_Nh,
\]
and
\[
K_N=C_N^{1/2}T_NC_N^{1/2}.
\]
Since $C$ is trace class and $T$ is bounded,
\[
K_N\longrightarrow K
\]
in trace norm. Consequently,
\[
\det(I+K_N)\longrightarrow\det(I+K).
\]
Moreover, the strict positivity
\[
I+K\geq\delta I
\]
and the operator-norm convergence of $K_N$ imply that
$(I+K_N)^{-1}$ is uniformly bounded for all sufficiently large $N$ and
converges strongly to $(I+K)^{-1}$. Since
\[
C_N^{1/2}\longrightarrow C^{1/2}
\]
strongly and $h_N\to h$, the corresponding quadratic terms converge to
\[
\left\langle
h,
C^{1/2}(I+K)^{-1}C^{1/2}h
\right\rangle.
\]
Passing to the limit in the finite-dimensional formula proves the
result.
\end{proof}

\begin{rem}[Hilbert--Schmidt renormalization]
Suppose more generally that $K$ is Hilbert--Schmidt, that
\[
I+K\geq\delta I
\]
for some $\delta>0$, but that $K$ is not trace class. Then
$\det(I+K)$ is not defined, whereas the Carleman--Fredholm determinant
$\det_2(I+K)$ is well defined. The renormalized logarithmic partition
functional may be defined by
\[
\boxed{
\log Z_{\mathrm{ren}}
=
-\frac12\log\det\nolimits_2(I+K)
+
\frac12
\left\langle
h,
C^{1/2}(I+K)^{-1}C^{1/2}h
\right\rangle.
}
\]
This definition amounts to subtracting the divergent first-order trace
term at the finite-dimensional level. No quantity
$\operatorname{Tr}(K)$ is used when $K$ is not trace class.

If $K$ is trace class, then
\[
\det\nolimits_2(I+K)
=
\det(I+K)e^{-\operatorname{Tr}(K)},
\]
so that
\[
-\frac12\log\det(I+K)
=
-\frac12\log\det\nolimits_2(I+K)
-\frac12\operatorname{Tr}(K).
\]
Thus the ordinary and renormalized formulas differ by the explicit
first-order trace counterterm whenever both are defined.
\end{rem}

\subsection*{F. Gaussian quadratic-linear tilts under normalized means}

We now reinterpret the preceding computation in the normalized-mean
setting. Let $P_N$ be an increasing family of finite-rank orthogonal
projections such that
\[
P_N\to I
\]
strongly and $P_NC=CP_N$. Define
\[
C_N=P_NCP_N,
\qquad
T_N=P_NTP_N,
\qquad
h_N=P_Nh,
\]
and
\[
K_N=C_N^{1/2}T_NC_N^{1/2}.
\]
On $P_N\mathcal H$, let
\[
\mu_{0,N}=\mathcal N(0,C_N),
\]
and set
\[
Z_N(T,h)
=
\int_{P_N\mathcal H}
\exp\left(
-\frac12\langle u,T_Nu\rangle+\langle h_N,u\rangle
\right)
\,\mathrm d\mu_{0,N}(u).
\]
Since $\mu_{0,N}$ is normalized,
\[
Z_N(0,0)=1.
\]
The normalized-mean log-partition functional is defined by
\[
\log Z_{\mathrm{rel}}(T,h)
=
\lim_{N\to\infty}\log Z_N(T,h),
\]
whenever this limit exists and is independent of the admissible
exhaustion.

\begin{Proposition}[Relative Gaussian formula for normalized means]
\label{prop:normalized-mean-gauss}
Assume:
\begin{enumerate}
\item
\[
I+K\geq\delta I
\]
for some $\delta>0$;

\item
\[
K=C^{1/2}TC^{1/2}
\]
is trace class;

\item
\[
K_N\longrightarrow K
\]
in trace norm;

\item
\[
P_Nh\longrightarrow h
\]
in $\mathcal H$.
\end{enumerate}
Then
\[
\boxed{
\log Z_{\mathrm{rel}}(T,h)
=
-\frac12\log\det(I+K)
+
\frac12
\left\langle
h,\,
C^{1/2}(I+K)^{-1}C^{1/2}h
\right\rangle.
}
\]
The value is independent of the exhaustion among admissible families
satisfying the convergence assumptions above.
\end{Proposition}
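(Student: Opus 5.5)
The proof reduces the limit to finite-dimensional Gaussian integrals and then passes to the limit term by term, as in the proof of Proposition~\ref{prop:gaussian-quadratic-linear}.

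\emph{Step 1: the exact finite-dimensional formula.} Fix $N$. Since $P_N$ commutes with $C$, the restriction of $C_N$ to $P_N\mathcal H$ is positive, selfadjoint and injective, because $C$ is injective. Completing the square in the finite-dimensional Gaussian integral on $P_N\mathcal H$ gives
\[
\log Z_N(T,h)
=
-\frac12\log\det(I+K_N)
+
\frac12\left\langle h_N,\,C_N^{1/2}(I+K_N)^{-1}C_N^{1/2}h_N\right\rangle .
\]
This requires $I+K_N$ to be positive definite on $P_N\mathcal H$. That holds for all sufficiently large $N$: $\|K_N-K\|\leq\|K_N-K\|_1\to0$ by assumption~(3), so assumption~(1) gives $I+K_N\geq(\delta/2)I$ eventually. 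Only finitely many $N$ are excluded, which does not affect the limit.

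\emph{Step 2: the determinant term.} The Fredholm determinant is continuous for the trace norm. Assumption~(3) therefore gives $\det(I+K_N)\to\det(I+K)$. The limit is strictly positive, since $I+K\geq\delta I$ and $K$ is trace class, so all eigenvalues $1+\kappa_j$ satisfy $1+\kappa_j\geq\delta$. Hence the logarithms converge.

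\emph{Step 3: the quadratic term.} I expect this to be the main technical point, because $C_N^{1/2}$ converges only strongly.
\begin{itemize}
\item Write $C_N^{1/2}=P_NC^{1/2}P_N=C^{1/2}P_N$, using the commutation with $P_N$. This converges strongly to $C^{1/2}$ and is uniformly bounded.
\item The resolvents satisfy $\|(I+K_N)^{-1}\|\leq 2/\delta$ eventually.
\item The resolvent identity
\[
(I+K_N)^{-1}-(I+K)^{-1}=(I+K_N)^{-1}(K-K_N)(I+K)^{-1}
\]
gives convergence in operator norm.
\end{itemize}
Set $v_N=C^{1/2}P_Nh_N$. It converges to $C^{1/2}h$ in norm, since $h_N\to h$ by assumption~(4) and $C^{1/2}P_N\to C^{1/2}$ strongly with uniformly bounded norms. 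The quadratic term is $\langle v_N,(I+K_N)^{-1}v_N\rangle$. Norm convergence of $v_N$ together with operator-norm convergence of the uniformly bounded resolvents yields convergence to $\langle C^{1/2}h,(I+K)^{-1}C^{1/2}h\rangle$.

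\emph{Step 4: conclusion and independence of the exhaustion.} Adding Steps 2 and 3 proves the boxed formula. The right-hand side involves only $C$, $T$ and $h$, not the family $(P_N)$. Hence any two admissible exhaustions satisfying (1)--(4) give the same limit, which proves independence.

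If one wants to avoid assumption~(3) where possible, it actually follows from $P_N\to I$ strongly together with $C^{1/2}$ being Hilbert--Schmidt. Indeed, $K_N=P_NKP_N$, and compressions of a fixed trace-class operator by strongly convergent projections converge in trace norm. As stated, though, (3) is an assumption and can be used directly.
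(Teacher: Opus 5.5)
Your proposal is correct and follows the paper's own proof. Both arguments use the exact finite-dimensional completion-of-the-square formula, trace-norm continuity of the Fredholm determinant, and convergence of the quadratic term from uniformly bounded resolvents together with $C_N^{1/2}h_N\to C^{1/2}h$; you additionally make explicit the eventual positivity of $I+K_N$, the identity $C_N^{1/2}=C^{1/2}P_N$, and operator-norm (rather than merely strong) convergence of $(I+K_N)^{-1}$ via the resolvent identity.
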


\begin{proof}
For each $N$, the finite-dimensional formula gives
\[
\log Z_N(T,h)
=
-\frac12\log\det(I+K_N)
+
\frac12
\left\langle
h_N,\,
C_N^{1/2}(I+K_N)^{-1}C_N^{1/2}h_N
\right\rangle.
\]
Trace-norm convergence
\[
K_N\longrightarrow K
\]
gives
\[
\det(I+K_N)\longrightarrow\det(I+K).
\]
It also gives operator-norm convergence. Hence, using
\[
I+K\geq\delta I,
\]
the inverses $(I+K_N)^{-1}$ are uniformly bounded for all sufficiently
large $N$ and converge strongly to $(I+K)^{-1}$.

Since
\[
C_N^{1/2}\longrightarrow C^{1/2}
\]
strongly and $h_N\to h$, the quadratic terms converge to
\[
\left\langle
h,\,
C^{1/2}(I+K)^{-1}C^{1/2}h
\right\rangle.
\]
The limit is therefore the stated expression.
\end{proof}

\begin{rem}[Connection with the Donsker--Varadhan entropy]
For
\[
f(u)
=
-\frac12\langle u,Tu\rangle+\langle h,u\rangle,
\]
the previous proposition computes
\[
\Lambda_{\mathsf m_0}(f)
=
\log\mathsf m_0(e^f)
\]
through finite-dimensional normalized Gaussian cut-offs. Hence the
relative entropy
\[
\mathcal H(\mathsf n\,\|\,\mathsf m_0)
=
\sup_f
\left\{
\mathsf n(f)-\Lambda_{\mathsf m_0}(f)
\right\}
\]
recovers the same quadratic-linear exponential equilibria as in the
main text, whenever $f$ belongs to the corresponding exponential
domain.
\end{rem}

\vskip 12pt
```

\paragraph{\bf Data availability statement} No data is available for this work.

\vskip 12pt

\paragraph{\bf Conflict of interest statement} The author declares no conflict of interest.

\vskip 12pt

\paragraph{\bf Funding} No funding supported this work.

\vskip 12pt

\paragraph{\bf Acknowledgements} J.-P.M thanks the France 2030 framework programme Centre Henri Lebesgue ANR-11-LABX-0020-01 
for creating an attractive mathematical environment.

\vskip 12pt

\paragraph{\bf Author's Note on AI Assistance}
Portions of the text were developed with the assistance of a generative language model (OpenAI ChatGPT, based on the GPT-4 architecture). The AI was used to assist with drafting, editing, and standardizing the bibliography format. All mathematical content, structure, and theoretical constructions were provided, verified, and curated by the author. The author assumes full responsibility for the correctness, originality, and scholarly integrity of the final manuscript.

\end{document}